\pgfplotsset{plot coordinates/math parser=false}
\newtheorem{theorem}{Theorem}
\newtheorem{lemma}{Lemma}
\theoremstyle{definition}
\newtheorem{definition}{Definition}
\newtheorem{proposition}{Proposition}
\newtheorem{assumption}{Assumption}
\newtheorem{cor}{Corollary}
\theoremstyle{remark}
\newtheorem{remark}{Remark}
\definecolor{darkgreen}{rgb}{0.0, 0.5, 0.0}
\newcommand{\set}[1]{\left\lbrace#1\right\rbrace}
\newcommand{\Ind}[1]{\mathbbm{1}_{#1}}
\newcommand{\abs}[1]{\left\lvert#1\right\rvert}
\newcommand{\norm}[1]{\left\lVert#1\right\rVert}
\newcommand{\E}{\mathbb{E}}
\newcommand{\N}{\mathbb{N}}
\newcommand{\R}{\mathbb{R}}
\newcommand{\Z}{\mathbb{Z}}
\newcommand{\bx}{\mathbf{x}}
\newcommand{\cA}{\mathcal{A}}
\newcommand{\cB}{\mathcal{B}}
\newcommand{\cD}{\mathcal{D}}
\newcommand{\cF}{\mathcal{F}}
\newcommand{\cG}{\mathcal{G}}
\newcommand{\cH}{\mathcal{H}}
\newcommand{\cM}{\mathcal{M}}
\newcommand{\cV}{\mathcal{V}}
\newcommand{\cZ}{\mathcal{Z}}
\newcommand{\Prb}{\mathrm{Pr}}
\newcommand{\itr}{\mathrm{Iter}}
\newcommand{\iid}{\emph{i.i.d.}\ }
\newcommand{\CWL}{\textsc{Create-If-Late} }
\newcommand{\cwl}{\text{CIL}\xspace}
\newcommand{\DeCa}{\textsc{DecAFork}\xspace}
\newcommand{\ac}{\textsc{AC}\xspace}
\newcommand{\MP}{\textsc{MissingPerson}\xspace}
\newcommand{\ER}{\text{Erd\H{o}s--R\'enyi}\xspace}
\renewcommand{\ge}{\geqslant}
\renewcommand{\le}{\leqslant}
\title{Self-Creating Random Walks for  Decentralized Learning under Pac-Man Attacks}
\author{
Xingran Chen\IEEEauthorrefmark{1},~\IEEEmembership{Member,~IEEE}, 
\and Parimal Parag\IEEEauthorrefmark{2},~\IEEEmembership{Senior~Member,~IEEE},\\
\and Rohit Bhagat\IEEEauthorrefmark{3},~\IEEEmembership{Student~Member,~IEEE},~
\and Salim El Rouayheb\IEEEauthorrefmark{3},~\IEEEmembership{Senior~Member,~IEEE}
\IEEEcompsocitemizethanks 	 
{
\IEEEcompsocthanksitem  Xingran Chen is with Engineering Systems and Design Pillar, Singapore University of Technology and Design, 8 Somapah Road, Singapore 487372	(E-mail: 
xingranc@ieee.org).
\IEEEcompsocthanksitem  Rohit Bhagat, and Salim El Rouayheb are with Department of Electrical and Computer Engineering,  Rutgers University, Piscataway Township, NJ 08854, USA	(E-mail: \{rb1395, sye8\}@scarletmail.rutgers.edu).
\IEEEcompsocthanksitem Parimal Parag is with the Department of Electrical Communication
Engineering, Indian Institute of Science, Bangalore, Karnataka 560012, India (E-mail: parimal@iisc.ac.in).
}
}
\begin{document}
\maketitle
\begin{abstract}
Random walk (RW)-based algorithms have long been popular in distributed systems due to low overheads and scalability, with recent growing applications in decentralized learning. However, their reliance on local interactions makes them inherently vulnerable to malicious behavior. 
In this work, we investigate a termination-based  attack, termed the ``Pac-Man'' attack, in which a malicious node probabilistically terminates any RW that visits it. 
This stealthy behavior gradually eliminates active RWs from the network, effectively halting the learning process without triggering failure alarms. 
To counter this threat, we propose the \CWL (\cwl) algorithm, which is a fully decentralized mechanism that is resilient to termination-based Pac-Man attacks by enabling the self-creation of RWs and preventing RW extinction under such attacks.
Our theoretical analysis shows that the \cwl algorithm guarantees several desirable properties, such as (i) non-extinction of the RW population, (ii) almost sure boundedness of the RW population, and 
(iii) convergence of RW-based stochastic gradient descent even in the presence of Pac-Man with a quantifiable deviation from the true optimum. 
Moreover, the learning process experiences at most a linear time delay due to Pac-Man interruptions and RW regeneration.
Our extensive empirical results on both synthetic and public benchmark datasets validate our theoretical findings. 
\end{abstract}

\section{Introduction}\label{sec:Introduction}
Decentralized algorithms are becoming increasingly important in modern large-scale networked applications. 
These algorithms enable a network of nodes/agents, each with access only to local data and a limited local view of the connection graph, to collaborate in solving global computational tasks without relying on centralized coordination. 
Among the most widely studied approaches are random-walk (RW)-based algorithms \cite{lovasz1993random, levin2017markov} and gossip-based algorithms \cite{tsitsiklis2003distributed, pmlr-v119-koloskova20a}. 
RW-based algorithms operate by letting one or more tokens perform RWs over the graph; local updates are carried out only at the nodes currently holding tokens\footnote{Throughout this paper, we use the terms ``random walk'' and ``token'' interchangeably.}. 
Gossip-based algorithms, by contrast, rely on frequent broadcast communication in the local neighborhood of the nodes where they exchange model snapshots and update locally based on received messages.

While consensus-based methods are powerful, their repeated local broadcasts can lead to large communication overhead in large-scale systems \cite{tsitsiklis2003distributed, pmlr-v119-koloskova20a}. 
This motivates the study of RW-based algorithms, which offer a communication-efficient alternative for large networks due to their simplicity and scalability \cite{lovasz1993random, levin2017markov}. 
By propagating information through tokens that traverse the network, RW-based algorithms require only local updates, no global coordination, and low communication overhead. 
These features have led to their successful application across a wide range of domains \cite{page1999pagerank, fouss2007random, liu2016smartwalk, backstrom2011supervised, zhang2019heterogeneous}. 
A particularly compelling application is decentralized machine learning, where RWs are used to aggregate learning on data distributed across networked agents \cite{johansson2007simple, ayache2023walk}. While our focus in this article is on decentralized machine learning, the techniques we develop are broadly applicable to other domains that rely on RW-based methods.

Despite their advantages, RW learning algorithms are susceptible to failure and security threats \cite{10693599, 9714881}. In particular, they are vulnerable to a class of attacks that evade classical defenses against poisoning and Byzantine behavior \cite{10633251,selfdup} by executing locally correct computations while silently terminating random walks, ultimately halting the learning process. In this work, we consider a specific instance of such behavior, which we term the \emph{Pac-Man attack}, in which a malicious node probabilistically terminates any random walk that visits it. We refer to this adversary as \emph{Pac-Man}, evoking the arcade character known for devouring everything in its path.

A Pac-Man node (malicious node) \textit{mimics} honest behavior under standard fault-tolerance mechanisms \cite{schneider1990implementing, lamport1998part, lamport1982byzantine, elnozahy2002survey, chandra1996unreliable, 10.5555/320673}, such as retransmissions, replication, or timeout-based recovery, thereby making it difficult to detect or isolate via conventional means. 
In fact, it can terminate any RW that visits it,  effectively removing it from the network. Even when multiple redundant RWs are used to perform learning across the network, a single Pac-Man node can eventually lead to the extinction of all walks---effectively halting the learning process without triggering any explicit failure or alarm. In this work, we focus on the case of a single Pac-Man node to facilitate clarity of exposition and present the key insights. 
This restriction does not limit the generality of our results. 
Both the proposed algorithm and the theoretical framework can be extended to settings with multiple Pac-Man nodes, as illustrated in Fig.~\ref{fig:DeCacomplete}(a), and discussed in Remark~\ref{remark:MultiplePacMan} and Appendix~G.

\begin{figure}[t!]
 \centering
\includegraphics[width=0.35\textwidth]{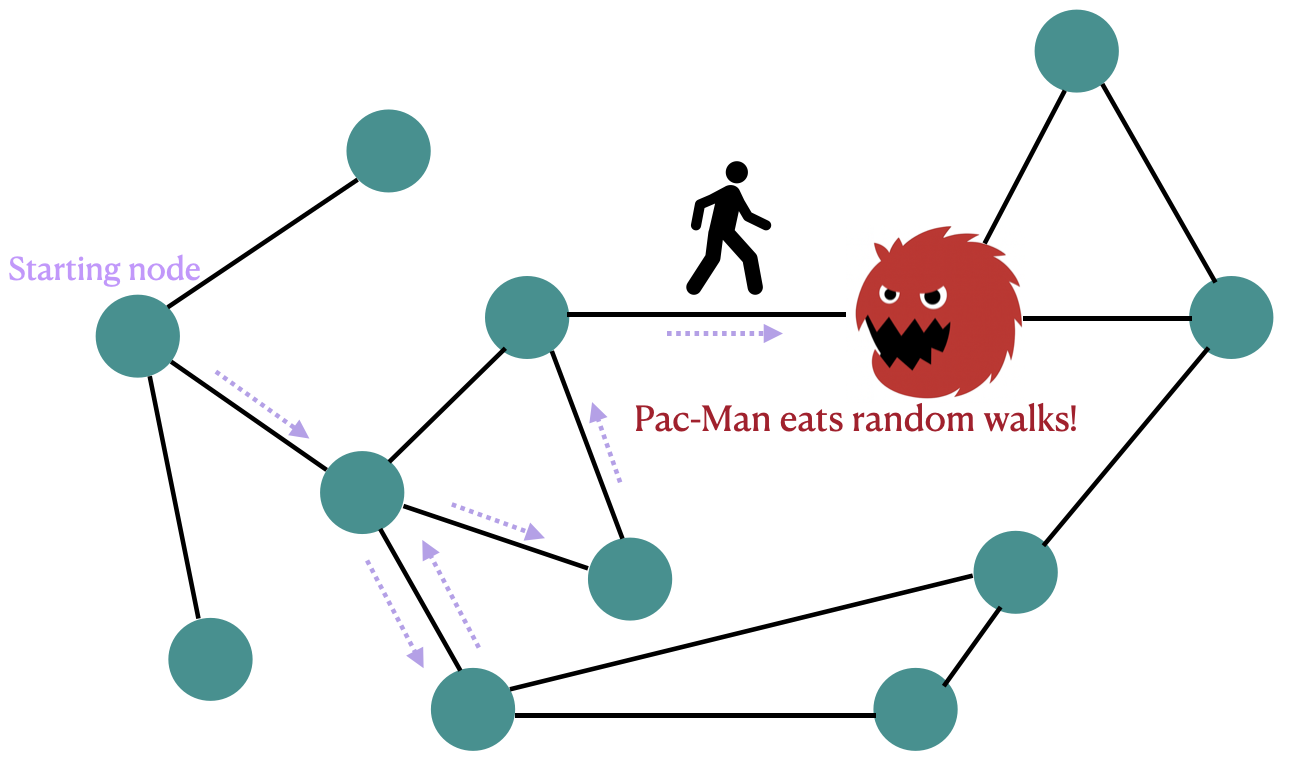}\vspace{2mm}
\caption{An illustration of the Pac-Man attack: A malicious node (shown in red) intercepts and  ``eats'' (terminates) any visiting RW with a positive probability. Despite behaving like a benign node to its neighbors, it prevents the random walk from continuing, leading to eventual extinction of all walks in the network.}
 \end{figure}

Motivated by this limitation, we propose a new decentralized self-regulation mechanism based on creation rather than duplication. 
Our algorithm, termed \CWL (\cwl), is designed to maintain long-term resilience against Pac-Man attacks without relying on system parameter estimation or global coordination (see Algorithm~\ref{alg:LC}). 
In \cwl, each benign node independently decides whether to create a new RW using only local timing information, namely the time elapsed since the last visit of any RW. If this interval exceeds a predetermined threshold, the node infers a possible RW termination and probabilistically creates a new RW by replicating the most recently visited one. This locally triggered mechanism enables resilient RW creation in a fully decentralized manner.

A novel aspect of our analysis lies in the fact that, although the RW population process in our work, bears resemblance to classical branching processes, such as the Galton-Watson process~\cite[Chapter~I]{harris1964branching}, the standard assumptions underlying these models do not hold in our setting. 
In particular, the RW creation rate is density-dependent and depends on the graph topology as well as the joint system state, including the locations of all RWs and the visitation histories of honest nodes. 
These features introduce intricate dependencies that violate the independence assumptions required by classical branching process theory, and as such standard tools and results do not directly apply.

\subsection{Contributions}
In the following, we summarize the main contributions of this article. 
\begin{compactenum}[(i)]
\item \textbf{Resilience to termination-based Pac-Man attacks through local decisions, without the need to estimate system parameters}. Unlike duplication-based methods such as \DeCa, \cwl guarantees long-term survivability of the RW population and prevents permanent extinction (see Fig.~\ref{fig:Boundedness} in Section~\ref{sec: simulations}). 
Consequently, computational tasks carried by RWs remain operational even in an  adversarial environment and do not depend on an accurate estimation of the number of alive RWs.
\item \textbf{Novel theoretical framework and analytical results in an adversarial setting}. 
We contribute to developing a novel and rigorous theoretical framework to analyze the evolution of the number of active RWs under \cwl, explicitly accounting for the strong interdependence among RWs. 
We establish that the RW population under \cwl is almost surely bounded (Theorem~\ref{thm:FiniteRWs}) and provide a theoretical regime for selecting the algorithm parameter such that the expected peak population stays below any prescribed level (see Theorem~\ref{thm:Peak}). 
Together, these results guarantee both the long-term stability and controllability of the RW population in the presence of adversarial attacks.
\item \textbf{Integration with decentralized learning and proof of convergence}. 
The \cwl mechanism can be seamlessly integrated into decentralized random walk stochastic gradient descent (RW-SGD) \cite{johansson2007simple, sun2018markov, mao2020walkman}. 
We prove that RW-SGD converges under the \cwl mechanism, even in the presence of an adversary (see Theorem~\ref{thm:Effectiveness}).  
We further characterize the bias induced by premature RW termination, showing how adversarial behavior skews the optimizer and leads to a bounded deviation from the true global optimum (Proposition~\ref{pro:Bounds}). 
Moreover, we establish that the inactive time introduced by the adversary is at most a constant fraction of the current time
(Proposition~\ref{pro:NumIter}), ensuring that the effective learning iterations remain at least linearly increasing in time.  
Numerical experiments on both synthetic and public benchmark datasets validate our theoretical results.
\end{compactenum}

\noindent\textbf{Notation:} 
We denote the set of first $N$ consecutive positive integers as $[N] \triangleq \set{1, \dots, N}$. 
For any finite set $\cV$, we denote the collection of probability measures on set $\cV$ by $\cM(\cV) \triangleq\set{\alpha \in [0,1]^\cV: \sum_{v \in \cV}\alpha_v = 1}$.

\subsection{Related Work}

The closest related works are RW-based decentralized optimization algorithms with self-duplication mechanisms in a fully decentralized manner \cite{selfdup, egger2024self, AC}. In \DeCa \cite{selfdup, egger2024self}, each node records RW visit times and estimates the number of active RWs. When the estimated population falls below a predefined threshold, duplication is triggered to replenish the RW population. Similarly, in \ac \cite{AC} each node records the time interval between consecutive RW visits and duplicates a RW when the observed interval exceeds a prescribed threshold. 

Both approaches aim to maintain a sufficient RW population using only local observations. However, the underlying setting and robustness objective are different. The original \DeCa framework is motivated by non-adversarial catastrophic events, rather than by an adversary that strategically terminates incoming RWs. In addition, the robustness of \DeCa and \ac is highly sensitive to parameter choices and the RW population may eventually collapse (see Fig. \ref{fig:DeCa} in Section \ref{sec: simulations} and Fig.~2(b) in \cite{AC}). That is, \DeCa and \ac do not remain consistently resilient under the Pac-Man attack. 

The Pac-Man attack shares similarities with other attacks in distributed communication and learning systems; however, it differs in both the entity it targets and its effect on information propagation. The closest analogy is a blackhole attack \cite{4448606,7007762,Edemacu2014PacketDA}, in which a malicious intermediate node drops packets intended for a particular destination without changing the underlying network topology. In contrast, an RW has no predetermined destination and propagates model updates through its continued movement across the network. Thus, a Pac-Man node terminates an entire persistent computational process, eliminating its future trajectory and all subsequent model updates it would have performed, rather than merely preventing a single packet from reaching its destination. Moreover, because the Pac-Man node remains in the graph and may otherwise behave normally, it can be difficult to detect using only local observations.

The Pac-Man attack also differs from random and localized node-removal attacks \cite{PALMIERI2025108250,chujyo2026impact,berezin2015localized}, which alter the communication topology and may fragment the network. Pac-Man nodes remain active in the graph and instead disrupt learning by terminating visiting RWs. Another related disruptive behavior is that of stragglers, which arise from heterogeneous computation or communication resources and cause some nodes to respond substantially more slowly than others \cite{Reisizadeh2020StragglerResilientFL,tziotis2023stragglerresilient,10.1145/3641512.3690036,he2025straggler,liang2026towards}. Stragglers delay the dissemination of information, whereas Pac-Man nodes eliminate the RWs that carry and propagate it.

\section{Problem Formulation}\label{sec:ProblemFormulation}
We consider a decentralized system consisting of $N$ agents that collaboratively perform a computational task without any central coordination. 
\subsection{Graph and Random Walks}\label{subsec:GraphRW}
These agents are modeled as nodes in a {\it connected} undirected graph, where each agent possesses its own local data and can communicate only with neighboring agents.\footnote{Without loss of generality, we restrict our analysis to connected graphs. For disconnected graphs, the proposed theoretical framework can be applied separately to each connected component, and the analysis proceeds analogously.} Without loss of generality, we assume that the Pac-Man agent is indexed as node~$1$. 
The communication topology graph is constructed as a finite graph $\cG \triangleq (\cV, E)$, where the set of agents/nodes is $\cV\triangleq [N]$, and $E \subseteq \binom{\cV}{2}$ denotes the set of edges. In addition, let $\cB\triangleq[N]\backslash\set{1}$ denote the set of benign nodes.

In this system, computation is carried out via RWs on the graph. 
We assume that there is a fixed and identical transition probability matrix $P$ for each RW on this graph. Each RW carries a token message that is processed and passed along the network. 
At each discrete time step, only the node currently holding a token performs local computation and updates the message. After computation, the current node forwards the token to a randomly selected neighbors based on the transition probability matrix $P$. 
This process continues until a predefined stopping criterion is satisfied. 
When multiple RWs are present, each is uniquely identifiable (e.g., via an index $j$) to track their individual progress through the network.

\subsection{Threat Model: The Pac-Man Attack}\label{subsec:Pack-ManAttack}
We are interested in ensuring the resiliency of RW-based algorithms---that is, their ability to prevent total  extinction and continue operating effectively even when a malicious node attempts to terminate the RWs. Specifically, we focus on a threat we refer to as the \emph{Pac-Man attack}: a malicious node, termed \emph{Pac-Man}, terminates  all incoming RWs without performing the required computation or forwarding the result to a neighbor. We restrict our attention in this paper to the case of a single Pac-Man node to facilitate the presentation of the proposed algorithm and its theoretical analysis. This setting already captures the core difficulty of the problem: even one such adversary is sufficient to cause eventual extinction of all random walks with probability one, thereby completely halting the learning process. Extensions to settings with multiple Pac-Man nodes are discussed in subsequent sections.

The Pac-Man can be particularly dangerous due to its ability to remain hidden through a deceptive behavior. 
\begin{compactenum}[(i)]
\item  The Pac-Man node can reply positively to all network-level fault-tolerance checks and controls, for example retransmissions, making it difficult to distinguish it from benign nodes despite its malicious actions.
\item To avoid detection, the Pac-Man node terminates incoming RWs independently with a probability $\zeta \in (0, 1]$, referred to as the termination probability. This randomized behavior allows the Pac-Man to conceal itself among benign nodes. If the Pac-Man node terminates all incoming RWs, it would never propagate RWs to its neighbors, making it easily identifiable as malicious since no RWs would be observed from that node over a long time horizon. In contrast, by terminating incoming RWs only with a positive probability, the Pac-Man can intermittently forward RWs, thereby blending in with normal nodes. Under decentralized operation, where each node relies solely on local observations, this stochastic behavior significantly delays detection by neighboring nodes.
\end{compactenum}
In decentralized systems, traditional fault-tolerance mechanisms are challenging when facing such an adversary, because 
unlike conventional failures---such as link disconnections, node crashes, or consistently faulty behavior \cite{coulouris2011distributed}---a Pac-Man remains active and selectively disrupts system operation while appearing compliant. Our goal is to design a mechanism that prevents a Pac-Man node from causing a system-wide failure by eliminating all RWs.

Another commonly used fault-tolerance technique is the introduction of redundancy.  However, static redundancy alone is {\it ineffective} in this adversarial setup. Simply starting with multiple RWs does not guarantee their survival: for any small termination probability $\zeta>0$, all RWs will eventually be terminated with probability $1$, causing the task to fail. Thus, redundancy cannot serve as a long-term solution. 

\subsection{Random Walk Stochastic Gradient Descent}\label{subsec:RWSGD}
In this paper, the global task is defined as the standard distributed optimization problem:
\begin{align}\label{eq:goal0}
\min_{{\bf x}\in\R^m}f({\bf x}) = \min_{{\bf x}\in\R^m} \,\E_{u\sim\pi}\left[f_u({\bf x})\right],\,\,
\end{align}
where each node $u$  possesses a local function $f_u({\bf x})$, $\pi$ is a target sampling distribution, and $m$ is a positive integer. 

Under local communication constraints, solving problem \eqref{eq:goal0} using RW-based stochastic gradient descent (RW-SGD) algorithms has proven to be highly effective \cite{10.5555/3618408.3618786, doi:10.1137/08073038X, 10.5555/3327546.3327656}. 
This decentralized approach exploits local interactions within a network to collaboratively achieve global optimization objectives.

Given a target sampling distribution $\pi$ \big(assuming $\pi_u>0$ for all $u\in\cV$\big) and a connected graph $\cG$, the Metropolis-Hastings algorithm \cite{doi:10.1137/08073038X} provides a principled way to construct a RW  on $\cG$ with a transition matrix $P$ such that $\pi$ is the stationary distribution of the Markov chain defined by $P$.\footnote{The transition matrix $P$ depends on both the target sampling distribution $\pi$ and the graph $\cG$. We re-parameterize it as $P_{\pi,\cG}\in\R_+^{N\times N}$, and simply write $P$ when there is no risk of confusion.} The RW-SGD is outlined as
\begin{align}\label{eq: updatingRW-SGD}
{\bf x}_{t+1} ={\bf x}_t - \gamma_t\hat{g}_{v_t}({\bf x}_t),
\end{align}
where $v_t$ is the node visited by the random walk at time $t$, and $\hat{g}_{v_t}({\bf x}_t)$ denotes the gradient or sub-gradient at that node. Here, the sequence $\set{v_t}_{t\in\N}$ is a RW that evolves according to $P$. 

In the absence of the Pac-Man node, RW-SGD is known to converge to the global optimum under standard assumptions \cite{10.5555/3618408.3618786, doi:10.1137/08073038X, 10.5555/3327546.3327656}. 
However, the introduction of a Pac-Man node fundamentally alters the dynamics: by probabilistically terminating incoming RWs, it becomes unclear whether RW-SGD remains convergent. Even if convergence occurs, it is not known a priori whether RW-SGD converges to the true optimum or to a biased solution.

\subsection{Designable Properties of the Decentralized Mechanism}\label{subsec:Objectives}
As stated in Sections~\ref{subsec:Pack-ManAttack} and~\ref{subsec:RWSGD}, the entire goal is to design a \textit{decentralized mechanism} such that (i) the global computational task can be carried out via RWs even in the presence of a Pac-Man node, and
(ii) when the proposed decentralized mechanism is integrated with RW-SGD, it remains effective and convergent.

At the beginning of time slot $t$, let $\cZ_t$ denote the set of indices of active RWs, and define the random variable $Z_t \triangleq \abs{\cZ_t}$ as the total number of active RWs at that moment. 
For each RW $j \in \cZ_t$ at time $t$, we denote its location at time $t$ by $X_{j}(t) \in \cV$. 
At the initial time $t = 0$, we denote the number of initial RWs as $Z_0 = z_0$, where $z_0$ is a predetermined scalar. 
To achieve our entire goal, the proposed algorithm must have the following desirable properties. 
\paragraph{\bf No Permanent Extinction} 
To ensure that the global task can be accomplished, we must avoid permanent extinction of the RW population almost surely: 
\begin{align}\label{eq:NoPermanentExtinction}
\Prb\left(\exists t_0,\,\forall t\ge t_0,\, Z_t=0 \,\middle|\, Z_0=z_0\right)=0.
\end{align}

\paragraph{\bf No Blowup}
To ensure system stability, we must avoid uncontrolled growth in the number of RWs.  The algorithm should keep the RW population bounded almost surely:
\begin{align}\label{eq:NoBlowup}
\Prb\left(\sup_t Z_t<\infty\,\middle|\, Z_0=z_0\right) = 1.
\end{align} 

\paragraph{\bf Convergence}
All active RWs under RW-SGD must converge to the same minimizer, in the following sense: 
\begin{align}\label{eq:Effectiveness}
\lim_{t \to \infty} \E\norm{\bx^{(j_t)}_t - \tilde{\bx}^\star}=0,\; \forall j_t\in\cZ_t,
\end{align}
where $\tilde{\bx}^\star$ denotes the limiting point attained by the RW-SGD iterates across active RWs. If the limit in \eqref{eq:Effectiveness} exists, we further aim to characterize the approximation error with respect to the true optimizer $\bx^\star$, by bounding the deviation
$
\norm{\tilde{\bx}^\star - \bx^\star}.
$

\begin{remark}\label{remark:nogossip}
It is worth noting that, \eqref{eq:Effectiveness} characterizes the convergence behavior along any active RW. This is a substantially stronger conclusion than results based on averaging over all active RWs or over all benign nodes, as is commonly done in gossip-based algorithms \cite{dd165d070f3241118efcd7b4abe6cfad}.
\end{remark}

\section{The \CWL Algorithm }\label{sec:AC}
We present the \CWL (\cwl) algorithm, a decentralized mechanism for adaptively creating RWs with the desirable properties of (a) no permanent extinction, (b) no blowup, and (c) convergence, as formalized in \eqref{eq:NoPermanentExtinction}, \eqref{eq:NoBlowup} and \eqref{eq:Effectiveness}.  We first describe the \cwl algorithm and then summarize our main theoretical results.

\subsection{\CWL Algorithm}
We outline the \cwl algorithm in Algorithm~\ref{alg:LC}. The core idea is that each benign node creates a new RW if it has not been visited for too long. 
In particular, each benign node $u \in \cB$ maintains a variable $L^{(u)}_t$, which records the last time \emph{before $t$} that node $u$ was visited by any RW. 
Using this timestamp, node $u$ computes the elapsed time since its most recent visit as $t - L^{(u)}_t$.  
If this interval exceeds a predetermined threshold $A_u$, node $u$ infers that some RW may have been lost, and with probability $q$, creates an \textit{identical} copy of the most recently visited RW stored in its local cache (lines~5, 6, 7 in Algorithm~\ref{alg:LC}). 
The newly created RW is then propagated independently according to the transition matrix $P$. 
If multiple RWs arrive at node $u$ simultaneously, the node retains any one of them at random.

\begin{algorithm}
\caption{\CWL (\cwl) Algorithm}\label{alg:LC}
\begin{algorithmic}[1]
\State {\bf Input}: The graph $\mathcal{G}$, the thresholds $\set{A_u}_{u\in\cB}$, the initial recording $L_0^{(u)}=0$ for $u\in\cB$, the creation probability $q$, and the initial location of RWs $\cZ_0$.
\For{$t\ge 0$}
\For{$u\in\cB$}
\If{$u\in\cup_{j\in\cZ_t}X_j(t)$} 
\State $L_t^{(u)}\gets t$.
\Else
\If{$t - L_t^{(u)} \le A_u$}
\State $L_t^{(u)}\gets L_{t}^{(u)}+1$.
\Else
\State With probability $q$, node $u$ generates a new RW by replicating the RW that last visited it at time $L_t^{(u)}$.
\State If a new RW is duplicated, $L_t^{(u)}\gets t$; otherwise $L_t^{(u)}\gets L_{t}^{(u)} + 1$.
\EndIf
\EndIf
\EndFor
\EndFor
\end{algorithmic}
\end{algorithm}
\begin{remark}\label{remark:noextinction}
We observe that, under the construction of the \cwl algorithm, whenever there are no RWs on the graph, at least one new RW will be generated within at most $\min_{u\in\cB}{A_u}$ time slots from the last time all RWs got terminated.  
Consequently, the RW population can never become permanently extinct, and the objective of achieving no permanent extinction defined in~\eqref{eq:NoPermanentExtinction} is inherently satisfied.
\end{remark}

Remark~\ref{remark:noextinction} is illustrated in Fig.~\ref{fig:DeCacomplete}(a), where we compare the resilience of the proposed \cwl algorithm with the baseline \DeCa algorithm in \cite{selfdup, egger2024self}.  
The figure plots the evolution of the number of RWs over time.  
Under the \cwl algorithm, even when the creation threshold is inappropriately chosen, the RW population may temporarily go extinct (hit $0$) but eventually recovers after a finite period (green curve).  
In contrast, when the algorithm parameter in the \DeCa algorithm is inappropriately chosen, the RW population goes extinct permanently (brown curve). This behavior highlights a key robustness advantage of the proposed algorithm over the baseline. 
This phenomenon is not unique to complete graphs; similar behavior is observed on other graph topologies as shown in Fig.~\ref{fig:DeCa} in Section~\ref{sec: simulations}.

Next, we numerically demonstrate that the proposed algorithm remains effective in the presence of multiple Pac-Man nodes.
In Fig.~\ref{fig:DeCacomplete}(b), we consider a complete graph with $100$ nodes, where all benign nodes share the same creation threshold $A = 350$. We examine three scenarios with $1$, $3$, and $10$ Pac-Man nodes, respectively.
The results show that even in the presence of a large number of Pac-Man nodes (up to $10\%$ of the total nodes), the RW population is able to recover after a finite time, following the elimination of all RWs by the Pac-Man nodes. 
This numerical result also highlights the robustness of the proposed algorithm against multiple Pac-Man nodes. 
As discussed in Sections~\ref{sec:Boundedness} and~\ref{sec:Convergence}, our theoretical framework can be straightforwardly extended to accommodate multiple Pac-Man nodes. 
For the remainder of this article, we focus on the case of a single Pac-Man node when presenting our results to improve the clarity of exposition.

\begin{figure}[htbp]
  \centering
  \begin{minipage}[b]{0.45\linewidth}
    \centering
    \includegraphics[width=\linewidth]{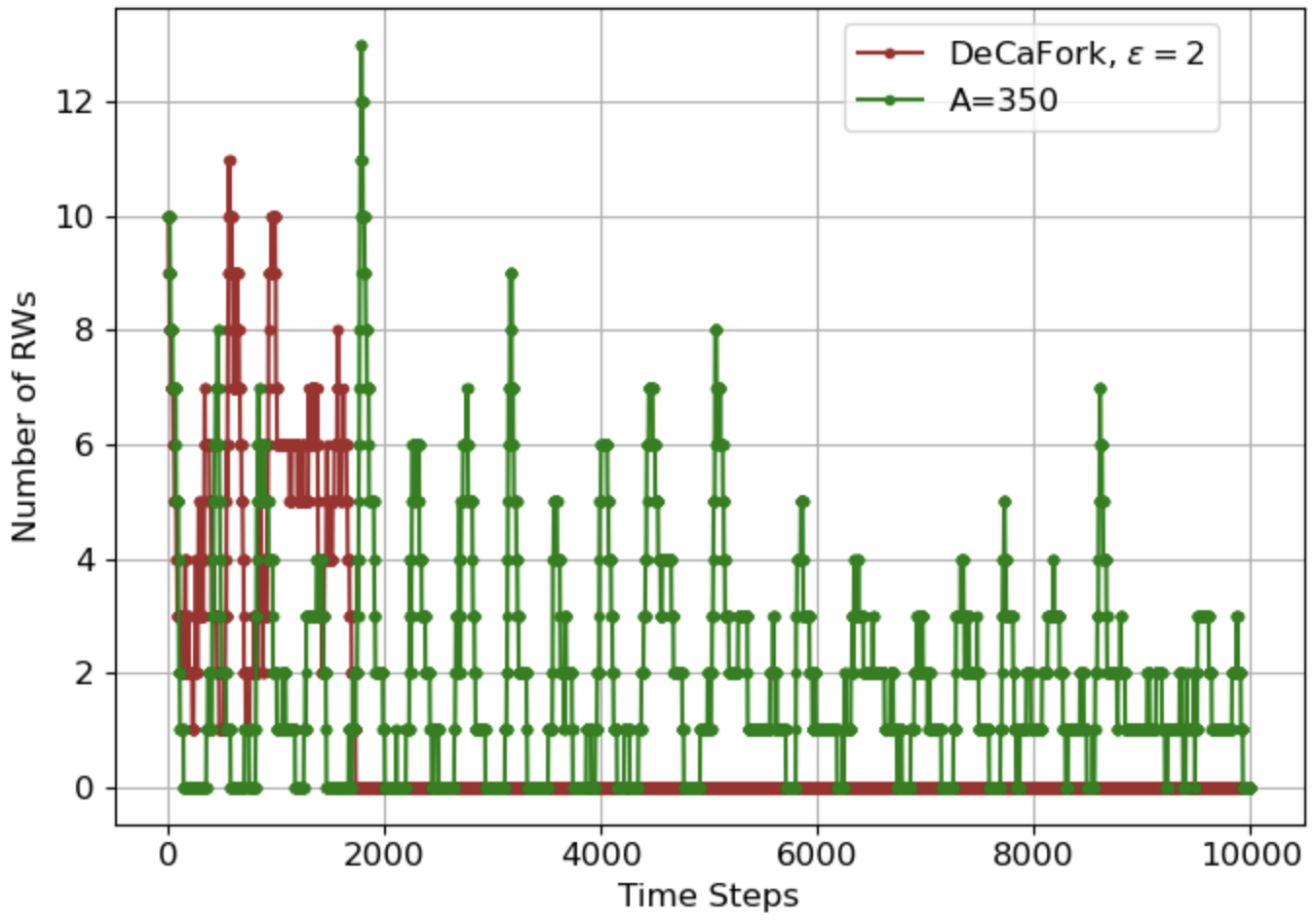}
    \caption*{(a) \cwl and \DeCa}
  \end{minipage}
  \hfill
  \begin{minipage}[b]{0.45\linewidth}
    \centering
    \includegraphics[width=\linewidth]{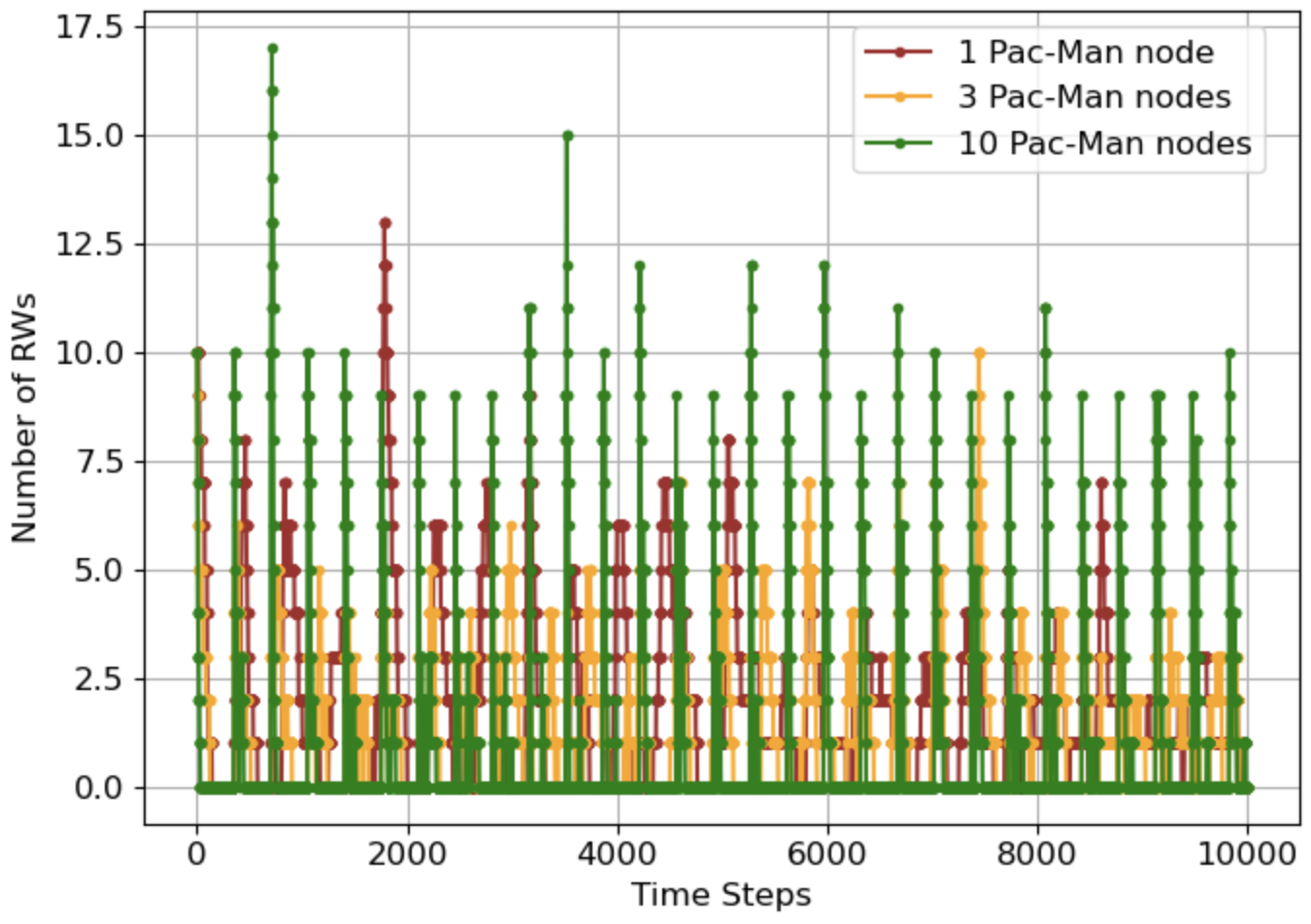}
    \caption*{(b) Multiple Pac-Man nodes}
  \end{minipage}
  \caption{(a): Comparison of the number of RWs under the \cwl and \DeCa algorithms in a complete graph. (b): Number of RWs under the \cwl algorithm in a complete graph with multiple Pac-Man nodes.}
  \label{fig:DeCacomplete}
\end{figure}

Finally, the algorithm design directly affects resource consumption, such as communication cost. When the creation threshold $A$ is large or the creation probability $q$ is small, new RWs are created less frequently.  
This reduces resource consumption but requires a longer time to complete computational tasks. 
Conversely, when the creation threshold $A$ is small or the creation probability $q$ is large, new RWs are created more frequently, resulting in faster task completion at the expense of higher communication cost.

\subsection{Main Results}
We summarize here our main results, which provide theoretical guarantees of no permanent extinction, no blowup, and convergence. 
The details of analysis are provided in Sections~\ref{sec:Boundedness} and \ref{sec:Convergence}.

\paragraph{No Permanent Extinction}
As stated in Remark~\ref{remark:noextinction}, by construction, the \cwl algorithm prevents the RW population from becoming permanently extinct with probability one; that is, \eqref{eq:NoPermanentExtinction} holds. 

\paragraph{No Blowup}
With a Pac-Man node present in the graphs, the \cwl algorithm ensures that the RW population size remains bounded with probability one; that is, \eqref{eq:NoBlowup} holds, (see Theorem~\ref{thm:FiniteRWs}). Moreover, we provide a theoretical regime for selecting the creation probability $q$ such that the expected peak population,  $\sup_{t\ge0}\E[Z_t]$, becomes independent of the network size $N$ (see Theorem~\ref{thm:Peak}). 


\paragraph{Convergence} 
Under the \cwl mechanism, RW-SGD is shown to converge; that is, \eqref{eq:Effectiveness} holds 
(see Theorem~\ref{thm:Effectiveness}).
In addition, we show that it converges to a biased optimum, and we characterize the error between this skewed optimizer and the true optimizer, $\|\tilde{\bf x}^\star - {\bf x}^\star\|$ (see Proposition~\ref{pro:Bounds}).

\section{Preliminaries: Notations, Definitions, and Assumptions}\label{sec:assumptions}
In this section, we introduce the notation, definitions, and assumptions used in the system model and its transition dynamics.

\begin{definition}[Communication topologies and RWs]\label{defn:GenGph}
A communication topology is defined as a finite directed graph $\cG \triangleq (\cV, E)$ with the set of nodes $\cV = [N]$ and the set of edges $E \subseteq \binom{\cV}{2}$. 
Each RW $X_j:\Omega\to\cV^{\Z_+}$ on this graph is assumed to be \iid and can be defined by the common transition probability matrix $P:\cV\to \cM(\cV)$, 
where the probability of transition from node $u$ to node $v$ in one time step at time $t\in\Z_+$, is 
\begin{align*}
P_{uv} \triangleq \Prb\left(X_j(t+1)= v \,\middle|\, X_j(t)=u\right).
\end{align*}
We will call the RW on graph $\cG$ aperiodic if transition probability matrix $P$ is aperiodic. 
\end{definition}
\begin{remark}
Without loss of generality, we assume that $P_{uv} > 0$ for all nodes $v$ connected to node $u$ in graph $\cG$.  
Therefore, transition matrix $P$ is irreducible iff graph $\cG$ is connected. 
\end{remark}

\begin{definition}[Timing conventions]\label{defn:Timing} 
We define the timing conventions of the system as follows:
\begin{compactenum}[(1)]
\item At the beginning of time slot $t$, let $\cZ_t$ denote the set of indices of active RWs. 
Each active RW $j\in\cZ_t$ is associated with a birth time $\theta_j\ge0$ and an initial location $u_j \triangleq X_j(\theta_j)\in\cV$. 
Specifically, let $Z_t=\abs{\cZ_t}$.    
\item At the end of time slot $t$, each active RW $j\in\cZ_t$ moves from its current location to a randomly selected neighbor in the next time slot. 
We denote by $X_j(t+1)$ the location of RW $j$ after this movement at time $t+1$. 
\item Upon arrival at location $X_j(t+1)$ at time $t+1$, termination operation at node $1$ and creation operations at other nodes is performed (if applicable).  
\end{compactenum}
\end{definition}

Under the model assumption, the Pac-Man node is fixed at location $1$. We next present a formal definition of the system including the Pac-Man node. 

\begin{definition}\label{defn:PacMan}
Consider a communication topology $\cG = (\cV, E)$, as defined in Definition~\ref{defn:GenGph}, where node $1$ acts as the Pac-Man, such that if a RW visits node $1$, it is sent to a death node $0$\footnote{The death node is a virtual node used only for a clearer presentation.} with a termination probability $\zeta\in(0, 1]$. 
That is, we augment graph $\cG$ to $\cG^\prime = (\cV^\prime, E^\prime)$ where $\cV^\prime \triangleq \cV\cup\set{0}$ and $E^\prime \triangleq E \cup \set{(1,0)}$. 
In the presence of this Pac-Man, the original RW transition probability matrix $P$ now changes to $P^\prime$, where for each state $u,v \in \cV^\prime$
\begin{equation}
P^\prime_{uv} \triangleq 
\begin{cases}
P_{uv}, & u \in\cB, v \in \cV,\\
(1-\zeta)P_{uv}, & u =1, v \in \cV,\\
\zeta,& u = 1, v=0,\\
1,& u = 0, v = 0\\
0,&u=0, v\in\cV.
\end{cases}
\end{equation}
\end{definition}
Based on Definition~\ref{defn:PacMan}, the transition matrix $P'$ can be characterized by two cases:
$\zeta=1$ and $0<\zeta<1$. When $\zeta=1$, the Pac-Man node becomes an absorbing state, whereas when $0<\zeta<1$, the Pac-Man node is non-absorbing. In particular,
\begin{compactenum}[(1)]
\item When $\zeta=1$, the Pac-Man node eats every incoming RW, and the local data at the Pac-Man node cannot be utilized, making it an absorbing state. 
From Definition \ref{defn:PacMan}, the death node $0$ is also absorbing. 
Therefore, we merge the Pac-Man node and the death node into a single absorbing one, which we continue to denote as node $1$. 
The corresponding transition matrix $P^\prime$ becomes 
\begin{align}\label{eq:NewTransMatCase1}
P^\prime=\scalebox{0.8}{
$\begin{bmatrix}
1  & 0&\cdots& 0 \\
P_{21}&P_{22}&\cdots&P_{2N}\\
\vdots&\vdots&\vdots&\vdots\\
P_{N1}&P_{N2}&\cdots&P_{NN}
\end{bmatrix}$}
\triangleq\begin{bmatrix}
1& {\bf 0}_{1\times (N-1)}\\
R^{(1)}&Q^{(1)}
\end{bmatrix}.
\end{align}
We observe that $Q^{(1)} \in [0,1]^{\cB\times\cB}$ is sub-stochastic matrix where $Q^{(1)}_{uv} = P_{uv}$ for each $u,v \in \cB$, and $R^{(1)} \in [0,1]^{\cB\times 1}$ column vector where $R^{(1)}_u = P_{u1}$ for each benign node $u \in \cB$. 
\item When $0<\zeta<1$,  the local data at the Pac-Man node cannot be reliably utilized, because any RW visiting it is terminated with probability strictly less than $1$.
In this case, only the death node $0$ is absorbing. 
For clarity, we denote the location of the death node as $0$. 
According to Definition \ref{defn:PacMan} and denoting $\bar\zeta \triangleq 1-\zeta$, the corresponding transition matrix $P^\prime$ becomes
\begin{align}\label{eq:NewTransMatCase2}
P^\prime=&\scalebox{0.8}{
$\begin{bmatrix}
1  &0 &\cdots& 0 \\
\zeta &\bar\zeta P_{11}&\cdots&\bar\zeta P_{1N}\\
0&P_{21}&\cdots&P_{2N}\\
\vdots&\vdots&\vdots&\vdots\\
0&P_{N1}&\cdots&P_{NN}
\end{bmatrix}$}\triangleq\begin{bmatrix}
1& {\bf 0}_{1\times N}\\
R^{(\zeta)} & Q^{(\zeta)}
\end{bmatrix}.
\end{align}
We observe that 
\begin{xalignat*}{2}
&R^{(\zeta)} \triangleq \begin{bmatrix}\zeta\\  {\bf 0}_{(N-1) \times 1} \end{bmatrix},&&Q^{(\zeta)} \triangleq \begin{bmatrix}\bar\zeta P_{11}& \bar\zeta S\\ R^{(1)} & Q^{(1)}\end{bmatrix}.
\end{xalignat*} 
The row vector $S \in [0,1]^{1\times \cB}$ such that $S_{1v} = P_{1v}$ for each benign node $v \in \cB$. 
\end{compactenum}
\begin{remark}\label{remark:MultiplePacMan}
If $\zeta=1$, the analysis extends straightforwardly to the setting with multiple Pac-Man nodes. In this case, all Pac-Man nodes can be treated as a single ``super node'', which acts as an absorbing state of the system. The resulting analysis is essentially identical to that of the single Pac-Man node case. 
If $\zeta<1$, the analysis becomes more involved, since the transition probability to the absorbing state (i.e., the death node $0$) depends on the identity of the Pac-Man node through (a) their corresponding termination probabilities and (b) their connectivity to the benign nodes.
Nevertheless, the analysis can still be carried out within the theoretical framework proposed for the single Pac-Man node setting, although the analysis would be more tedious for a larger number of Pac-Man nodes. 
\end{remark}
\begin{definition}[Robustly Connected Graph]\label{defn:StrongConnect}
A graph $\mathcal{G}$ is robustly connected if
\begin{compactenum}[(i)]
\item every pair of benign nodes in $\cB$ is connected by a path that avoids the Pac-Man node, and
\item the Markov chain corresponding to each active RW is aperiodic.
\end{compactenum}
\end{definition}
\begin{remark}
According to Definition~\ref{defn:StrongConnect}, a robustly connected graph cannot be partitioned into two disjoint components by Pac-Man. In addition, in a robustly connected graph, the Markov chain corresponding to each active RW is irreducible and aperiodic.
\end{remark}

\begin{definition}\label{def:matrixQ}
For $\zeta \in (0,1]$, let $Q^{(\zeta)}$ be the matrix defined in \eqref{eq:NewTransMatCase1} and \eqref{eq:NewTransMatCase2}.  
Let $\alpha^{(\zeta)}$ denote the (unique) maximum eigenvalue of $Q^{(\zeta)}$\footnote{
We assume a unique maximum eigenvalue for simplicity.  
If $Q^{(\zeta)}$ has multiple dominant eigenvalues, the arguments can be extended using a standard Jordan decomposition.
},  
and let 
$\nu^{(\zeta)}$ denote the associated 
left normalized positive eigenvectors with unit sum.  
\end{definition}

\begin{assumption}\label{assu:IndependentRW}
Each RW has an \iid evolution on this graph with transition probability matrix $P$, conditioned on their initial locations. 
\end{assumption}

\begin{assumption}\label{assu:robustlyconnected}
The graph $\cG$ defined in Definition~\ref{defn:GenGph} is a robustly connected graph.
\end{assumption}
Next, we adopt the common assumptions used in standard distributed optimization problems, as follows.
\begin{assumption}\label{assu:StandCons}
Each local function $f_u({\bf x})$ in \eqref{eq:goal0} with $u\in\cV$ is $\mu$-strongly convex and $L$-smooth.
\end{assumption}

\begin{assumption}\label{assu:Boundedgrad}
Bounded norm of the local gradient at the global optimum ${\bx}^\star$, i.e. 
$
\sup_{u \in \cV}\norm{\nabla f_u({\bx}^\star)}^2\le \sigma^2,
$
where ${\bx}^\star$ is the minimizer of \eqref{eq:goal0}.
\end{assumption}
In the remainder of this article, we assume that Assumptions \ref{assu:IndependentRW} -- \ref{assu:Boundedgrad} hold.

For clarity of presentation, the theoretical results in Section~\ref{sec:Boundedness} and Section~\ref{sec:Convergence} are presented for a single Pac-Man node, and Theorem~\ref{thm:Peak} is further specialized to a complete-graph setting. These assumptions are made only to simplify the notation and analysis. The results extend to multiple Pac-Man nodes on a general connected graph with minor modifications, provided that the robustly connected graph condition in Definition~4 and Assumption~2 holds; that is, removing the Pac-Man nodes does not disconnect the network. Further details are provided in Appendix~G.

\section{Population Boundedness with Pac-Man}\label{sec:Boundedness}
In this section, we introduce one Pac-Man node, and we examine the long-term behavior of the RW population process $\set{Z_t}_{t\in\N}$.

In the following theorem, we show that, regardless of the creation thresholds $\set{A_u: u\in\cB}$ chosen by the benign nodes, the RW population remains almost surely bounded over time. This boundedness is crucial for ensuring algorithmic stability (in terms of population) and preventing the network from being flooded with RWs.

\begin{theorem}\label{thm:FiniteRWs}
On any finite graph $\cG^\prime = (\cV^\prime, E^\prime)$ of Definition \ref{defn:PacMan}, with $z_0\ge 1$, $A_u \ge 1$, $q\le 1$, and $\zeta \in (0,1]$, the \cwl algorithm ensures that 
$
\limsup_{t\to\infty}Z_t < \infty
$
almost surely\footnote{We abstract away network delays and other lower-layer communication issues to focus on the proposed random-walk-based algorithm and its resilience to termination. These issues are assumed to be handled by the underlying network protocols.}.
\end{theorem}
\begin{proof}
(Roadmap) Let $\cF_t$ denote the history of the evolution of all RWs until time $t$.
First, we prove that the $d$-step drift of $Z_t$ is bounded as $\E[Z_{t+d}-Z_t\mid\cF_t] \le  - c\zeta Z_t + (N-1)d$, where $d$ is the graph diameter and $c$ is the smallest probability of reaching the Pac-Man within $d$ steps from any node $u \in \cB$. 
The formal definitions of $d$ and $c$ are provided in Definition~9 in Appendix~A.
This inequality implies that the expected drift in the number of RWs after $d$ steps is negative when $Z_t$ is large.   
This prevents $Z_t$ from diverging.  
The full proof is in Appendix A. A discussion of the extension to multiple Pac-Man nodes is provided in Appendix~G-A.
\end{proof}

A natural follow-up question is whether we can characterize the maximum number of random walks at across all times $t$.   
As such, we next analyze the peak of expected number of RWs, defined as
\begin{align}\label{eq:PeakPopulation}
\bar{Z}^\star = \sup_{t\ge0}\E[Z_t].
\end{align}
We focus on the special case in which the creation thresholds satisfy $A_u=1$ for all $u\in\cB$. Under this choice, a new RW is created immediately whenever a benign node is not visited by a RW, thereby making RW creation as frequent as possible. Consequently, this setting corresponds to the worst-case scenario for RW proliferation. The expected peak number derived under this setting therefore provides a universal upper bound for all threshold choices with $A_u>1$. Since larger thresholds lead to less frequent RW creation, the resulting bound may be conservative for general values of $A_u$.

\begin{theorem}\label{thm:Peak}
 Consider a complete graph with $N$ nodes and a \textit{uniform} target sampling distribution $\pi=(\frac{1}{N}, \frac{1}{N},\cdots,\frac{1}{N})$. Suppose there are $z_0$ initial RWs, each node has a creation threshold $A_u = 1$, and the creation probability is $q \in (0, 1]$.  The expected peak population under the \cwl algorithm satisfies
\begin{align*}
\bar{Z}^\star=O\Big(\frac{q N^2}{\zeta}\Big),
\end{align*}
where $\bar{Z}^\star$ is defined in \eqref{eq:PeakPopulation}.
\end{theorem}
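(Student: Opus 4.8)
The plan is to track $\E[Z_t]$ directly by writing a one-step recursion for the expected population on the complete graph with $A_u=1$. Since $A_u=1$, at the end of every time slot each benign node that is \emph{not} currently occupied by some RW creates a fresh RW with probability $q$; each benign node that \emph{is} occupied simply resets its timer. So the number of newly created RWs at step $t$ is (number of unoccupied benign nodes at time $t+1$) $\times$ independent Bernoulli$(q)$ thinning, while RWs die only at the Pac-Man node, each surviving visit contributing a loss with probability $\zeta$. The key simplification from the complete graph plus uniform $\pi$ is that the Metropolis--Hastings chain is the uniform random walk: every RW, independently, lands on each of the $N$ nodes with probability $1/N$ at every step. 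Hence, conditioned on $Z_t=z$, the probability a given benign node is unoccupied is $(1-1/N)^z$, so the expected number of unoccupied benign nodes is $(N-1)(1-1/N)^z$, and the expected number of RWs reaching the Pac-Man (node $1$) is $z/N$, each killed with probability $\zeta$.

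Assembling this, I would obtain
\begin{align*}
\E[Z_{t+1}\mid Z_t=z] = z - \frac{\zeta}{N}z + q\,(N-1)\Bigl(1-\tfrac1N\Bigr)^{z}.
\end{align*}
Taking expectations and using Jensen's inequality on the convex map $z\mapsto q(N-1)(1-1/N)^z$ gives a clean scalar recursion $\E[Z_{t+1}] \le (1-\zeta/N)\E[Z_t] + q(N-1)(1-1/N)^{\E[Z_t]}$, but since the last term is hard to control tightly via Jensen when $\E[Z_t]$ is small, I would instead just bound the creation term crudely by $q(N-1)\le qN$, yielding the linear recursion $\E[Z_{t+1}]\le (1-\zeta/N)\E[Z_t] + qN$. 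Iterating this contraction (valid because $\zeta/N<1$) gives $\E[Z_t]\le (1-\zeta/N)^t z_0 + qN\sum_{k\ge0}(1-\zeta/N)^k = (1-\zeta/N)^t z_0 + qN\cdot \frac{N}{\zeta}$, so $\sup_{t\ge0}\E[Z_t] \le z_0 + qN^2/\zeta = O(qN^2/\zeta)$, where the $z_0$ term is absorbed since $z_0$ is a fixed constant not growing with $N$.

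The steps in order: (1) invoke the Metropolis--Hastings construction to identify $P$ with the uniform transition matrix on $K_N$; (2) write the exact conditional expectation $\E[Z_{t+1}\mid Z_t]$ by separately accounting for deaths at node $1$ (binomial thinning with rate $\zeta/N$ per walk) and births at unoccupied benign nodes (independent across the $N-1$ benign nodes, with occupancy probability $1-(1-1/N)^{Z_t}$); (3) take total expectation and bound the birth term by $qN$; (4) solve the resulting linear recursion by geometric-series summation. The main obstacle is step (2): one must be careful that the ``unoccupied'' events across benign nodes, and the ``reaches Pac-Man'' events across walks, are being counted with the correct marginal probabilities — these events are not independent of each other, but \emph{linearity of expectation} sidesteps this entirely, so the honest work is just verifying that no walk is double-counted (a walk that is eaten does not also trigger a creation in the same slot because node $1$ is the Pac-Man, not a benign node) and that simultaneous arrivals at a benign node are handled by the ``retain one at random'' rule without affecting the count of \emph{created} walks. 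A secondary subtlety is confirming that the recursion's transient term really is $O(1)$ in $N$: this is fine as long as $z_0$ is treated as a constant, which the theorem's hypothesis ``$z_0$ initial RWs'' permits.
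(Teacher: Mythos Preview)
Your proposal is correct and follows essentially the same argument as the paper: derive the exact one-step conditional drift $\E[Z_{t+1}\mid Z_t=z]=(1-\zeta/N)z+q(N-1)(1-1/N)^z$, crudely bound the creation term by $q(N-1)$, take expectations to obtain the linear recursion, and iterate the resulting contraction to get $\bar Z^\star\le\max\{z_0,\,qN^2/\zeta\}$. The only cosmetic differences are that the paper keeps $q(N-1)$ rather than $qN$ and expresses the final bound as a $\max$ rather than a sum, but the structure and all essential steps are identical.
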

\begin{proof}
(Roadmap) By unrolling the recursion derived from the transition probabilities $P_{uv}$ and the termination probability $\zeta$,  we obtain the following upper bound for every $t\ge0$:
\begin{align*}
\E[Z_t]\le \frac{q}{\zeta}(N-1)N + (1-\frac{\zeta}{N})^t\left(z_0-\frac{q}{\zeta}(N-1)N\right).
\end{align*}
Consequently,
\begin{align}\label{eq:barZUB}
\bar{Z}^\star\le\max\set{z_0, \frac{q}{\zeta}N^2}.
\end{align}
The full proof is given in Appendix~B. A discussion of the extension to multiple Pac-Man nodes and general connected graphs is provided in Appendix~G-B.
\end{proof}

\begin{cor}\label{cor:Peak}
Under the same setting in Theorem~\ref{thm:Peak}, under the \cwl algorithm, for any $0\le\alpha\le2$, if $z_0=\Theta(1)$ and $\zeta=\Theta(1)$, $q=N^{-\alpha}$, then $\bar{Z}^\star = O(N^{2-\alpha})$.
\end{cor}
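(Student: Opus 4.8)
\textbf{Proof proposal for Corollary~\ref{cor:Peak}.}

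The plan is to simply substitute the prescribed scalings $z_0 = \Theta(1)$, $\zeta = \Theta(1)$, and $q = N^{-\alpha}$ into the bound $\bar{Z}^\star \le \max\set{z_0, \frac{q}{\zeta}N^2}$ furnished by Theorem~\ref{thm:Peak} (equation~\eqref{eq:barZUB}), and then track which of the two terms in the maximum dominates as a function of $\alpha$. Under these scalings the second term becomes $\frac{q}{\zeta}N^2 = \Theta(N^{-\alpha}\cdot N^2) = \Theta(N^{2-\alpha})$, while the first term is $z_0 = \Theta(1)$.

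First I would observe that for $0 \le \alpha \le 2$ we have $2 - \alpha \ge 0$, so $N^{2-\alpha} \ge 1 = \Theta(z_0)$ for all $N$ large enough; hence the maximum is achieved (up to constants) by the $\frac{q}{\zeta}N^2$ term, giving $\bar{Z}^\star = O(N^{2-\alpha})$. For completeness I would note that the edge case $\alpha = 2$ gives $\bar{Z}^\star = O(1)$, which is consistent both with the $N^{2-\alpha}$ expression and with the $z_0$ term, so the stated bound holds uniformly over the closed interval $[0,2]$. No further work is needed beyond this substitution and case comparison.

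There is no real obstacle here: the corollary is an immediate specialization of Theorem~\ref{thm:Peak}, and the only mild care required is in checking that the $\Theta(1)$ constants hidden in $z_0$ and $\zeta$ do not interfere with the asymptotic comparison — they do not, since $N^{2-\alpha}$ either grows or stays constant while $z_0/\zeta = \Theta(1)$, so for sufficiently large $N$ the ratio term dominates and the $O(\cdot)$ bound absorbs all constants. If one wished to state the result for all $N$ rather than asymptotically, one would simply keep the $\max$ and write $\bar{Z}^\star = O(N^{2-\alpha} + 1) = O(N^{2-\alpha})$.
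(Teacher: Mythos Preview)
Your proposal is correct and matches the paper's own argument exactly: the paper's proof of Corollary~\ref{cor:Peak} is a one-line substitution of $(z_0,\zeta,q)$ into \eqref{eq:barZUB}, and you have carried this out with more care than the paper itself provides.
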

\begin{proof}
By substituting the appropriate values of $(z_0, \zeta, q)$ into \eqref{eq:barZUB}, we obtain the desired bounds.
\end{proof}

From a practical perspective, Corollary~\ref{cor:Peak} provides an explicit rule for selecting the creation probability $q$ according to the desired RW population. In many practical settings, the initial number of RWs is typically $z_0=\Theta(1)$, while the termination probability $\zeta$, determined by the Pac-Man, satisfies $\zeta\in(0,1]$ and is usually not too small, implying $\zeta=\Theta(1)$. Under these conditions, if the system designer wishes to keep the expected peak RW population at the order of $O(N^{2-\alpha})$, where $0\le\alpha\le2$, then $q$ can be selected as $q=N^{-\alpha}$. For example, $q=N^{-2}$ keeps the expected peak RW population bounded independently of $N$, $q=N^{-1}$ allows it to grow linearly with $N$, and $q=\Theta(1)$ may result in quadratic growth. In implementation, one may use
$q=\min\left\{1,rN^{-\alpha}\right\}$,
where $\alpha$ specifies the desired population scaling and the constant $r>0$ can be tuned according to the available resource budget.

A similar population-control principle also applies to other connected graphs. Although the required scaling of the creation probability may differ from $N^{-\alpha}$, the expected peak number of RWs, $\bar{Z}^\star$, can still be controlled through an appropriate choice of $q$. Therefore, rather than selecting the creation probability heuristically, one may first specify a desired growth rate for the RW population and then choose $q$ accordingly.

\section{Convergence of the Learning Algorithm}\label{sec:Convergence}
In this section, we study how the \cwl algorithm affects the convergence of RW-SGD in the presence of a Pac-Man node. 
Since the Pac-Man node terminates incoming RWs with probability $\zeta$, the local data at the Pac-Man node is only partially exploited in the learning process. This raises two questions: (i) whether RW-SGD still converges, and (ii) if it does, whether the resulting limit point $\tilde{\bf x}^\star$ (defined in \eqref{eq:Effectiveness}) coincides with the true optimizer ${\bf x}^\star$.

It is important to note that each single RW will eventually be terminated by the Pac-Man node with probability one. To address the two questions above, rather than relying on a single RW, we need to analyze the behavior of a {\it chain} of RWs. Specifically, when a benign node has not been visited for too long, it creates a new RW by copying the model of the most recently visiting RW. We refer to this most recent RW as the \textit{parent} and newly created RW as its \textit{child}.

\begin{definition}\label{defn:ChainRWs}
(Chain of RWs)  
Start with an initial RW $j_0\in[z_0]$. At each $s\ge 0$, if $j_s$ has at least one child, let $j_{s+1}$ be a uniformly chosen child of $j_s$; otherwise, set $j_{s'}=-1$ for all $s'\ge s$ to mark termination. We refer to $\set{j_s: s\in\N}$ as a chain of RWs.
\end{definition}

\begin{figure}[htbp]
\centering
\includegraphics[width=0.7\linewidth]{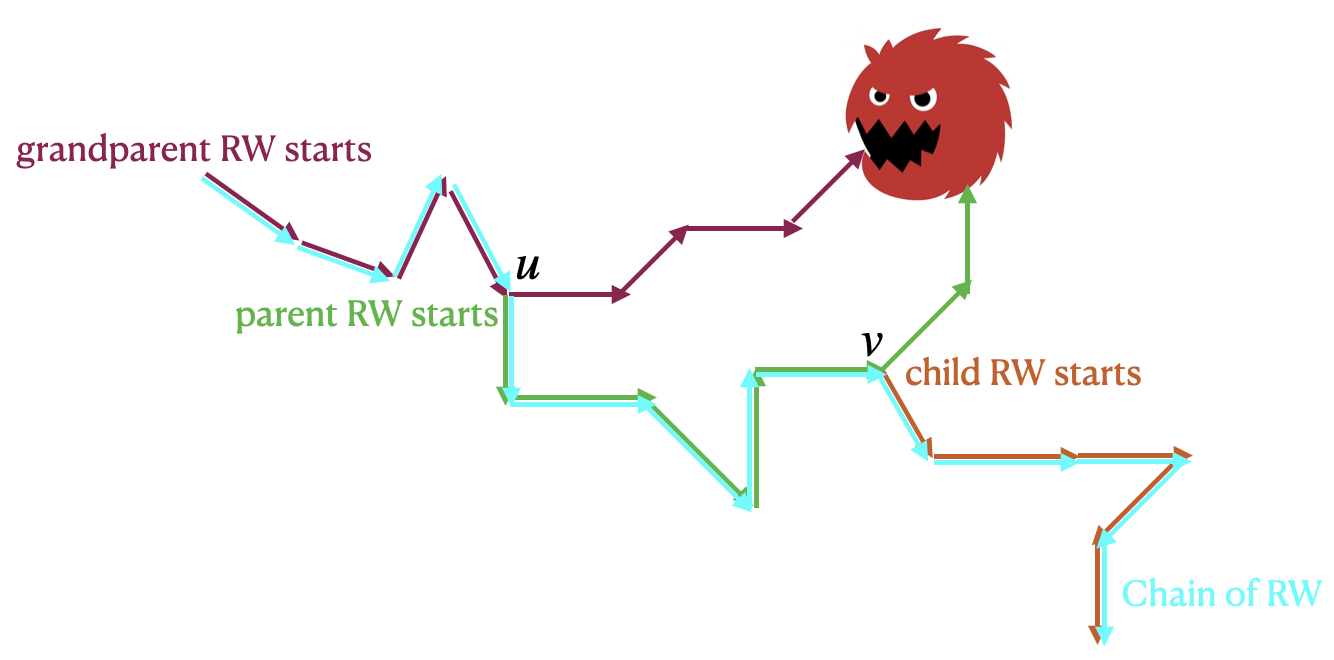}
\caption{An example of a chain of RWs. When a benign node $u$ has not been visited for $A_u$ consecutive time slots, it creates a new RW (green). Similarly, when node $v$ has not been visited for $A_v$ time slots, it creates another new RW (orange). The blue trajectory connects these RWs and forms a chain.}
\label{fig:chain}
\end{figure}

An illustrative example of a chain of RWs is shown in Fig.~\ref{fig:chain}. Each parent-child pair is separated by a random waiting period that is at least as large as the corresponding creation threshold, due to the creation probability $q$. However, these waiting periods do not affect learning performance, as they simply correspond to \textit{idle} times during which no updates are applied at the node. Thus, we can \textbf{conceptually condense} all waiting periods and treat the entire chain as an \textit{effective single} RW that inherits all model updates along the parent–child lineage. We therefore study the limiting behavior of such chains. Since every benign node can create new RWs, at least one infinite parent-child chain must almost surely exist.

In the presence of the Pac-Man node, a chain of RWs does not follow the original transition matrices $P$ (or $P'$). The effective transition probability matrix now depends explicitly on the termination probability $\zeta$. We denote this effective transition matrix by $P^{(\zeta)}_{\text{chain}}$, and its stationary distribution by $\pi^{(\zeta)}_{\text{chain}}$, where the explicit forms of $P^{(\zeta)}_{\text{chain}}$ and $\pi^{(\zeta)}_{\text{chain}}$ are given in \eqref{eq:pichain} and \eqref{eq:UniqueP}, respectively.

\begin{theorem}\label{thm:Effectiveness}
Under the \cwl mechanism, RW-SGD converges to the minimizer of the following problem:
\begin{align}\label{eq:SurOpt}
\min_{{\bf x}\in\R^m}\tilde{f}({\bf x}) = \min_{{\bf x}\in\R^m} \,\E_{u\sim\pi^{(\zeta)}_{\text{chain}}}[f_u({\bf x})],
\end{align}	
where
\begin{align}\label{eq:pichain}
\pi^{(\zeta)}_{\text{chain}} = \left\{
\begin{aligned}
&[0, \nu^{(1)}]&&\zeta=1,\\
&\nu^{(\zeta)}&&\zeta\in(0,1),
\end{aligned}
\right.
\end{align}
and $\nu^{(\zeta)}$ is defined in Definition~\ref{def:matrixQ}.
\end{theorem}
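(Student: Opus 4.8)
\medskip

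The plan is to reduce the convergence of RW-SGD under \cwl to a standard convergence result for Markov-chain SGD (as in \cite{doi:10.1137/08073038X,10.5555/3618408.3618786}), applied not to a single RW but to the \emph{effective single RW} obtained by condensing an infinite parent–child chain (Definition~\ref{defn:ChainRWs}). The first step is to make this condensation rigorous: fix an infinite chain $\set{j_s}_{s\in\N}$ (one exists almost surely, since every benign node can spawn children and the population never goes permanently extinct by Remark~\ref{remark:noextinction}), and show that the sequence of \emph{nodes at which an actual SGD update is applied} along this chain — dropping the idle waiting slots between a parent's last visit and the child's birth, and noting that the child inherits the parent's model exactly — is itself a time-homogeneous Markov chain on $\cB$ (when $\zeta=1$) or on $\cV$ (when $\zeta\in(0,1)$) with transition matrix $P^{(\zeta)}_{\text{chain}}$ given in \eqref{eq:UniqueP}. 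The key observation here is that the update sequence of a chain is exactly the trajectory of the original RW run on the augmented chain $\cG'$ of Definition~\ref{defn:PacMan}, but conditioned to never be absorbed at the death node $0$; this $h$-transform (Doob transform conditioned on non-absorption) of the sub-stochastic kernel $Q^{(\zeta)}$ has the stochastic transition matrix whose stationary distribution is the normalized left Perron eigenvector $\nu^{(\zeta)}$ of $Q^{(\zeta)}$, which is precisely $\pi^{(\zeta)}_{\text{chain}}$ of \eqref{eq:pichain}.

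\medskip

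Second, I would verify that this effective chain satisfies the hypotheses of the Markov-SGD convergence theorem: by Assumption~\ref{assu:robustlyconnected} the relevant restricted chain is irreducible and aperiodic on a finite state space (for $\zeta=1$ the Pac-Man node is excluded, consistent with the leading $0$ in $[0,\nu^{(1)}]$; for $\zeta\in(0,1)$ all nodes including node~$1$ are reachable within the non-absorbed chain), hence it is uniformly ergodic with a geometric mixing rate, and its stationary distribution is $\pi^{(\zeta)}_{\text{chain}}$. Together with Assumptions~\ref{assu:StandCons} ($\mu$-strong convexity, $L$-smoothness) and~\ref{assu:Boundedgrad} (bounded gradient at the optimum) and a standard diminishing step-size schedule $\set{\gamma_t}$, the cited Markov-SGD results give $\lim_{t\to\infty}\E\norm{\bx_t - \tilde\bx^\star} = 0$ where $\tilde\bx^\star$ is the unique minimizer of $\tilde f(\bx) = \E_{u\sim\pi^{(\zeta)}_{\text{chain}}}[f_u(\bx)]$. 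Since the condensation only deletes idle slots, the original (un-condensed) iterates along the chain converge to the same limit; and because \eqref{eq:Effectiveness} is a statement about \emph{every} active RW $j\in\cZ_t$, I would close by noting that any active RW at a large time $t$ lies on some infinite chain (or is a finitely-lived offshoot whose model was copied from a node already close to $\tilde\bx^\star$), so its model has inherited updates indistinguishable, in the $t\to\infty$ limit, from those of an effective chain — giving the conclusion uniformly over $j$.

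\medskip

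The main obstacle I anticipate is the second half of the first step: rigorously identifying the law of the condensed update-node sequence with the conditioned-on-survival $h$-transform of $Q^{(\zeta)}$, and thereby pinning down $P^{(\zeta)}_{\text{chain}}$ and its stationary law as the Perron eigenvector $\nu^{(\zeta)}$. The subtlety is that "choose a uniformly random child" in Definition~\ref{defn:ChainRWs} is not obviously the same as "condition the RW on eventually producing a child"; one must argue that, because creation events are triggered by non-visits and occur (given the threshold is crossed) with a fixed probability $q$ independent of the RW's past trajectory, the \emph{spatial} law of where the next surviving segment of the chain goes is governed precisely by $Q^{(\zeta)}$ renormalized by survival probabilities — i.e. the Doob transform — while the waiting times contribute only the idle slots we discard. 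A secondary technical point is handling the boundary case $\zeta=1$ cleanly: there the Pac-Man node is genuinely absorbing, the effective chain lives on $\cB$ with kernel the $h$-transform of $Q^{(1)}$, and one must check that $[0,\nu^{(1)}]$ — assigning zero mass to node~$1$ — is indeed its stationary distribution, consistent with the fact that the Pac-Man's local data $f_1$ is never used when $\zeta=1$. The uniqueness of the dominant eigenvalue assumed in Definition~\ref{def:matrixQ} makes $\nu^{(\zeta)}$ well-defined and hence $\tilde\bx^\star$ unique by Assumption~\ref{assu:StandCons}.
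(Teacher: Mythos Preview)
Your high-level plan matches the paper's: condense an infinite parent--child chain into an effective single walk, identify the limiting sampling distribution over nodes, and then invoke a standard Markov-chain SGD convergence result under Assumptions~\ref{assu:StandCons}--\ref{assu:Boundedgrad}. Where you diverge from the paper, and where there is a genuine gap, is in how you identify that limiting distribution.

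The paper does \emph{not} argue that the condensed update sequence is itself a time-homogeneous Markov chain with kernel $P^{(\zeta)}_{\text{chain}}$ and then read off its stationary law. Instead, it works with \emph{marginals}: by a coupling argument, it shows that at every time $t$ the location distribution of the active descendant along the chain coincides with the law of the original RW $j_0$ \emph{conditioned on not yet being absorbed}, i.e.\ $\Prb_u(X_{j_0}(t)\in\cdot\mid K_u>t)$. Passing $t\to\infty$, this is precisely the Yaglom limit, which by the classical Darroch--Seneta result for finite irreducible aperiodic substochastic matrices is the unique quasi-stationary distribution and equals the normalized \emph{left} Perron eigenvector $\nu^{(\zeta)}$ of $Q^{(\zeta)}$. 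That is how $\pi^{(\zeta)}_{\text{chain}}$ is obtained.

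Your route through the Doob $h$-transform does not land on the same object. The $h$-transform of $Q^{(\zeta)}$ (conditioning on \emph{eternal} non-absorption) has kernel $\hat P_{uv}=Q^{(\zeta)}_{uv}h_v/(\alpha^{(\zeta)} h_u)$ with $h$ the \emph{right} Perron eigenvector, and its stationary distribution is proportional to $\nu^{(\zeta)}_u h_u$, not to $\nu^{(\zeta)}_u$; these agree only when $h$ is constant, i.e.\ when $Q^{(\zeta)}$ has constant row sums, which fails here (only rows touching Pac-Man are deficient). Separately, the row-normalized kernel in \eqref{eq:UniqueP} is the \emph{one-step} survival-conditioned kernel, which is yet another object and is not the $h$-transform either; its stationary law is likewise not $\nu^{(\zeta)}$ in general. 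So neither identification you propose yields $\pi^{(\zeta)}_{\text{chain}}=\nu^{(\zeta)}$, and the approach as written would produce the wrong surrogate objective. More structurally, the condensed chain is a \emph{regenerative} process---the RW runs until absorption and then restarts at a creation node whose law depends on the \cwl timing mechanism---so it is not the ``never-absorbed'' process at all; this is exactly why the paper works with conditional marginals and the QSD rather than trying to name a single transition kernel for the condensed chain. You correctly flag this step as the main obstacle; the fix is to replace the $h$-transform argument with the Yaglom/QSD identification.
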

\begin{proof}
(Roadmap)  We first show that, as long as the initial RW $j_0$ remains active, the distribution of the chain coincides with that of $j_0$. 
Next, we demonstrate that this distribution converges to the quasi-stationary distribution \cite{collet2012quasi} of RW $j_0$, which can be computed following the approach in \cite{darroch1965quasi}. 
Consequently, under the \cwl algorithm, RW-SGD converges to the surrogate optimization problem \eqref{eq:SurOpt}. The full proof is given in Appendix~C. A discussion of the extension to multiple Pac-Man nodes is provided in Appendix~G-C.
\end{proof}

Now we derive the explicit form of the effective transition probability matrix $P^{(\zeta)}_{\text{chain}}$, which will be used to quantify the limiting model from the true global optimum $\bx^\star$. As stated before, without loss of generality, we \textit{condense the waiting time} between the each pair of parent and child (i.e., the flat segments in Fig.~\ref{fig:Convergence}). Then, $P_\text{chain}^{(\zeta)}$ takes the following form (the proof is provided in Appendix~D):
\begin{align}\label{eq:UniqueP}
\left[P^{(\zeta)}_{\text{chain}}\right]_{uv}=
\frac{Q^{(\zeta)}_{uv}}{\sum_{v}Q^{(\zeta)}_{uv}}
\end{align}
and $Q^{(\zeta)}$ is defined in \eqref{eq:NewTransMatCase1} and \eqref{eq:NewTransMatCase2}.

In the standard RW-SGD algorithm, it is well known that if the stepsize $\eta_t$ decreases with the number of local-update iterations $t$ and tends to $0$, the algorithm converges to a deterministic point; if $\eta_t$ remains constant, it converges to a random variable \cite{10.5555/3618408.3618786, doi:10.1137/08073038X, 10.5555/3327546.3327656}.
\begin{proposition}[Shift of Optima]\label{pro:Bounds}
Define $P_{\mathrm{chain}}^{(\zeta)}$ as in \eqref{eq:UniqueP}, and let $\gamma_\text{chain}^{(\zeta)}$ be its spectral gap. Denote by $\|\cdot\|_{\text{TV}}$ be the total variation distance, and let ${\bx}_0$ be the starting point. Let $\pi_{\text{chian}}^{(\zeta)}$ be the stationary distribution defined in \eqref{eq:pichain}. Under the \cwl mechanism:
\begin{compactenum}[(1)]
\item If the step size $\eta_t$ decreases with the number of iterations and tends to $0$, then RW-SGD converges to the minimizer $\tilde{\bx}^\star$ of \eqref{eq:SurOpt}. Moreover, 
\begin{align*}
\frac{1}{L}\|\nabla f(\tilde{\bx}^\star\big)\|\le	\|\tilde{\bx}^\star-{\bx}^\star\|\le \frac{1}{\mu}\|\nabla f(\tilde{\bx}^\star)\|.
\end{align*}
\item If the step size keeps constant, i.e., $\eta_t=\eta < \tfrac{1}{L}$, then, the expected error satisfies
\begin{align*}
\lim_{T\to\infty}\E\left[\|\tilde{\bx}_T-{\bx}^\star\|\right]\le \frac{\eta L\sigma^2}{\gamma_\text{chain}^{(\zeta)}\mu^2} + \frac{\left\|\tilde{\nu}^{(\zeta)} - \pi\right\|_{TV}^2\sigma^2L}{\mu^3}.
\end{align*}	
\end{compactenum}
\end{proposition}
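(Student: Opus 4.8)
\textbf{Proof proposal for Proposition~\ref{pro:Bounds}.}

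The plan is to treat the two regimes separately, but in both cases the core object is the surrogate problem \eqref{eq:SurOpt} whose minimizer is $\tilde{\bx}^\star$. For part (1), once convergence of RW-SGD to $\tilde{\bx}^\star$ is granted by Theorem~\ref{thm:Effectiveness}, the sandwich bound on $\|\tilde{\bx}^\star-{\bx}^\star\|$ follows from pure convex analysis: since $f$ is $\mu$-strongly convex and $L$-smooth (Assumption~\ref{assu:StandCons}), we have for any $\bx$ the two-sided estimate $\mu\|\bx-{\bx}^\star\| \le \|\nabla f(\bx) - \nabla f({\bx}^\star)\| \le L\|\bx-{\bx}^\star\|$, and because ${\bx}^\star$ is the global optimizer of $f$, $\nabla f({\bx}^\star)=0$, so $\|\nabla f(\bx)-\nabla f({\bx}^\star)\| = \|\nabla f(\bx)\|$. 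Substituting $\bx=\tilde{\bx}^\star$ and rearranging gives $\tfrac1L\|\nabla f(\tilde{\bx}^\star)\| \le \|\tilde{\bx}^\star - {\bx}^\star\| \le \tfrac1\mu\|\nabla f(\tilde{\bx}^\star)\|$. This part is essentially immediate and should be dispatched in a few lines; the only subtlety is being explicit that $\tilde{\bx}^\star$ need not equal ${\bx}^\star$ because $\pi^{(\zeta)}_{\text{chain}} \ne \pi$ whenever the Pac-Man node carries nonzero sampling mass, so $\nabla f(\tilde{\bx}^\star) = \nabla \tilde f(\tilde{\bx}^\star) + (\nabla f - \nabla \tilde f)(\tilde{\bx}^\star) = (\nabla f - \nabla \tilde f)(\tilde{\bx}^\star)$ is generically nonzero.

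For part (2), I would decompose the error via the triangle inequality through the surrogate minimizer: $\E\|\tilde{\bx}_T - {\bx}^\star\| \le \E\|\tilde{\bx}_T - \tilde{\bx}^\star\| + \|\tilde{\bx}^\star - {\bx}^\star\|$. The first term is the standard steady-state error of constant-stepsize Markov-chain SGD run on the surrogate objective $\tilde f$ with the chain kernel $P^{(\zeta)}_{\text{chain}}$; invoking the known constant-stepsize RW-SGD analysis (e.g.\ \cite{doi:10.1137/08073038X, 10.5555/3327546.3327656}), this is bounded by something of the form $O\!\big(\eta L \sigma^2 / (\gamma_{\text{chain}}^{(\zeta)} \mu^2)\big)$, where the spectral gap $\gamma_{\text{chain}}^{(\zeta)}$ enters through the mixing-time-dependent variance term of the Markov gradient noise, and $\sigma^2$ bounds the gradient norms at the optimum (Assumption~\ref{assu:Boundedgrad}); I would need to check that the gradient-noise bound at $\tilde{\bx}^\star$ can be controlled by $\sigma^2$ up to smoothness factors, using $L$-smoothness to transfer the bound from ${\bx}^\star$ to $\tilde{\bx}^\star$. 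The second term, $\|\tilde{\bx}^\star - {\bx}^\star\|$, is then bounded using part (1): $\|\tilde{\bx}^\star-{\bx}^\star\| \le \tfrac1\mu\|\nabla f(\tilde{\bx}^\star)\|$, and $\nabla f(\tilde{\bx}^\star) = \nabla f(\tilde{\bx}^\star) - \nabla \tilde f(\tilde{\bx}^\star) = \E_{u\sim\pi}[\nabla f_u(\tilde{\bx}^\star)] - \E_{u\sim\pi^{(\zeta)}_{\text{chain}}}[\nabla f_u(\tilde{\bx}^\star)] = \sum_u (\pi_u - [\pi^{(\zeta)}_{\text{chain}}]_u)\nabla f_u(\tilde{\bx}^\star)$, which I would bound by a total-variation-times-sup-gradient estimate. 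Here the gradient magnitudes $\|\nabla f_u(\tilde{\bx}^\star)\|$ must again be transferred from ${\bx}^\star$ (where they are $\le \sigma$) to $\tilde{\bx}^\star$ via $L$-smoothness and the part-(1) bound on $\|\tilde{\bx}^\star - {\bx}^\star\|$, which introduces one more factor of $L/\mu$ and explains the $\|\tilde\nu^{(\zeta)} - \pi\|_{\text{TV}}^2 \sigma^2 L / \mu^3$ shape of the second term (the square on the TV distance arising because the self-consistent bootstrap $\|\tilde{\bx}^\star-{\bx}^\star\| \lesssim \|\nabla f(\tilde{\bx}^\star)\|/\mu \lesssim (L/\mu)\|\tilde{\bx}^\star - {\bx}^\star\| \cdot \|\cdot\|_{\text{TV}}$ must be solved for $\|\tilde{\bx}^\star - {\bx}^\star\|$, or because the gradient bound itself already carries one TV factor and the smoothness transfer contributes a second).

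The main obstacle I anticipate is not the convex-analytic sandwich (routine) but rather \emph{the constant-stepsize Markov-SGD steady-state bound with the correct spectral-gap dependence}: one must carefully handle that the gradient samples along the chain are drawn from a Markov chain with kernel $P^{(\zeta)}_{\text{chain}}$ rather than i.i.d.\ from $\pi^{(\zeta)}_{\text{chain}}$, so the noise is correlated and the relevant concentration/variance control is mixing-time-based; getting the bound to come out as $\eta L\sigma^2/(\gamma_{\text{chain}}^{(\zeta)}\mu^2)$ rather than with a worse mixing-time power requires either a careful Lyapunov/drift argument or a direct appeal to an off-the-shelf theorem whose hypotheses (strong convexity, smoothness, bounded gradients at the optimum, reversibility or at least a spectral gap of the kernel) must be verified for $P^{(\zeta)}_{\text{chain}}$. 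The second delicate point is the self-referential nature of the $\tilde{\bx}^\star$-shift bound: because the gradient bound $\sigma$ is assumed only at ${\bx}^\star$, every estimate at $\tilde{\bx}^\star$ must be bootstrapped through $\|\tilde{\bx}^\star - {\bx}^\star\|$ itself, and one has to ensure the resulting implicit inequality can be solved (i.e.\ that the bootstrap constant is $<1$, e.g.\ requiring the TV distance or $L/\mu$-adjusted quantity to be small enough) to close the loop and extract the stated explicit bound.
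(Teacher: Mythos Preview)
Your proposal is essentially correct, and for Part~(1) it is in fact cleaner than the paper's own argument. The paper derives the upper bound via the strong-convexity \emph{function-value} inequality $f({\bx}^\star)\ge f(\tilde{\bx}^\star)+\langle\nabla f(\tilde{\bx}^\star),{\bx}^\star-\tilde{\bx}^\star\rangle+\tfrac{\mu}{2}\|{\bx}^\star-\tilde{\bx}^\star\|^2$ together with Cauchy--Schwarz, which actually yields a constant $2/\mu$ rather than the $1/\mu$ in the statement; your route through the gradient-monotonicity inequality $\mu\|\bx-{\bx}^\star\|\le\|\nabla f(\bx)-\nabla f({\bx}^\star)\|$ gives the stated constant directly. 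The lower bound via $L$-Lipschitz gradients and $\nabla f({\bx}^\star)=0$ is identical in both.

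For Part~(2) the approaches diverge more substantially. The paper does \emph{not} carry out the triangle-inequality decomposition you propose; instead it simply invokes an off-the-shelf constant-stepsize Markov-chain SGD bound (specifically \cite[Theorem~1]{10619692}), which already handles both the mixing-based variance term \emph{and} the stationary-distribution mismatch in one stroke, substitutes $P_{\text{chain}}^{(\zeta)}$, $\pi_{\text{chain}}^{(\zeta)}$, and the target $\pi$, and then sends $T\to\infty$ to kill the transient $(1-\gamma\mu)^T$ term. Your decomposition $\E\|\tilde{\bx}_T-{\bx}^\star\|\le\E\|\tilde{\bx}_T-\tilde{\bx}^\star\|+\|\tilde{\bx}^\star-{\bx}^\star\|$ is a perfectly valid alternative and is conceptually more transparent, but it puts the burden on you to reproduce the spectral-gap variance bound for the first term and to extract the $\|\cdot\|_{\text{TV}}^2\sigma^2L/\mu^3$ scaling for the second; the bootstrap concern you flag (transferring the gradient bound from ${\bx}^\star$ to $\tilde{\bx}^\star$ and closing the self-referential inequality) is real and is precisely what the cited theorem absorbs. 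In short: the paper buys brevity by outsourcing Part~(2) entirely to a reference, whereas your route is more self-contained but requires you to redo a nontrivial piece of Markov-SGD analysis that the paper deliberately avoids.
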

\begin{proof}
These inequalities are derived from existing results, and the full proof is given in Appendix~E. A discussion of the extension to multiple Pac-Man nodes is provided in Appendix~G-D.
\end{proof}

\begin{remark}
An interesting problem is whether the sampling bias caused by Pac-Man attacks can be reduced using importance sampling \cite{KahnMarshall1953} and update reweighting \cite{HorvitzThompson1952}. However, in a fully decentralized network, the location of the Pac-Man node and the global transition probability matrix are unknown. Standard reweighting factors are therefore unavailable, while estimating them from local observations may introduce substantial error and amplify adversarial updates. Developing reliable bias-correction methods for this setting is left for future work.
\end{remark}

Since temporary extinctions may cause the RW population to become inactive, not every physical time slot contributes to learning. To characterize its real-time convergence behavior, we focus on the number of local-update iterations completed by time $t$. For an infinite chain of RWs, let $\itr_t$ denote this number. The ratio $\frac{\E[\itr_t]}{t}$ measures the expected fraction of time steps that effectively contribute to learning, and therefore partially characterizes the convergence speed in real time. When $\itr_t=t$, every time step produces a local update, so the algorithm progresses at the same rate as classical RW-SGD. When $\itr_t<t$, some time steps are lost due to RW termination and regeneration, and the real-time convergence is slowed down is slowed down according to the effective update ratio.

To characterize the communication overhead, let $C_t$ denote the total number of transmissions on the graph up to time $t$. Since each local update is associated with one RW transmission, $\itr_t$ also counts the number of transmissions generated by a chain. Note that $C_t$ counts all transmissions generated by all active RWs on the graph. Therefore, we have $\itr_t \le C_t$.

\begin{proposition}[Convergence speed \& communication overhead]\label{pro:NumIter}
Under the same setting in Theorem~\ref{thm:Peak}, with $A_u=A\ge 1$, the \cwl algorithm guarantees that the expected number of local-update iterations up to time $t$ satisfies
\begin{align}
& 1\ge \frac{1}{t}\E[\itr_t],\,\, \lim_{t\to\infty}\frac{1}{t}\E[\itr_t] \ge \frac{\frac{N}{\zeta}}{\frac{N}{\zeta} + A - 1 + \frac{1}{q}},\label{eq:TradeoffRecovery}\\
& \lim_{t\to\infty}\frac{1}{t}\E[C_t] \ge \frac{\frac{N}{\zeta}}{\frac{N}{\zeta} + A - 1 + \frac{1}{q}}.\label{eq:CommunicationOverhead}
\end{align}
\end{proposition}
\begin{proof}
(Roadmap) The upper bound $\E[\itr_t]\le t$ is immediate, since at most one learning iteration can occur per time slot. To obtain the lower bound, we analyze a virtual worst-case system in which at most one RW can be active RW at any time. Whenever a new RW would be created while another is active, the creation is suppressed; when no RW is active and a creation trial succeeds, a single RW is started. The expected number of local-update iterations in this restricted system therefore provides a lower bound on $\E[\itr_t]$. Since $\itr_t \le C_t$, a lower bound on $\mathbb{E}[C_t]$ follows immediately. The full proof is provided in Appendix~F.
\end{proof}

\begin{remark}[Communication--convergence tradeoff]\label{remark:Tradeoff}
The lower bounds in \eqref{eq:TradeoffRecovery} and \eqref{eq:CommunicationOverhead} reflect the role of the regeneration parameters and partially reveal a communication--convergence tradeoff. After a temporary extinction, RW regeneration is governed by the threshold $A$ and the creation probability $q$. A larger $A$ or a smaller $q$ makes regeneration more conservative, which lengthens the expected inactive period and decreases the lower bounds on both the effective update ratio $\E[\itr_t]/t$ and the normalized communication cost $\E[C_t]/t$. Conversely, a smaller $A$ or a larger $q$ leads to more aggressive regeneration, which increases the fraction of time slots used for learning but also increases the communication cost. Thus, aggressive regeneration improves real-time learning progress at the expense of higher communication overhead, whereas conservative regeneration reduces communication but slows down real-time convergence.
\end{remark}

\begin{remark}[Selection of the creation threshold $A$]
A complete design rule for choosing $A$ remains an open problem. The optimal choice of $A$ may depend on the interaction among RW termination, recovery time, communication cost, and real-time convergence progress. Thus, it is intractable to analyze the optimal $A$ directly. In practice, $A$ may be tuned adaptively by monitoring empirical observations. Such an online procedure does not give a closed-form optimal threshold, but it may provide an acceptable data-driven choice of $A$. We leave a rigorous adaptive selection rule for $A$ as future work.
\end{remark}

\begin{remark}
Equation~\eqref{eq:TradeoffRecovery} implies that although the extinction events slow down convergence, the delay (the length of horizontal segments) grows at most linearly with $t$. This means that RW-SGD performs at least a linear number of updates with respect to $t$, and therefore will eventually converge given a long time horizon.
\end{remark}

\section{Simulations}\label{sec: simulations}
In this section, we present simulation results to validate the theoretical findings in Sections~\ref{sec:Boundedness} and \ref{sec:Convergence}. 

\subsection{Simulation Setup}\label{subsec: Setup}
\paragraph{Graph Settings}
We consider $4$ connected graphs of $100$ nodes, with $1$ Pac-Man node and $99$ benign nodes, including: (i) a regular graph with degree $d=99$ (complete graph), (ii) a random regular graph with degree $d=8$ (expander graph \cite{expanderdegree}), (iii) a regular graph with degree $d=2$ (ring topology), and (iv) an \ER graph with the edge probability $p=0.1$. The target sampling distribution is set to be uniform, i.e., $\pi = (\frac{1}{100},\frac{1}{100},\cdots,\frac{1}{100})$, and the transition matrix $P$ is obtained by the Metropolis-Hastings algorithm \cite{doi:10.1137/08073038X}. We assume $A_u=A$ for $u\in\cB$. We set both the forking and termination probabilities to $1$, i.e., $q=\zeta=1$. The initial number of RWs $z_0=10$.

\paragraph{ Learning Settings} For the distributed learning problem defined in \eqref{eq:goal0}, we evaluate our approach on both synthetic and public benchmark datasets. 

\textbf{Synthetic dataset}. We consider a decentralized linear regression task, where each node $u$ minimizes a local mean squared error (MSE) loss of the form:
\begin{align*}
f_u({\bf x}) = ({\bf w}^T{\bf x}+b - y_u)^2,
\end{align*}
where ${\bf x}$ is the input feature, $y_u$ is the target label at node $u$, ${\bf w}$ is the weight vector, and $b$ is the bias. We assume each node holds only a single data point. This local objective $f_u({\bf x})$ is strongly convex and $L$-smooth.

\textbf{Public benchmark dataset}. We use the standard MNIST handwritten digit dataset \cite{deng2012mnist}. The dataset is evenly partitioned into $100$ disjoint subsets, with each node is assigned a unique subset. We consider both \iid and non-\iid data partitioning schemes for distributing data across network nodes. In the \iid case, the dataset is uniformly and independently divided into $100$ disjoint subsets, one for each node. In the non-\iid case, data heterogeneity is introduced by sampling from a Dirichlet distribution \cite{hsu2019measuring}: The concentration parameter $\alpha$ controls the degree of heterogeneity: as $\alpha \to 10$, the partitioning approaches the \iid case; as $\alpha \to 0$, the data becomes highly non-\iid Throughout this paper, we set $\alpha = 0.5$, which corresponds to a moderate level of non-\iid partitioning. Under both partitioning schemes, whenever a RW visits a node, it uniformly samples a mini-batch of size $B=256$ from the node's local data to perform a training update. Each node $u$ minimizes the empirical cross-entropy loss over its local dataset $\cD_u$:
\begin{align*}
f_u(w) = \frac{1}{|\cD_u|}\sum_{({\bf x}, y)\in\cD_u}\ell(w; {\bf x}, y),
\end{align*}
where $w$ denotes the model parameters, $\ell(w; {\bf x}, y)$ is the standard cross-entropy loss function, and $\cD_u$ is the local subset assigned to node $u$. We adopt the Adam optimizer from the PyTorch library ($\mathtt{torch.optim.Adam}$) for model training.

\paragraph{Baseline}
To the best of our knowledge, the algorithm most closely related to our work is the \DeCa algorithm \cite{selfdup, egger2024self}. \DeCa is a duplication-based algorithm, whereas our proposed algorithm is creation-based.\footnote{Another duplication-based algorithm introduced in \cite{selfdup, egger2024self} is the \MP algorithm, which has been less explored in the literature. We do not include the \MP algorithm as a baseline in this paper for two reasons: (i) the primary difference between \DeCa and \MP lies in their duplication mechanisms, and \DeCa has been shown to be more robust than \MP in \cite{selfdup, egger2024self}; (ii) when applied to decentralized learning settings, both algorithms exhibit similar convergence performance, as reported in \cite{AC}.}

Another baseline we consider is the classical \textsc{Gossip-based SGD} \cite{tsitsiklis2003distributed, pmlr-v119-koloskova20a}. In this scheme, each node transmits its locally updated model to all of its neighbors at every iteration, and model parameters are updated via neighborhood averaging. We incorporate the same adversarial setting as before: the Pac-Man node independently terminates each incoming model update with the termination probability $\zeta$. As a result, the Pac-Man node is unable to reliably incorporate all information from its neighbors, leading to biased and incomplete aggregation.

Fig.~\ref{fig:zetaregular} compares \cwl and \textsc{Gossip-based SGD} in terms of communication overhead, where the communication overhead is measured by the number of transmissions. The loss curve of \cwl (blue curve) decays much faster than that of \textsc{Gossip-based SGD} (orange curve), showing that \cwl achieves faster convergence under the same level of communication overhead. In this experiment, we set $\zeta=0.8<1$ and choose the creation threshold as $A=55$, and we observe that the number of RWs fluctuates around the initial value $z_0=10$. The cumulative communication overhead up to time $t$ is calculated by $\sum_{\tau\leq t} Z_\tau$\footnote{In addition, as discussed in Fig.~\ref{fig:peakno}, the creation probability $q$ can be appropriately chosen to keep the RW population small, thereby reducing the communication overhead.}. In contrast, for \textsc{Gossip-based SGD}, the number of communications (on a complete graph) at each time step is fixed and equals $N(N-1)$. Thus, the horizontal axis in Fig.~\ref{fig:zetaregular} represents the cumulative communication overhead measured in transmissions. The results suggest that \textsc{Gossip-based SGD} requires substantially more transmissions to achieve the same loss level, whereas \cwl can preserve and propagate informative updates more efficiently through active RWs.

Due to its slow convergence, we find \textsc{Gossip-based SGD} to be an unsuitable baseline in the Pac-Man setting, and we therefore exclude it from the remaining simulation results.

\begin{figure}[htbp]
\centering
\includegraphics[width=0.7\linewidth]{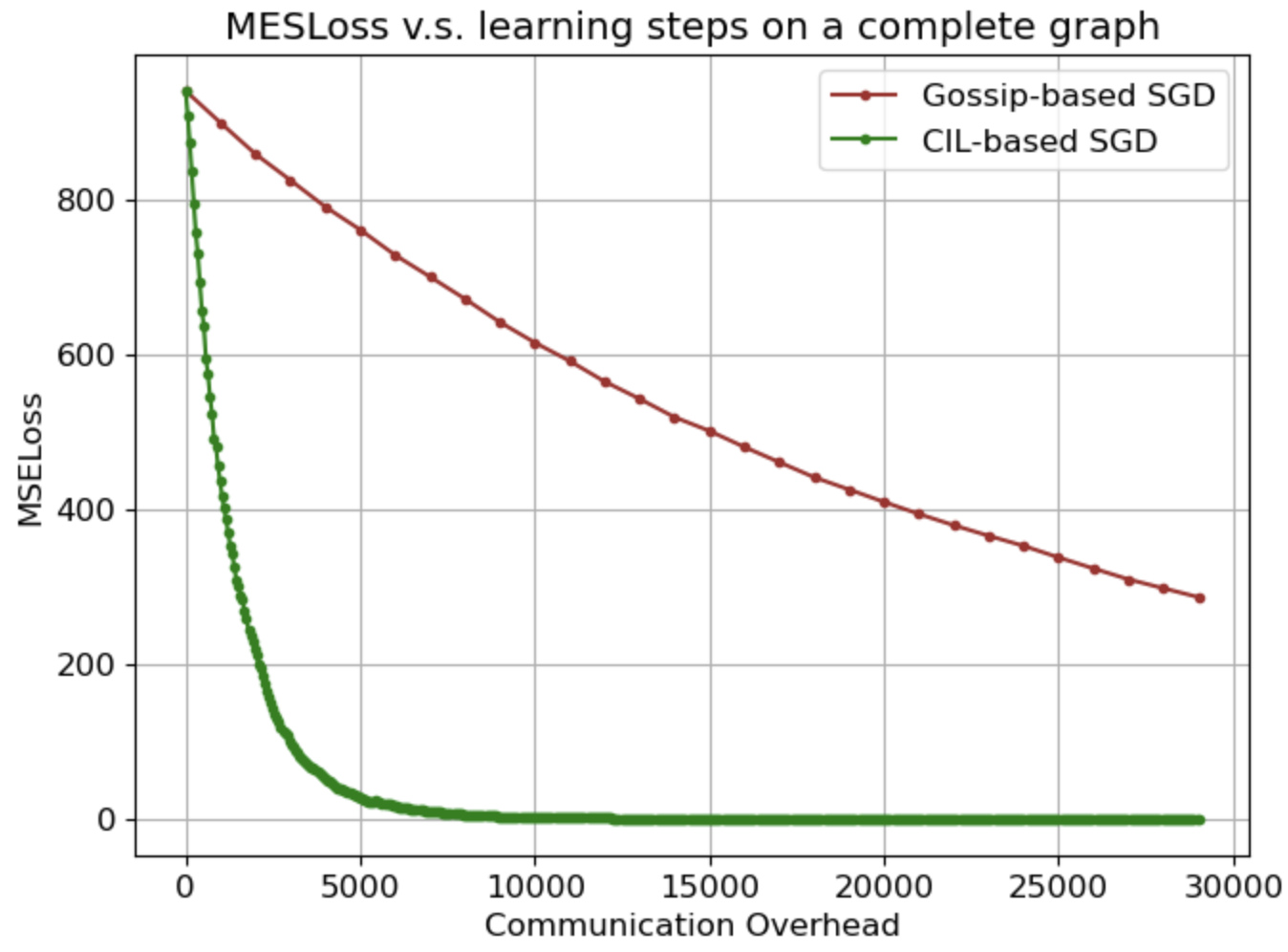}
\caption{Loss v.s. learning steps on a complete graph: comparison between \cwl-based and gossip-based SGD.}
\label{fig:zetaregular}
\end{figure}

\subsection{Boundedness}\label{subsec: Boundedness}

In the beginning of the subsection, we validate Theorem~\ref{thm:FiniteRWs}, which asserts that the number of RWs remains bounded under the \cwl algorithm. As shown in Fig.~\ref{fig:Boundedness}, the RW population process $\set{Z_t}_{t\in\Z}$ remains bounded on all tested graphs. The $y$-axis represents the number of RWs, and the $x$-axis denotes the time steps. We present a single realization of $\set{Z_t}_{t\in\Z}$. On each graph, when the threshold $A$ is small (noting that the definitions of ``small'' and ``large'' vary by graph), the RW population fluctuates at a relatively high level region. In contrast, for large $A$, the RW population may temporarily go extinct but subsequently recovers. In all cases, the population remains bounded, confirming the theoretical analysis.

\begin{figure}[htbp]
  \centering
  \begin{minipage}[b]{0.45\linewidth}
    \centering
    \includegraphics[width=\linewidth]{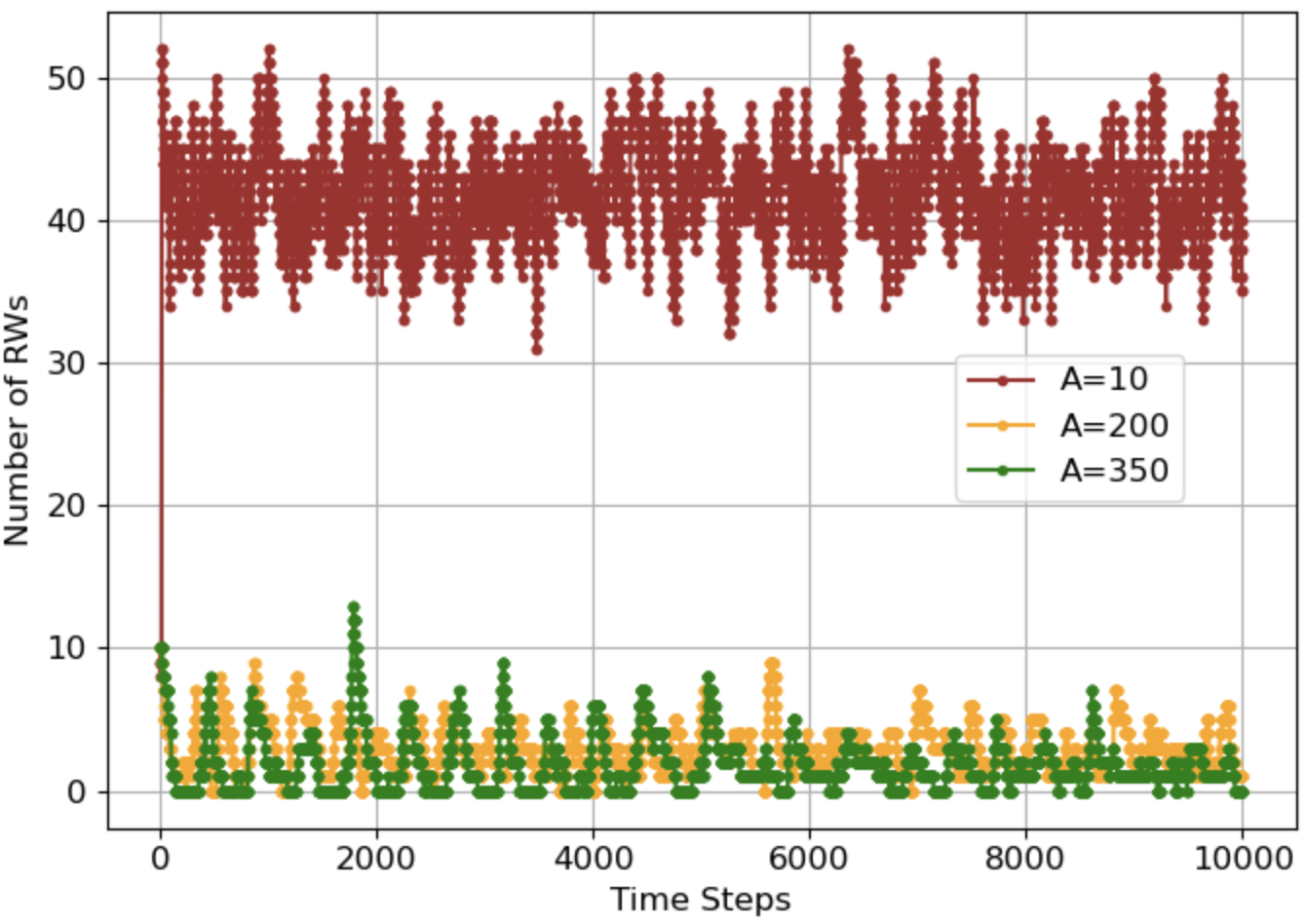}
    \caption*{(a) complete graph}
  \end{minipage}
  \hfill
  \begin{minipage}[b]{0.45\linewidth}
    \centering
    \includegraphics[width=\linewidth]{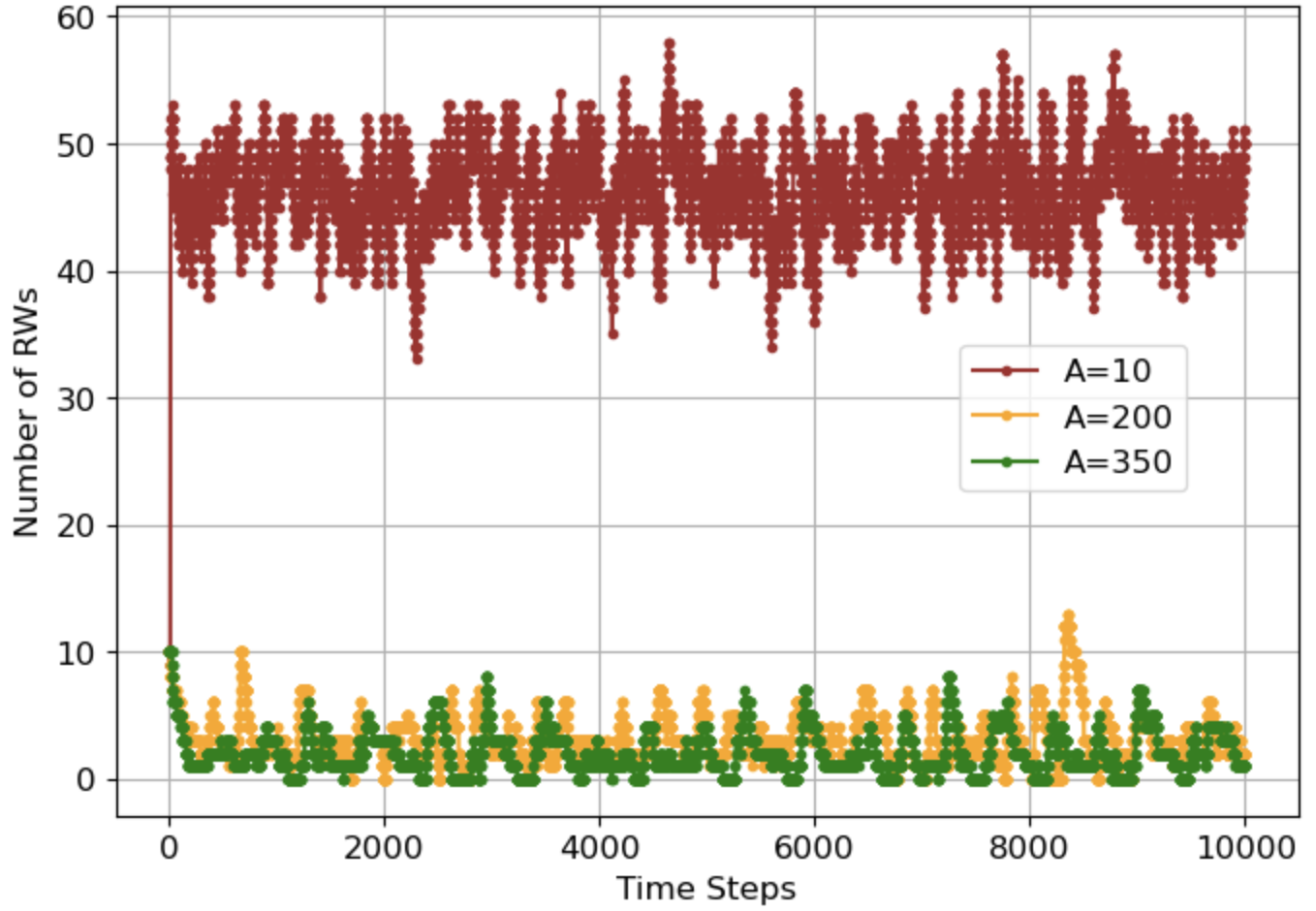}
    \caption*{(b) random regular graph}
  \end{minipage}
  \hfill
  \begin{minipage}[b]{0.45\linewidth}
    \centering
    \includegraphics[width=\linewidth]{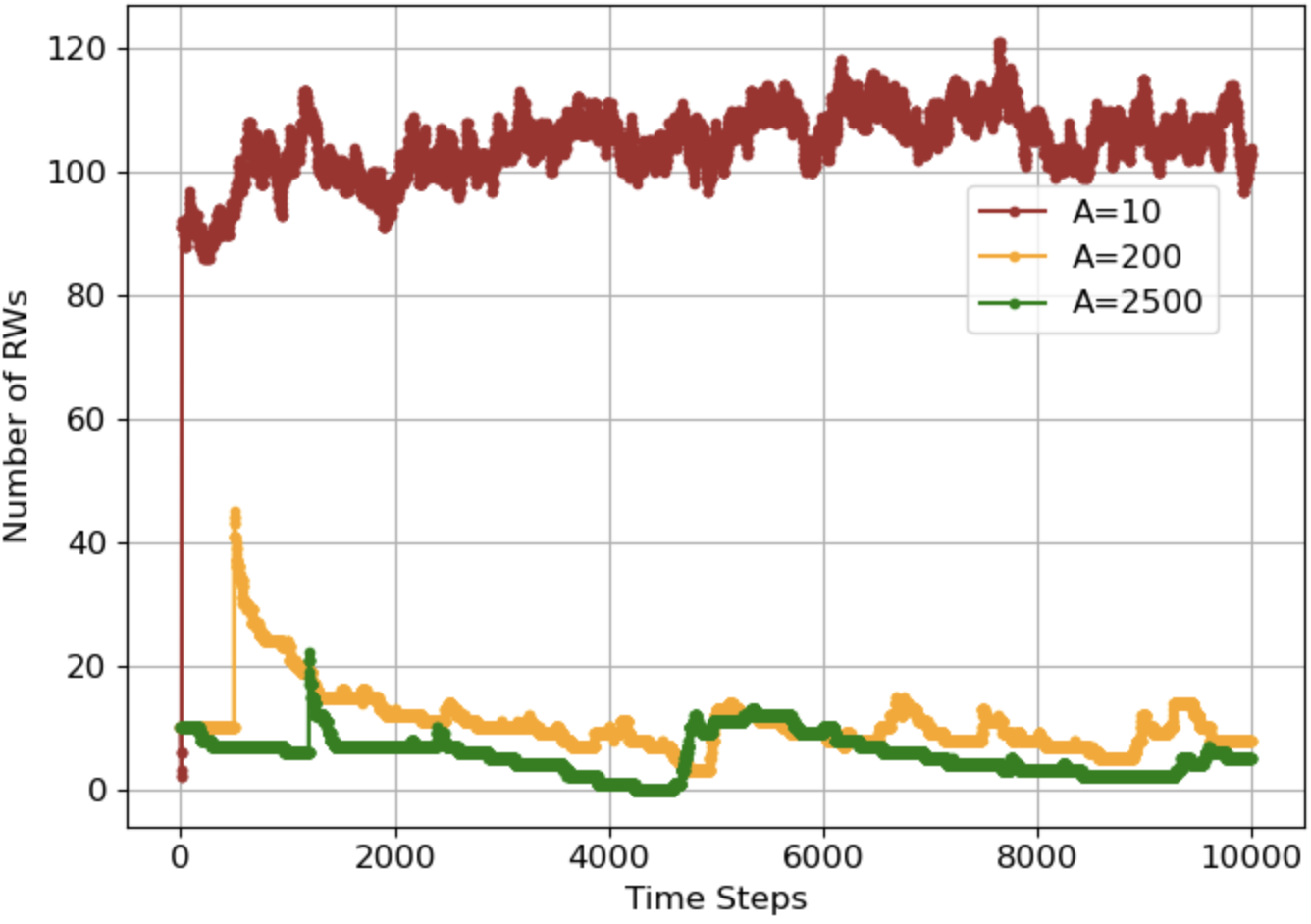}
    \caption*{(c) ring topology}
  \end{minipage}
    \hfill
  \begin{minipage}[b]{0.45\linewidth}
    \centering
    \includegraphics[width=\linewidth]{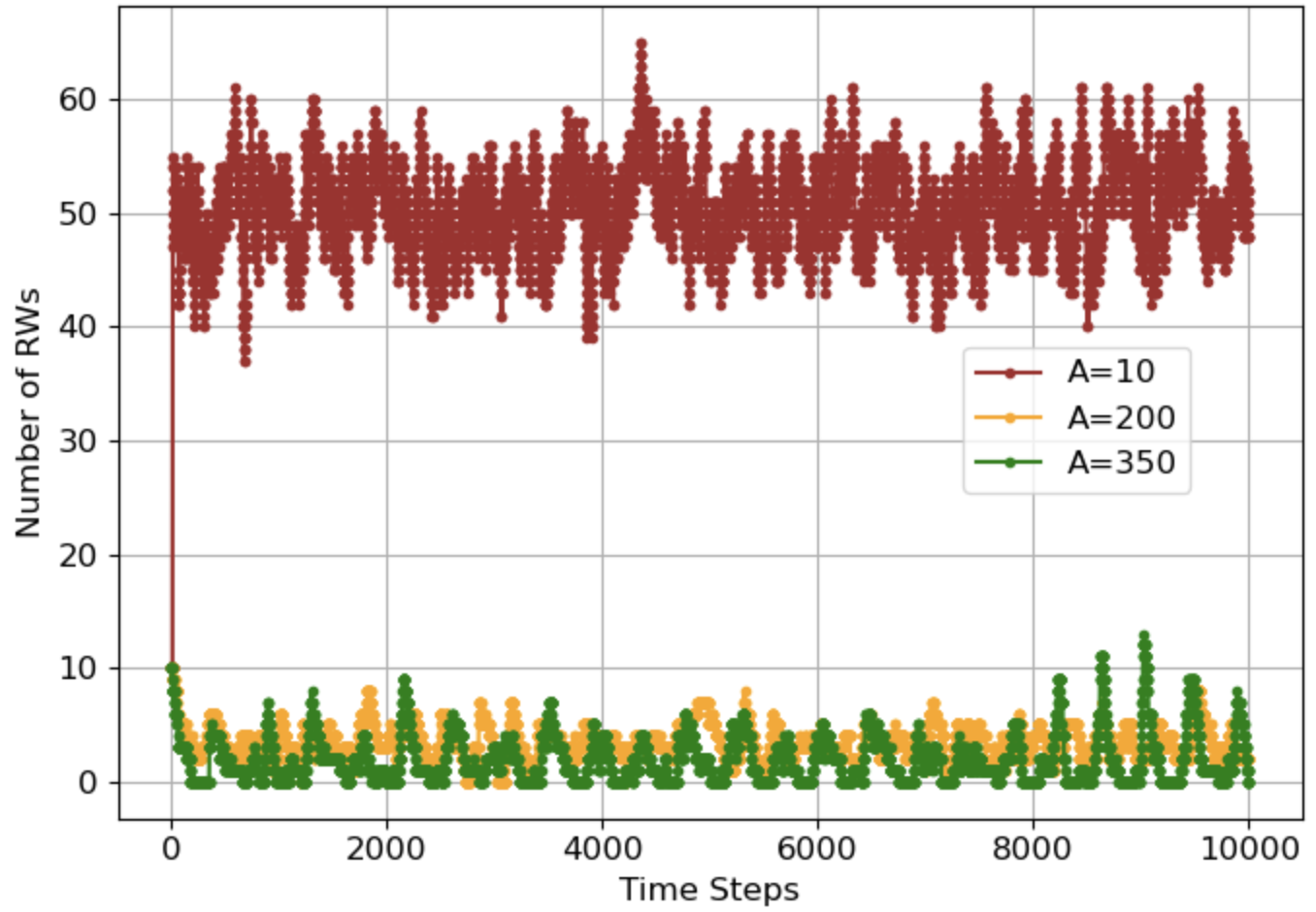}
    \caption*{(d) Erd\H{o}s--R\'enyi graph}
  \end{minipage}
  \caption{Number of RWs over time on different graphs.}
  \label{fig:Boundedness}
\end{figure}

Next, we fix the creation threshold $A=10$ and vary the termination probability $\zeta$  instead of fixing it at $1$. In Fig.~\ref{fig:zeta}, we set $\zeta \in \set{0.1, 0.01, 0.001}$ and present a \textit{single} realization of the RW population process $\set{Z_t}_{t\in\Z}$. From the figure, we observe that the population size remains bounded across different graph types, even when the termination probability $\zeta$ is very small. This observation is consistent with Theorem~\ref{thm:FiniteRWs}.

\begin{figure}[htbp]
  \centering
  \begin{minipage}[b]{0.45\linewidth}
    \centering
    \includegraphics[width=\linewidth]{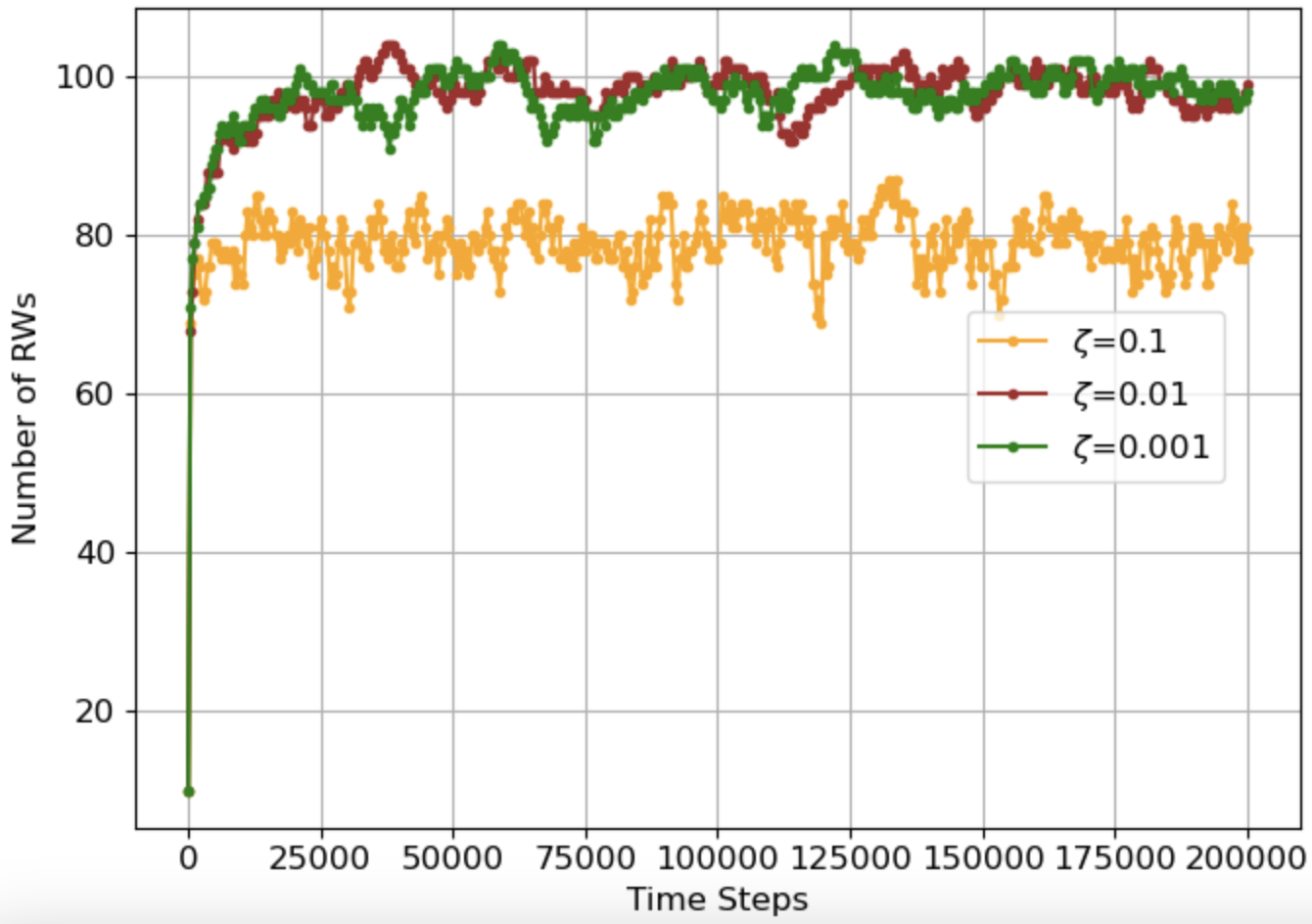}
    \caption*{(a) complete graph}
  \end{minipage}
  \hfill
  \begin{minipage}[b]{0.45\linewidth}
    \centering
    \includegraphics[width=\linewidth]{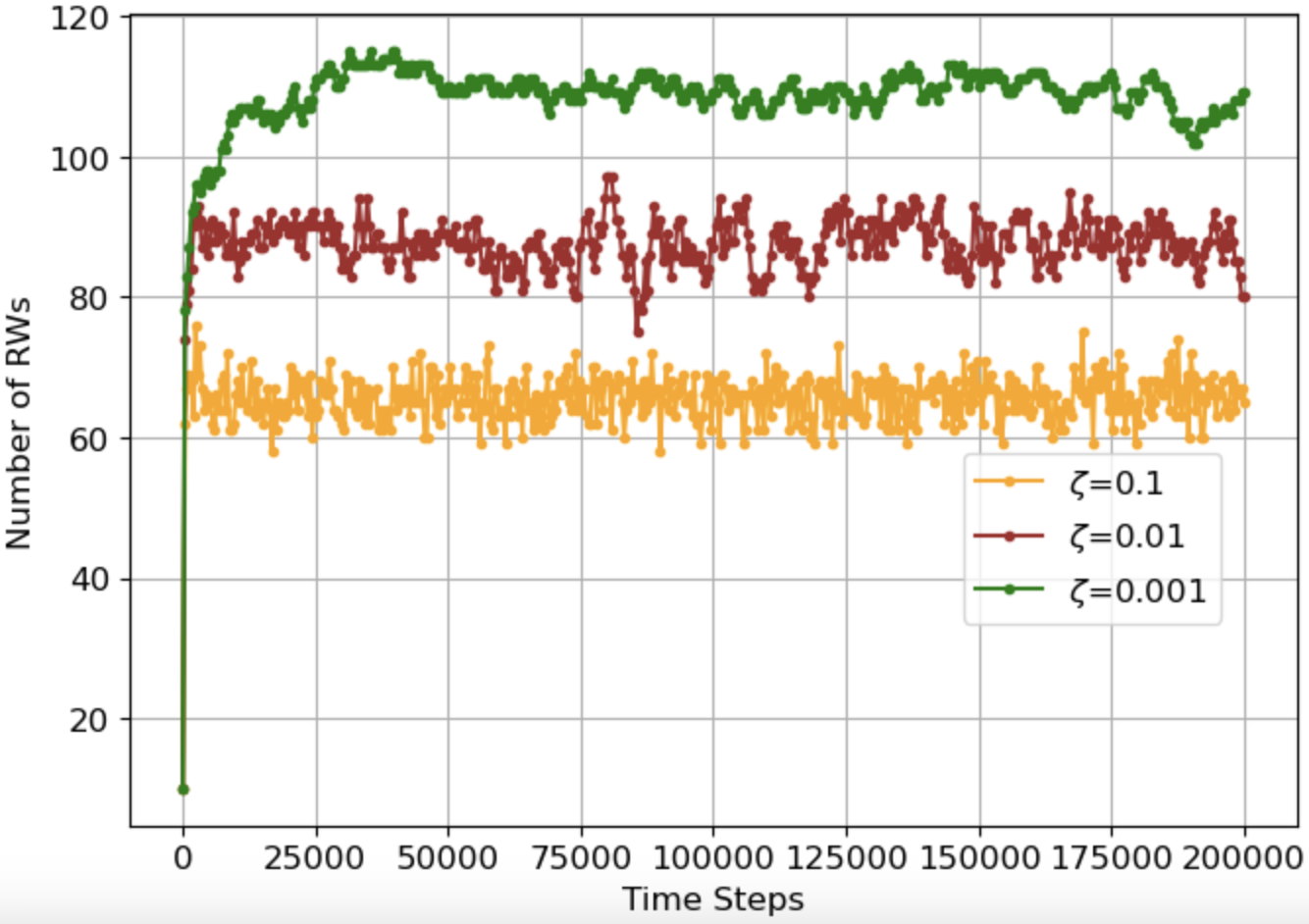}
    \caption*{(b) random regular graph}
  \end{minipage}
  \hfill
  \begin{minipage}[b]{0.45\linewidth}
    \centering
    \includegraphics[width=\linewidth]{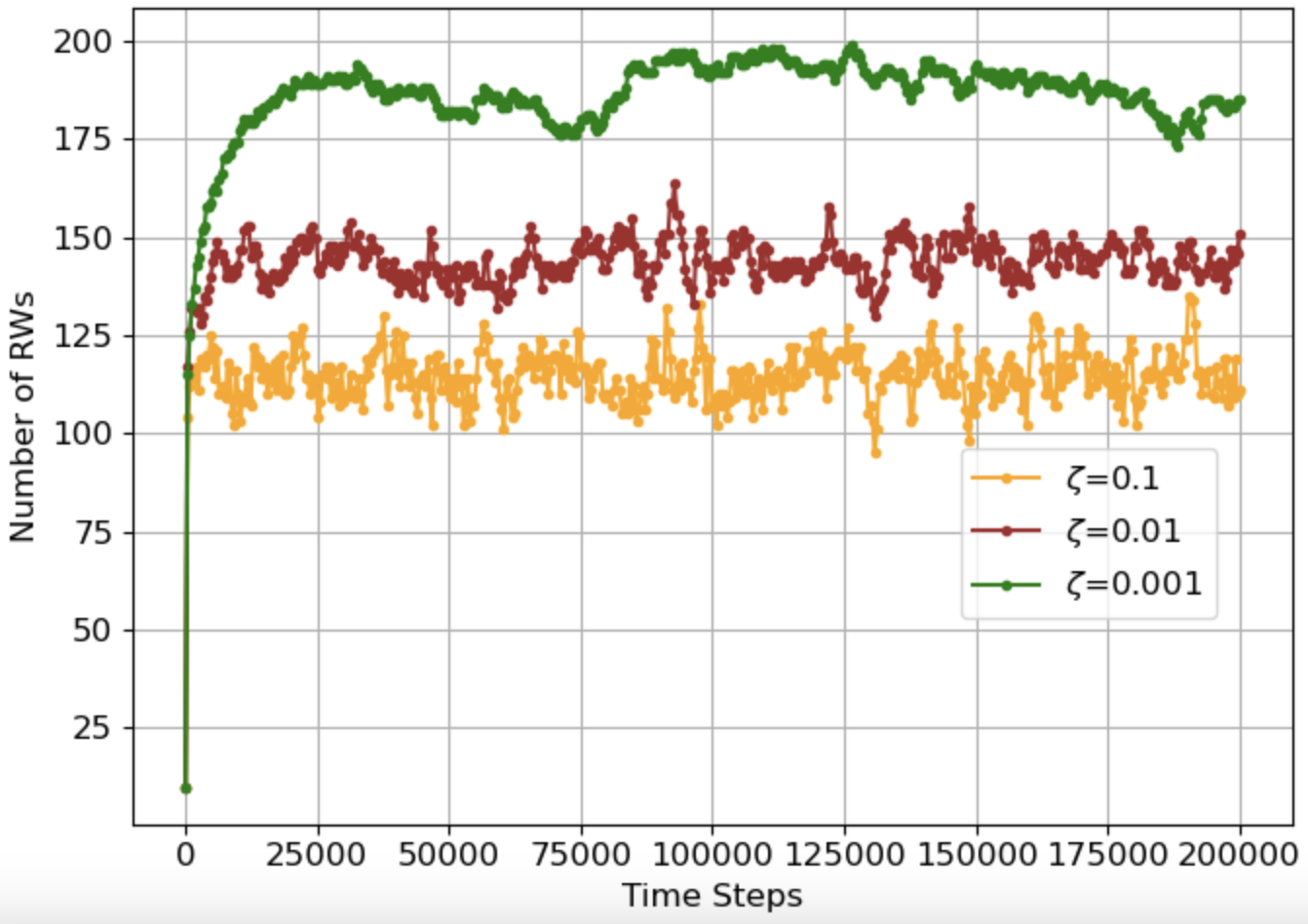}
    \caption*{(c) ring topology}
  \end{minipage}
    \hfill
  \begin{minipage}[b]{0.45\linewidth}
    \centering
    \includegraphics[width=\linewidth]{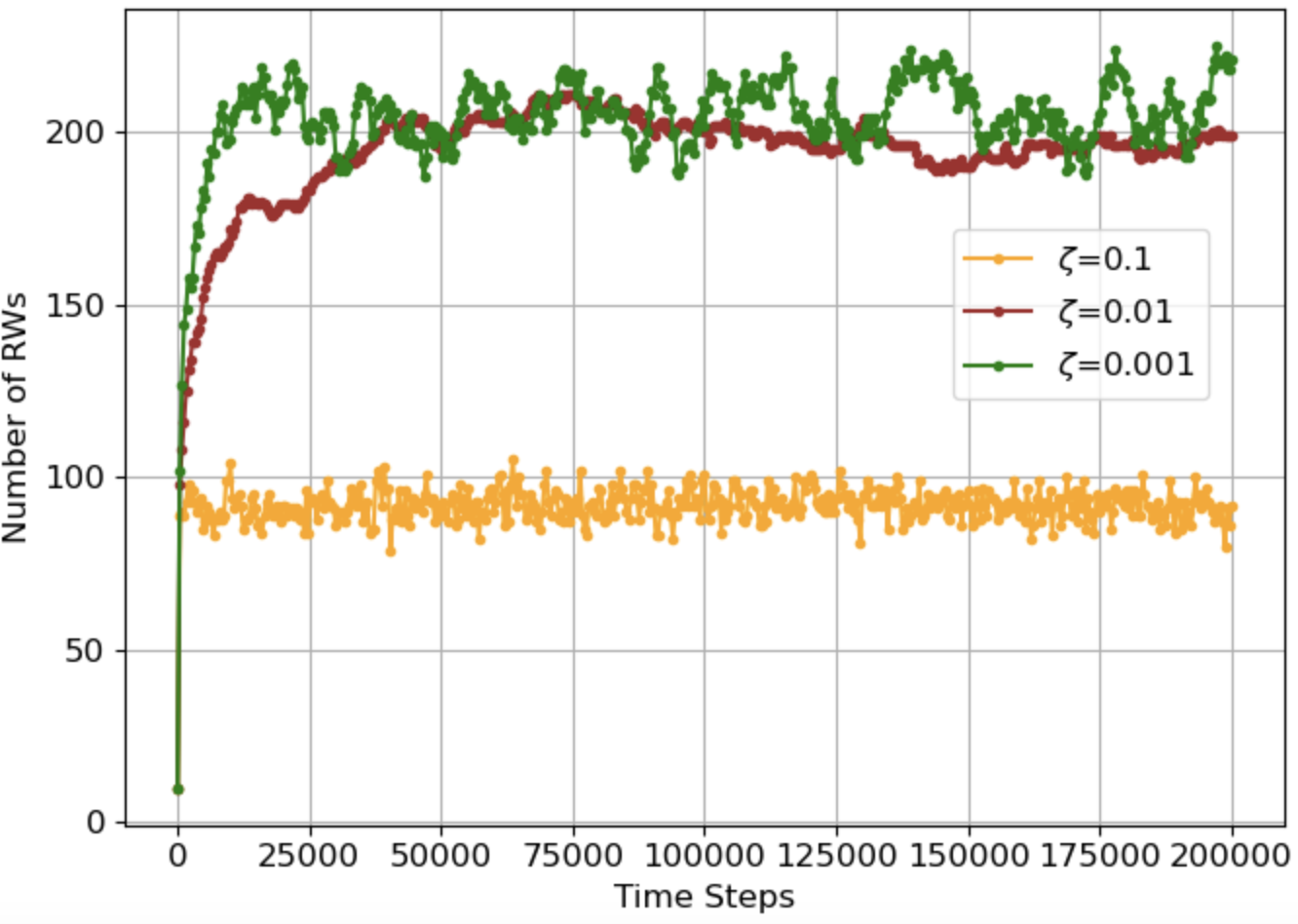}
    \caption*{(d) Erd\H{o}s--R\'enyi graph}
  \end{minipage}
  \caption{Number of RWs over time on different graphs when $\zeta\in\set{0.1, 0.01, 0.001}$.}
  \label{fig:zeta}
\end{figure}

The estimated $\bar{Z}^\star$ and the upper bound from Theorem~\ref{thm:Peak} are illustrated in Table~\ref{tab:Peak} and Fig.~\ref{fig:peakno}. Although Theorem~\ref{thm:Peak} formally applies only to the complete graph, we additionally evaluate $\bar{Z}^\star$ numerically on other graphs with small diameter, including regular and \ER graphs. Ring graphs are excluded from this experiment because their creation probability $q$ requires a different scaling regime and is therefore not directly comparable.

In Table~\ref{tab:Peak}, we fix the initial population $z_0=1$, termination probability $\zeta=1$, and creation threshold $A=1$. For different choices of the creation probability $q$,  the empirical $\bar{Z}^\star$ remains strictly below the theoretical upper bound, thereby validating Theorem~\ref{thm:Peak}.

\begin{table}[h]
\centering
\begin{tabular}{|c|c|c|c|c|c|}
\hline
 & $q=1$& $q=\frac{1}{N}$ & $q=\frac{1}{N^2}$ \\
\hline
$\bar{Z}^\star\rule{0pt}{2.5ex}$ in Complete& $205.3$ & $45.2$ & $1.13$  \\
\hline
$\bar{Z}^\star\rule{0pt}{2.5ex}$ in Regular & $211.9$ & $44.6$ & $1.25$  \\
\hline
$\bar{Z}^\star\rule{0pt}{2.5ex}$ in \ER & $208.1$& $47.5$ & $1.37$ \\
\hline
Upper Bound
Theorem~\ref{thm:Peak}& $O(10^4)$ & $O(10^2)$ & $O(1)$  \\
\hline
\end{tabular}
\caption{The expected peak number of RWs, $\bar{Z}^\star$, evaluated on complete, regular, and \ER graphs.}
\label{tab:Peak}
\end{table}

Fig.~\ref{fig:peakno} shows that under the scaling $q=\frac{1}{N^2}$ and with $z_0=10$, the expected number of RWs $\E[Z_t]$ (and consequently $\bar{Z}^\star$) on complete, regular, and \ER graphs is always below the bound in \eqref{eq:barZUB} for all tested network size $N\in\set{20, 50, 100, 200}$. This provides strong empirical evidence that the peak RW population can be made independent of the network size $N$ even beyond the complete-graph setting. It is worth noting that the bound in \eqref{eq:barZUB} is inherently conservative, since it is derived under the worst-case choice $A_u=1$ for all $u\in\cB$, which maximizes the RW creation rate. Therefore, \eqref{eq:barZUB} serves as a universal upper bound for all threshold choices with $A_u>1$, but may be loose for larger thresholds, where RW creation becomes less frequent.

\begin{figure}[htbp]
  \centering
  \begin{minipage}[b]{0.45\linewidth}
    \centering
    \includegraphics[width=\linewidth]{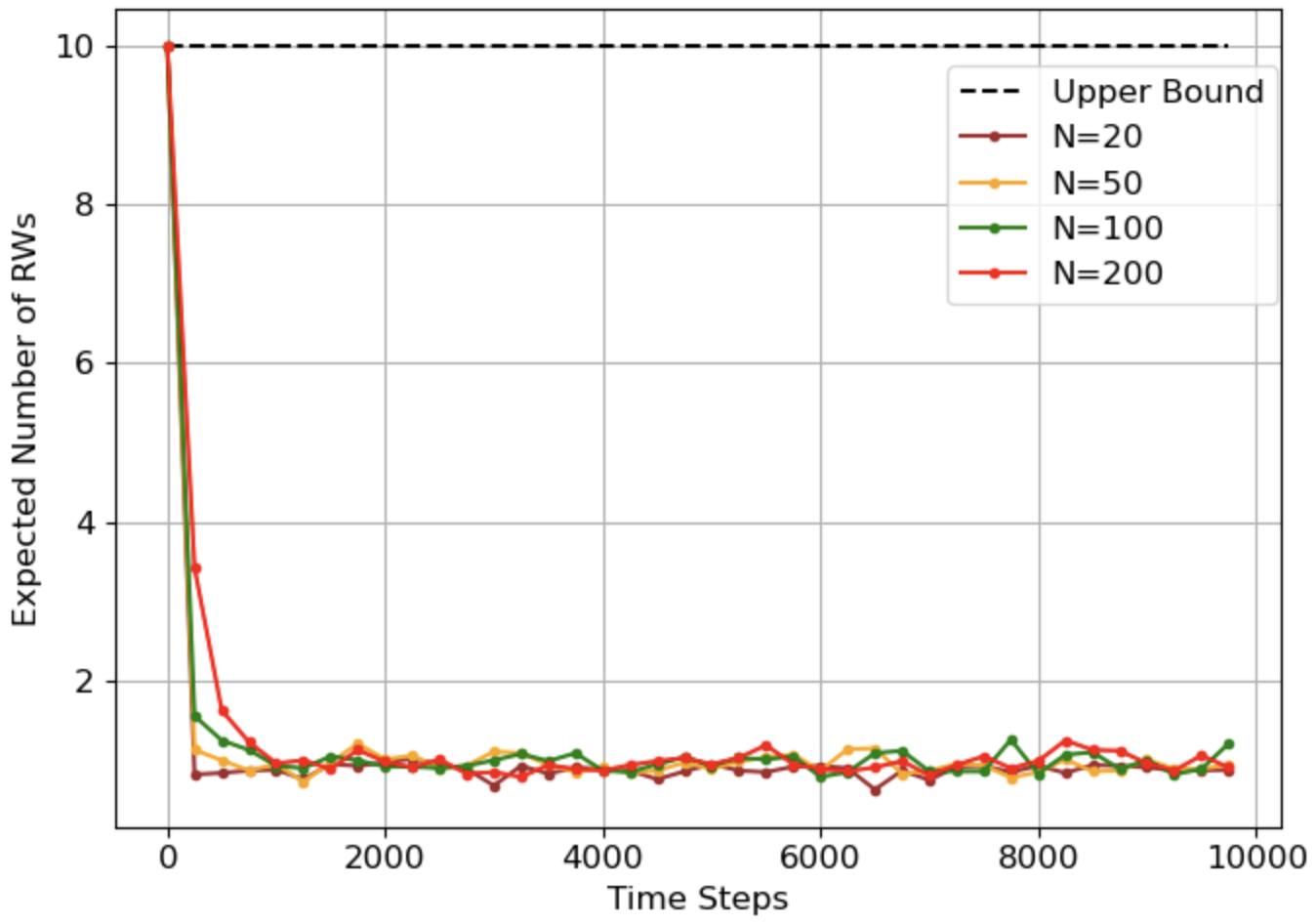}
    \caption*{(a) complete graph}
  \end{minipage}
  \hfill
  \begin{minipage}[b]{0.45\linewidth}
    \centering
    \includegraphics[width=\linewidth]{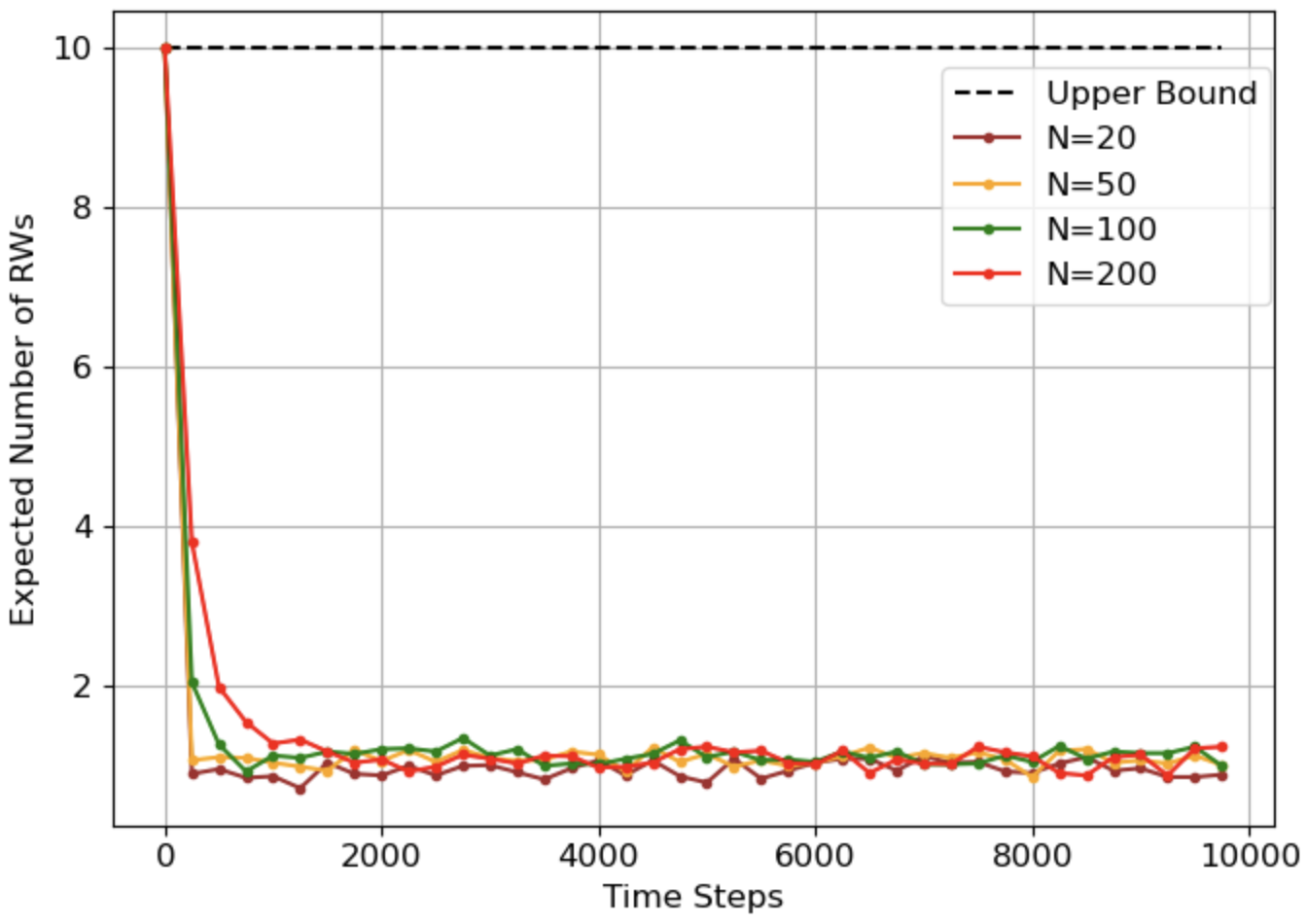}
    \caption*{(b) random regular graph}
  \end{minipage}
    \hfill
  \begin{minipage}[b]{0.45\linewidth}
    \centering
    \includegraphics[width=\linewidth]{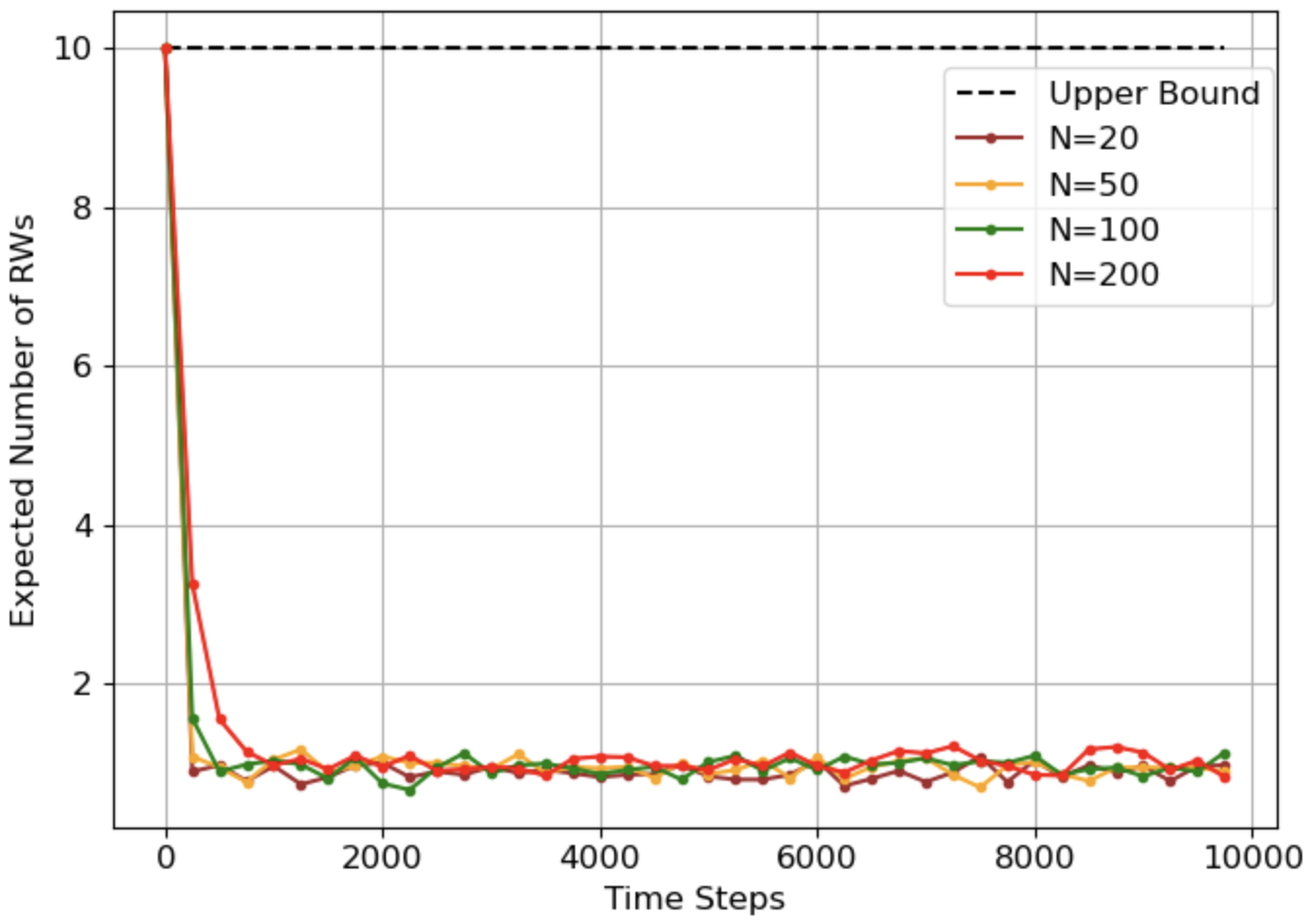}
    \caption*{(c) \ER graph}
  \end{minipage}
  \caption{Expected number of RWs over time on different graphs under the scaling $q=\frac{1}{N^2}$ for $N\in\set{20, 50, 100, 200}$, compared with the upper bound in \eqref{eq:barZUB}.}
  \label{fig:peakno}
\end{figure}

In the end of this subsection, we present a comparison between the number of RWs generated by the \cwl algorithm and by the baseline \DeCa algorithm. To ensure a fair comparison, the two  mechanisms adopt the same RW transition matrix $P$, so their movement behavior remains identical. The results are shown in Fig.~\ref{fig:DeCa}.  We observe that the \cwl algorithm outperforms the \DeCa algorithm in terms of robustness. Under the \DeCa algorithm, the permanent extinction of the RW population is highly sensitive to the algorithm parameter $\epsilon$. If $\epsilon$ is not chosen properly, the RW population eventually goes extinct (brown curves). In contrast, under the \cwl algorithm, the permanent extinction never occurs. Regardless of how large the creation threshold $A$ is, if the population goes extinct, after a certain period of time, the RW population will always revive.

\begin{figure}[htbp]
  \centering
  \begin{minipage}[b]{0.45\linewidth}
    \centering
    \includegraphics[width=\linewidth]{Fig/LCDcomplete.png}
    \caption*{(a) complete graph}
  \end{minipage}
  \hfill
  \begin{minipage}[b]{0.45\linewidth}
    \centering
    \includegraphics[width=\linewidth]{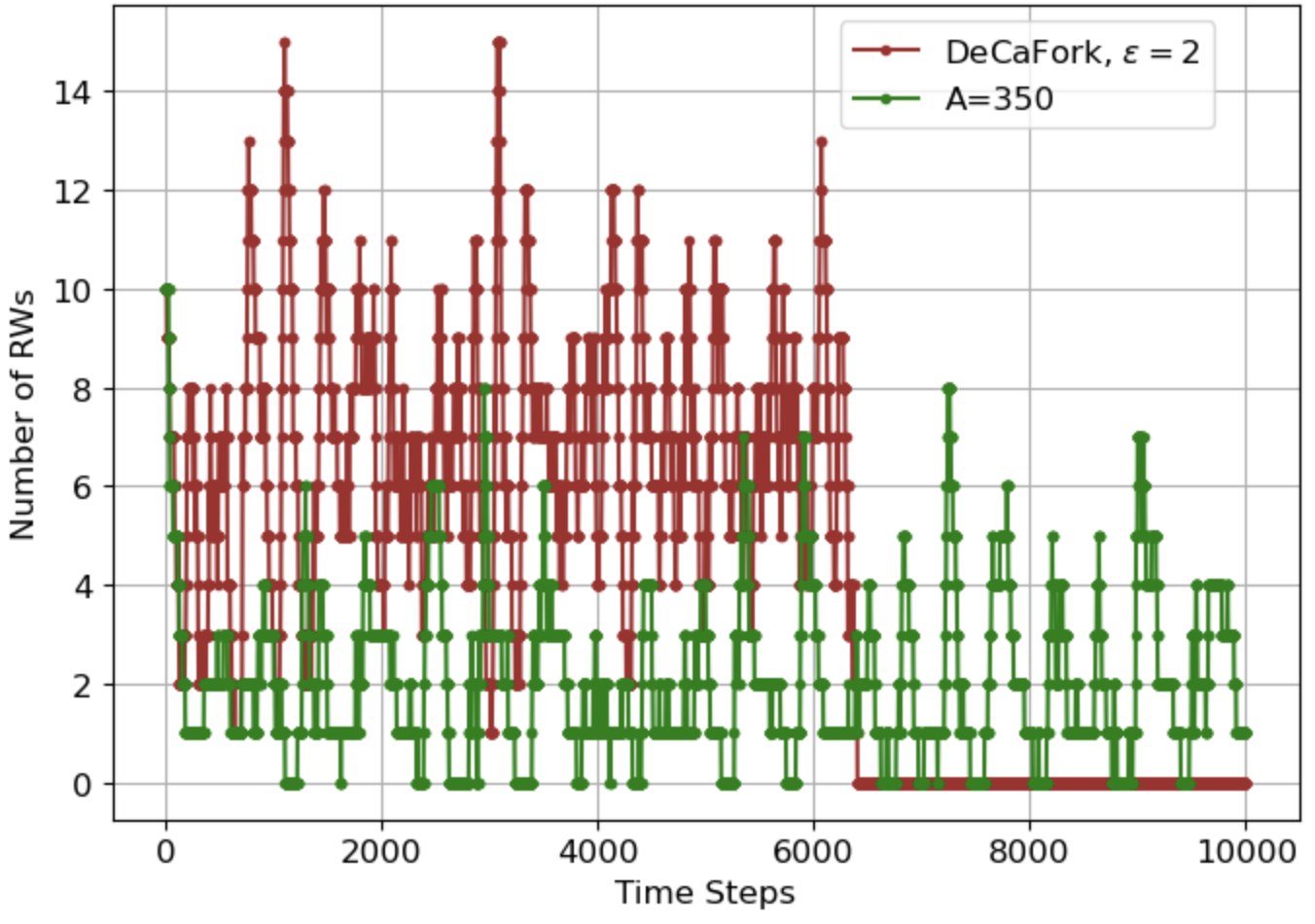}
    \caption*{(b) random regular graph}
  \end{minipage}
  \hfill
  \begin{minipage}[b]{0.45\linewidth}
    \centering
    \includegraphics[width=\linewidth]{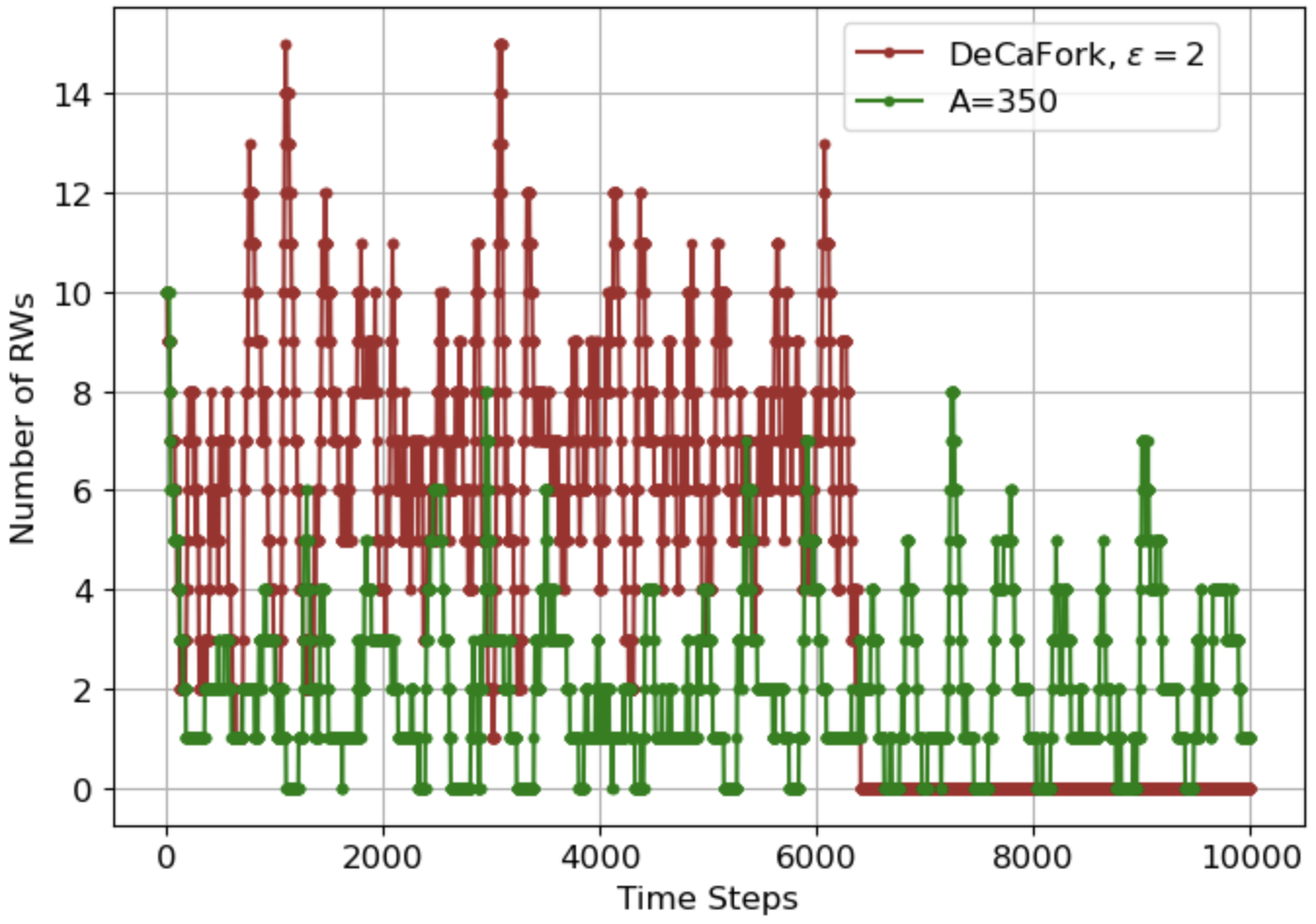}
    \caption*{(c) ring topology}
  \end{minipage}
    \hfill
  \begin{minipage}[b]{0.45\linewidth}
    \centering
    \includegraphics[width=\linewidth]{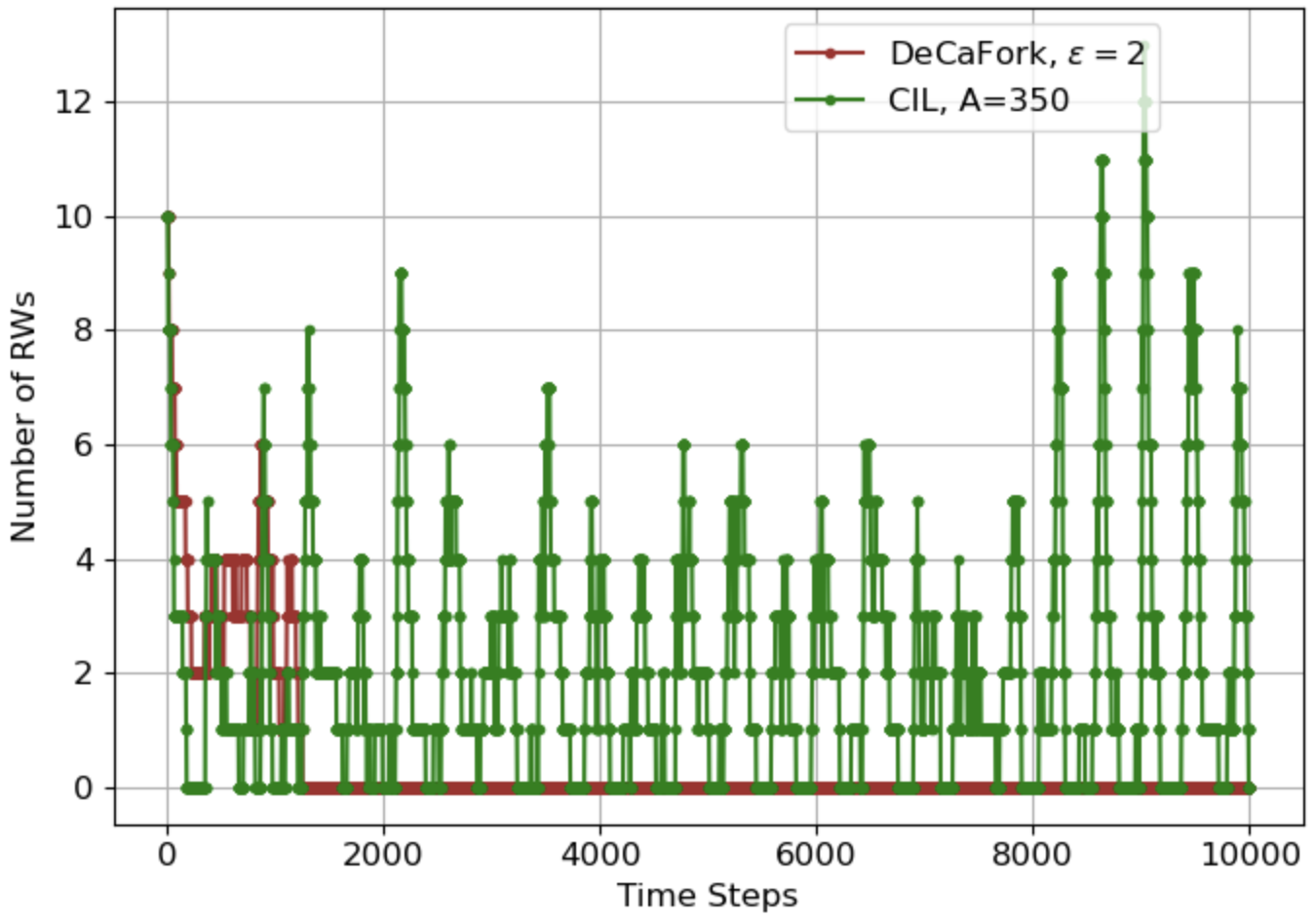}
    \caption*{(d) \ER graph}
  \end{minipage}
  \caption{Comparison of the number of RWs under the \cwl and \DeCa algorithms.}
  \label{fig:DeCa}
\end{figure}

\subsection{Convergence}\label{subsec: convergence}
We now evaluate the performance of the proposed \cwl algorithm against the baseline \DeCa. We begin with experiments on the synthetic dataset. As noted in Remark~\ref{remark:nogossip}, we do not average the models across all benign nodes as gossip-based algorithms do \cite{dd165d070f3241118efcd7b4abe6cfad}; we also do not average across all active RWs. Instead, we select \textit{one} chain of RWs as defined in Definition~\ref{defn:ChainRWs} and take the expectation only over the intrinsic randomness of the algorithm, such as the stochastic-gradient noise and the RW dynamics. Accordingly, when plotting the loss curve, we randomly choose one chain of RWs and report its loss. We clarify that the loss curves in Figs.~\ref{fig:Convergence},~\ref{fig:Convergencerealiid}, and~\ref{fig:ConvergencerealNoniid} are used to compare the convergence behavior of a single chain of RWs. The communication overhead associated with each plotted loss curve is the overhead incurred along that particular chain of RWs. When the threshold $A$ is small, more RWs may be generated on the graph, which can indeed increase the total communication overhead of the overall system. However, this additional graph-level overhead is not reflected in the per-chain loss curve.

Fig.~\ref{fig:Convergence} presents the convergence performance of RW-SGD using the proposed \cwl algorithm, compared with the \DeCa baseline. In each sub-figure, the $y$-axis represents the value of the global loss (log scale), while the $x$-axis denotes the number of time steps. Across different graph topologies\footnote{In the ring topology, we set the initial number of RWs to $z_0=1$ instead of $z_0=10$. This is because an RW is much less likely to hit the Pac-Man node in a ring topology than in complete, regular, and \ER graphs; using a smaller initial RW population makes the impact of the Pac-Man attack more visible.}:
\begin{compactenum}[(i)]
\item For the \cwl with a small creation threshold (e.g, $A=10$), the loss curve consistently decreases over time and eventually approach zero, indicating effective convergence. When $A=10$, extinction events under the \cwl algorithm occur only rarely, so with high probability the loss curve contains no flat segments. In this experiment, no extinction events happen. 
\item For \cwl with a large creation threshold (e.g, $A=350$), the loss curve exhibits multiple horizontal segments (green curve). During each flat segment, the RW population becomes extinct, and the learning process remains inactive. After a certain time period, once at least one node has waited for $A$ time slots since its last visit, new RWs are created and the learning process resumes. 
\item As shown in Fig.~\ref{fig:DeCa}, the robustness of the \DeCa algorithm depends strongly on the choice of the parameter $\epsilon$. The \DeCa algorithm can delay RW extinction but does not completely prevent it. In contrast, \textsc{CIL} avoids permanent extinction by creating new RWs when needed, at the cost of additional communication overhead. We observe that, when $\epsilon$ is not chosen properly, i.e., $\epsilon=3$, the RW population may eventually become extinct. In this case, no RW remains to carry out local updates, and the learning task fails to complete. It is worth noting that, based on the design of \DeCa \cite{selfdup, egger2024self}, it has a positive probability of RW extinction. Therefore, regardless of how carefully the hyperparameter $\epsilon$ is chosen, there always remains a nonzero risk that all RWs become extinct, thereby terminating the training process.
\end{compactenum}

From Fig.~\ref{fig:Convergence}, one might conclude that the \textsc{CIL} algorithm performs better with a smaller threshold $A$, since it converges faster in terms of time steps. This observation may suggest choosing a very small threshold, such as $A=1$, to accelerate the learning process. However, a smaller threshold can result in substantially higher communication overhead. As shown in Fig.~\ref{fig:LossCost}, to achieve the same loss level, the \textsc{CIL} algorithm with $A=10$ incurs greater communication overhead than that with $A=350$. This observation illustrates the tradeoff between convergence speed and communication overhead discussed in Proposition~\ref{pro:NumIter} and Remark~\ref{remark:Tradeoff}. Although Fig.~\ref{fig:LossCost} presents results only for the complete graph, similar phenomena are also observed for regular, ring, and \ER graphs.

The distance between the new convergence point and the original optimal solution, along with the corresponding theoretical bounds, is reported in Table~\ref{tab:Bounds}. This validates the theoretical guarantees established in Proposition~\ref{pro:Bounds}(1).

\begin{figure}[h]
\centering
  \centering
  \begin{minipage}[b]{0.45\linewidth}
    \centering
    \includegraphics[width=\linewidth, height=0.9\linewidth]{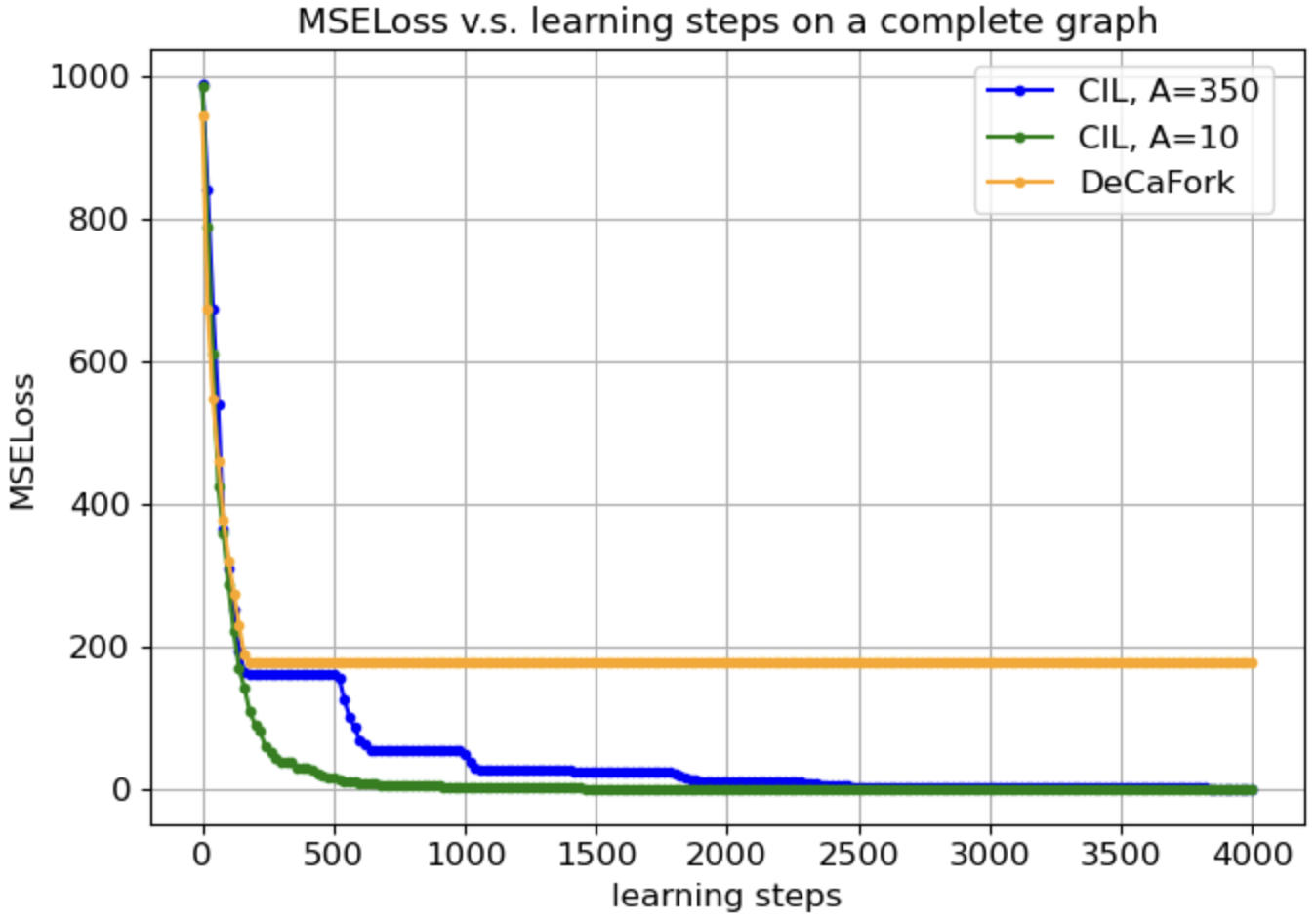}
    \caption*{(a) complete graph}
  \end{minipage}
  \hfill
  \begin{minipage}[b]{0.45\linewidth}
    \centering
    \includegraphics[width=\linewidth, height=0.9\linewidth]{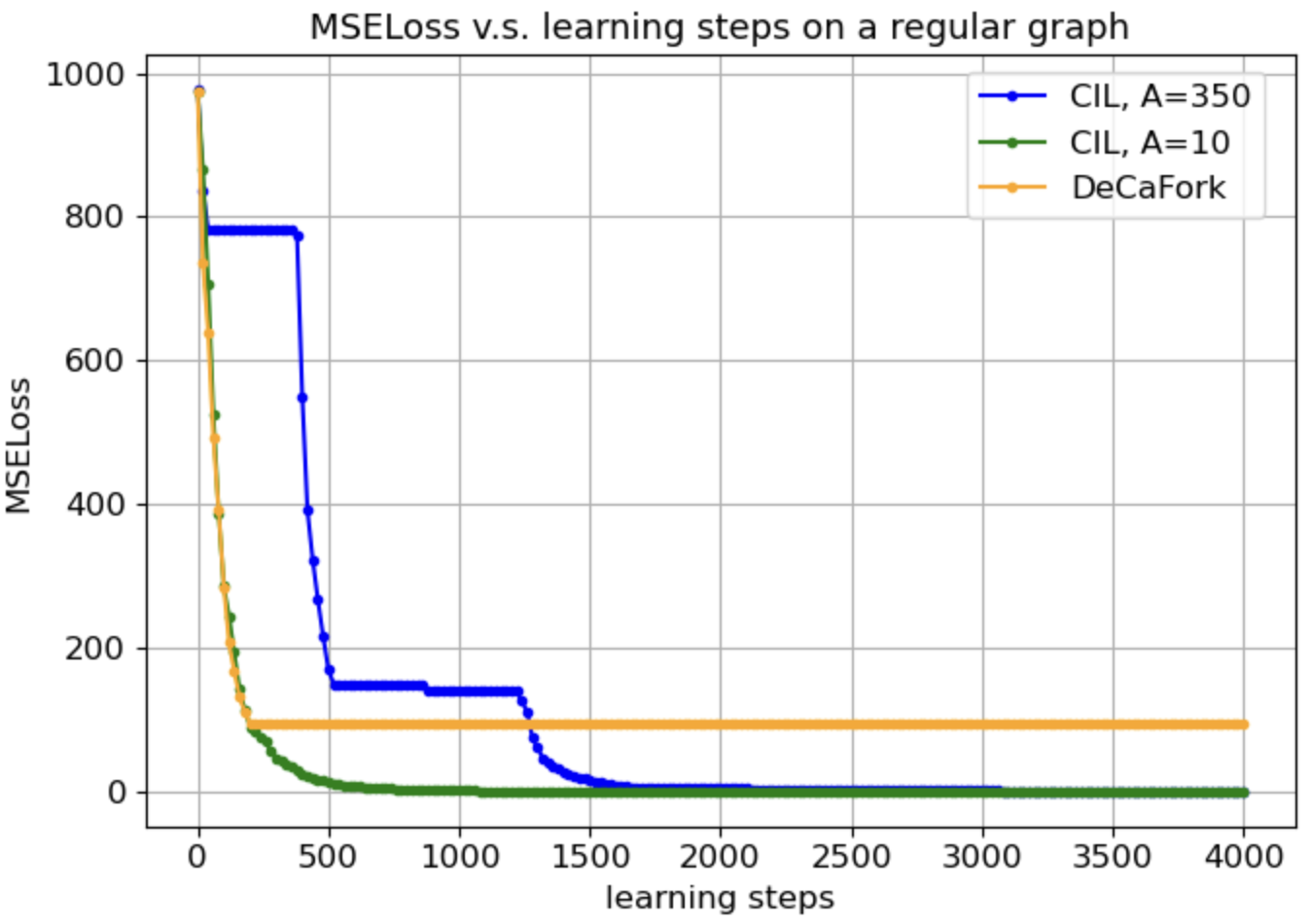}
    \caption*{(b) random regular graph}
  \end{minipage}
  \hfill
  \begin{minipage}[b]{0.45\linewidth}
    \centering
    \includegraphics[width=\linewidth, height=0.9\linewidth]{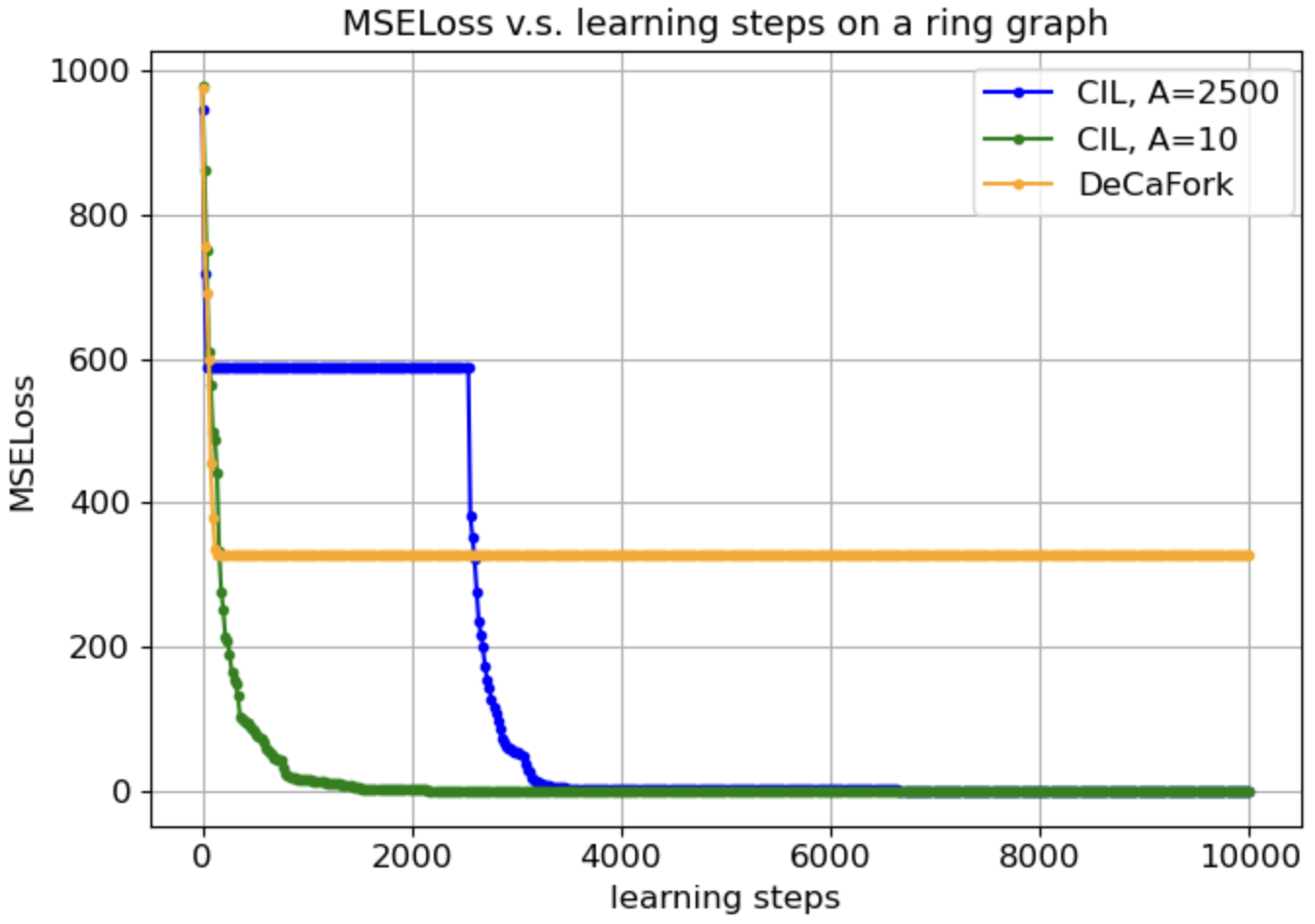}
    \caption*{(c) ring topology}
  \end{minipage}
  \hfill
  \begin{minipage}[b]{0.45\linewidth}
    \centering
    \includegraphics[width=\linewidth, height=0.9\linewidth]{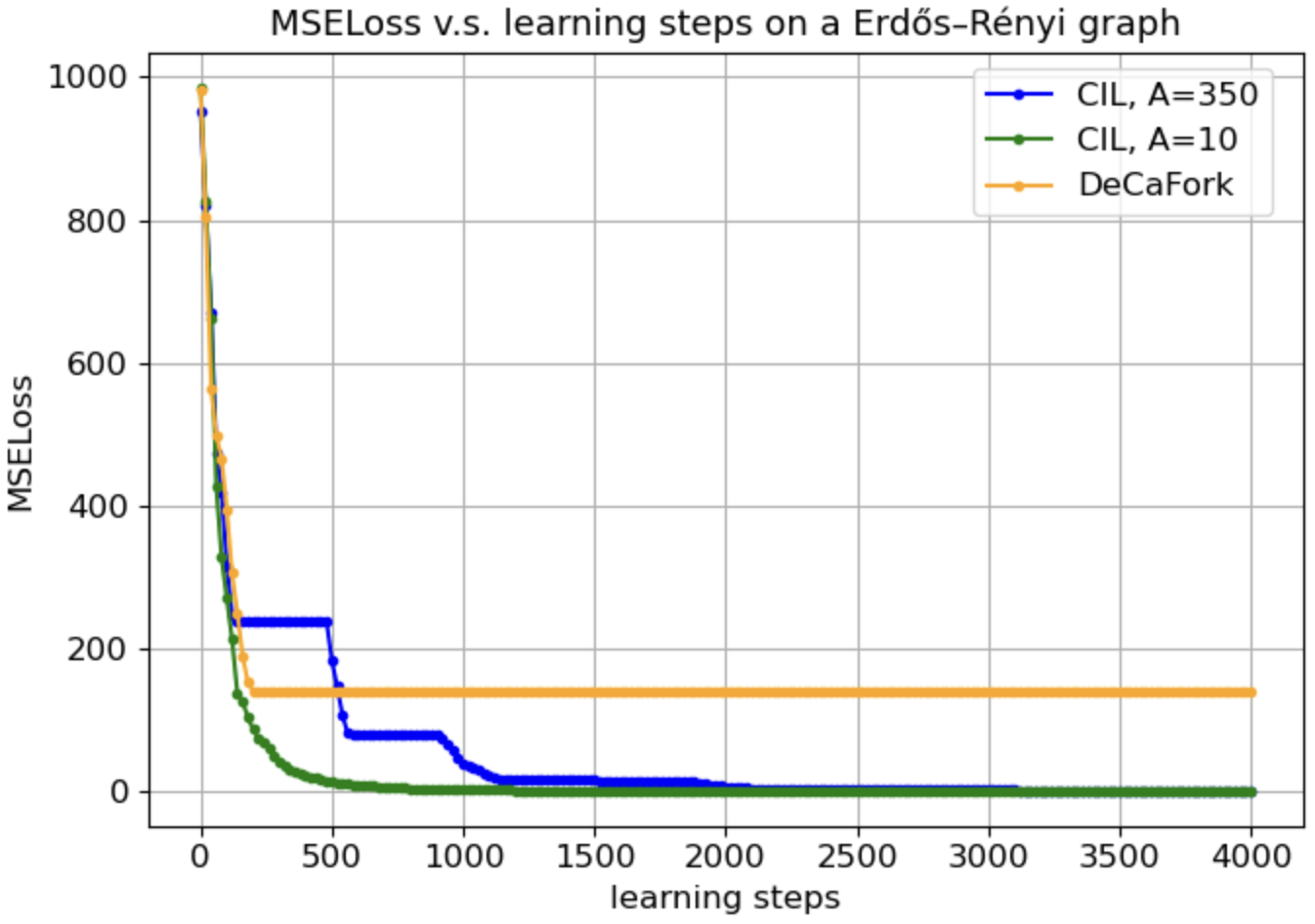}
    \caption*{(c) Erd\H{o}s--R\'enyi graph}
  \end{minipage}
\caption{Loss function v.s. learning steps on different graphs.}
\label{fig:Convergence}
\end{figure}

\begin{figure}[htbp]
  \centering
\includegraphics[width=0.675\linewidth]{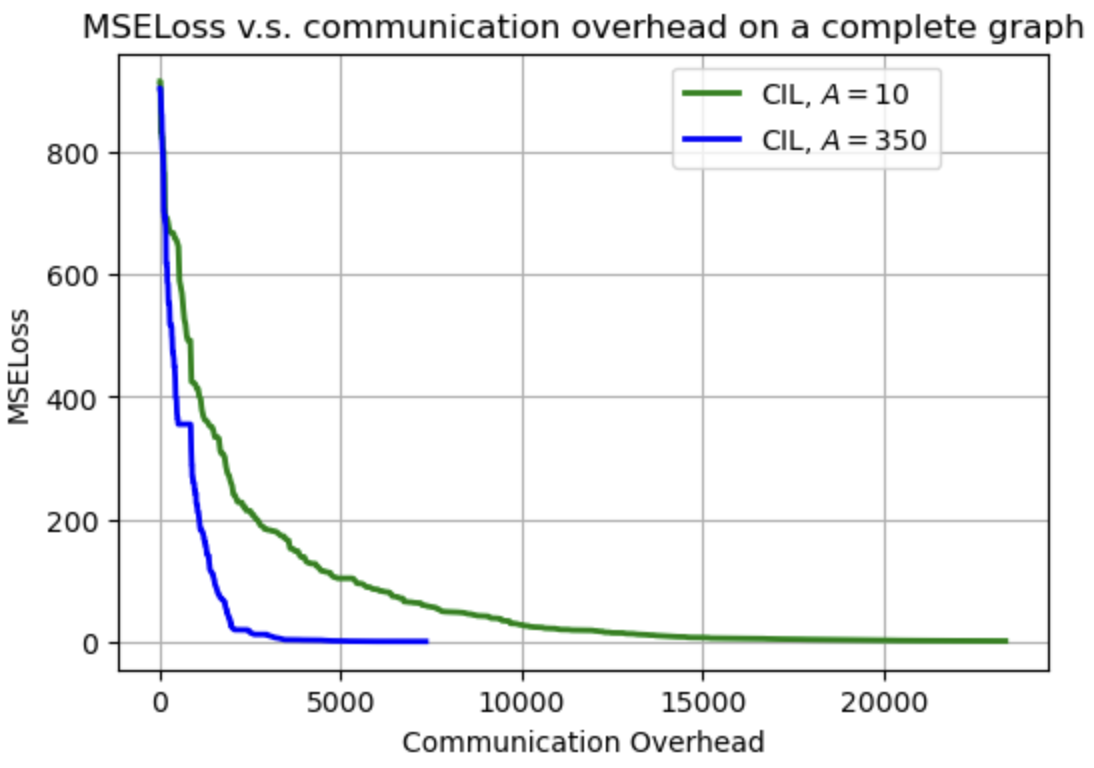}
  \caption{Loss v.s. communication overhead for the \textsc{CIL} algorithm with different thresholds $A$ on a complete graph.}
  \label{fig:LossCost}
\end{figure}

\begin{table}[h]
\centering
\begin{tabular}{|c|c|c|c|c|}
\hline
graph type& Complete & Regular & Ring & Erd\H{o}s--R\'enyi \\
\hline
$\frac{1}{L}\|\nabla f(\tilde{\bf x}^\star)\|$ & $0.018$ & $0.013$  & $0.042$ & $0.015$\\
\hline
$\|\tilde{\bf x}^* - {\bf x}^*\|$ & $0.024$ & $0.018$ & $0.051$ & $0.022$\\
\hline
$\frac{1}{\mu}\|\nabla f(\tilde{\bf x}^\star)\|$ & $0.041$ & $0.033$ & $0.139$ & $0.037$\\
\hline
\end{tabular}
\caption{$\|\tilde{\bf x}^* - {\bf x}^*\|$ and its bounds on different graphs.}
\label{tab:Bounds}
\end{table}

We then conducted experiments on the real-world dataset, considering both \iid partitioning and non-\iid partitioning scenarios.

\begin{figure}[h]
\centering
  \centering
  \begin{minipage}[b]{0.45\linewidth}
    \centering
    \includegraphics[width=\linewidth, height=0.9\linewidth]{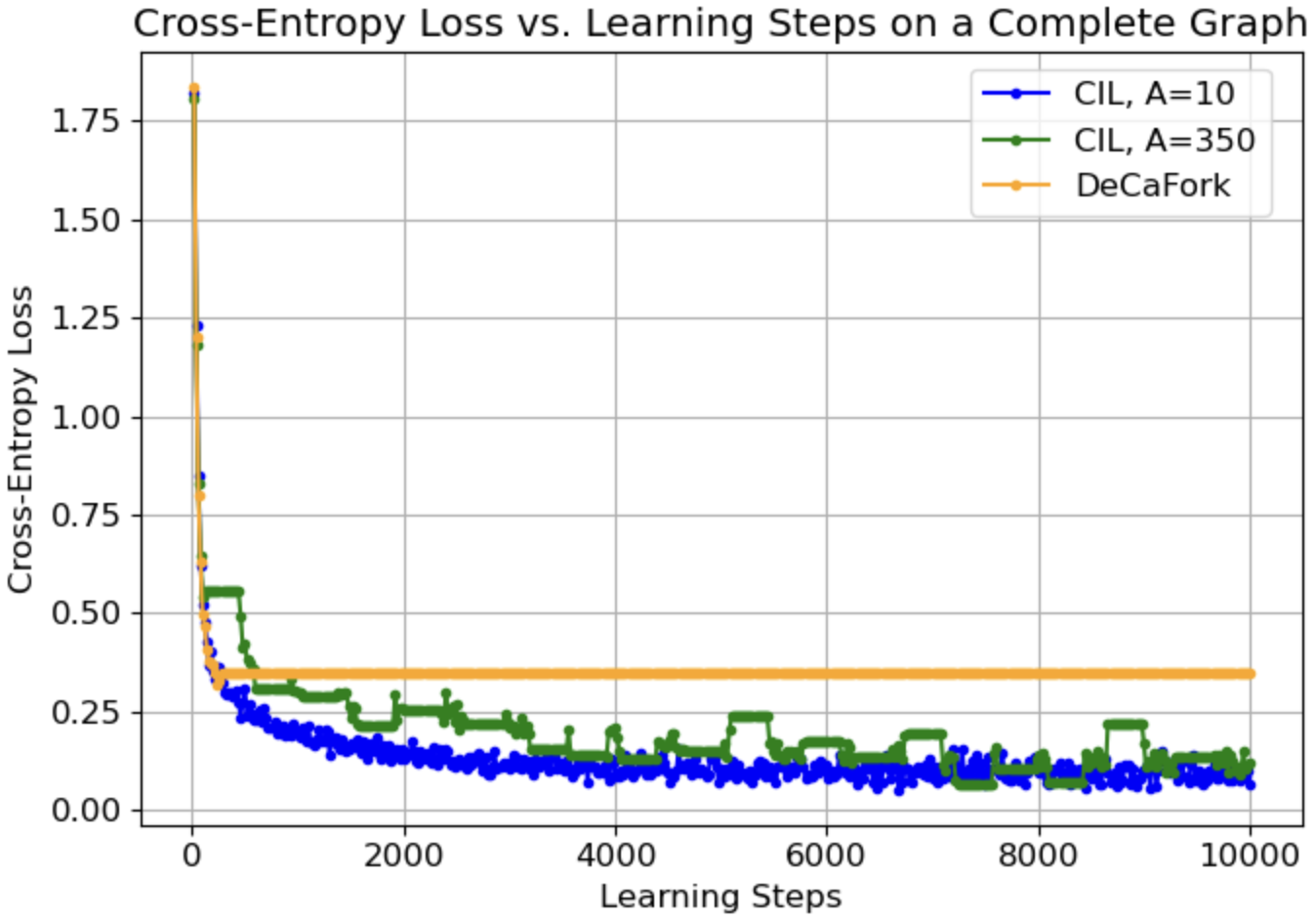}
    \caption*{(a) complete graph}
  \end{minipage}
  \hfill
  \begin{minipage}[b]{0.45\linewidth}
    \centering
    \includegraphics[width=\linewidth, height=0.9\linewidth]{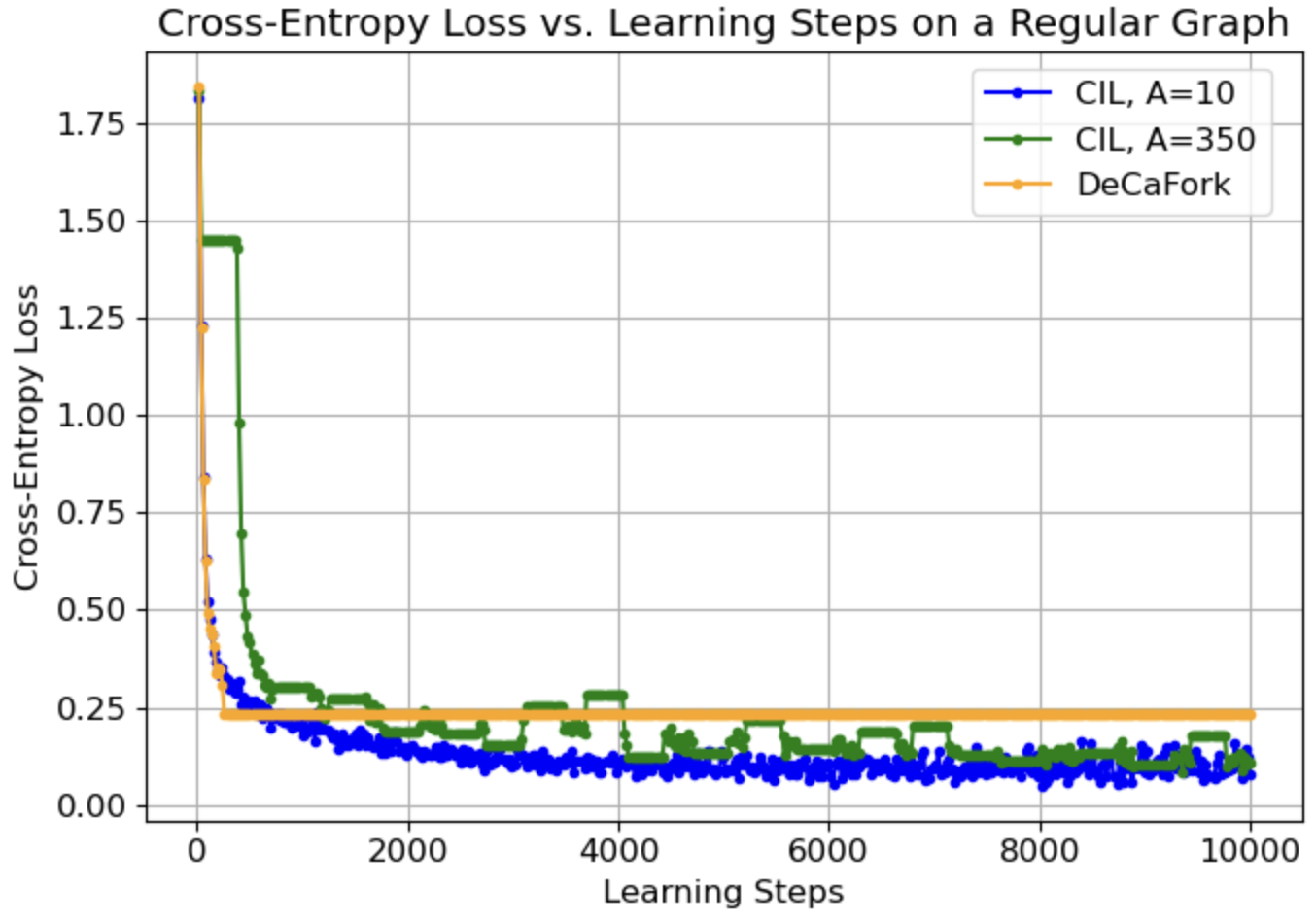}
    \caption*{(b) random regular graph}
  \end{minipage}
  \hfill
  \begin{minipage}[b]{0.45\linewidth}
    \centering
    \includegraphics[width=\linewidth, height=0.9\linewidth]{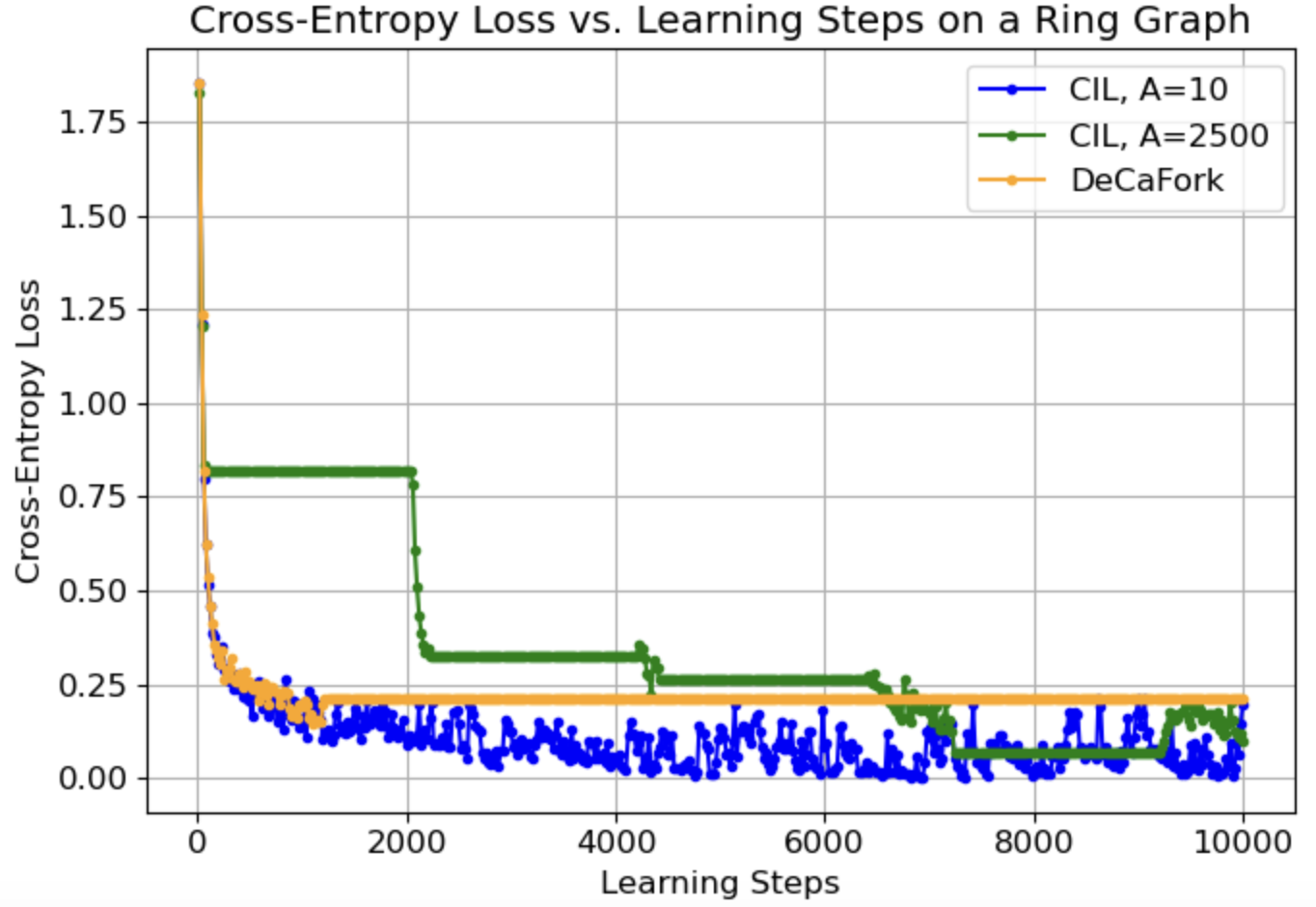}
    \caption*{(c) ring topology}
  \end{minipage}
  \hfill
  \begin{minipage}[b]{0.45\linewidth}
    \centering
    \includegraphics[width=\linewidth, height=0.9\linewidth]{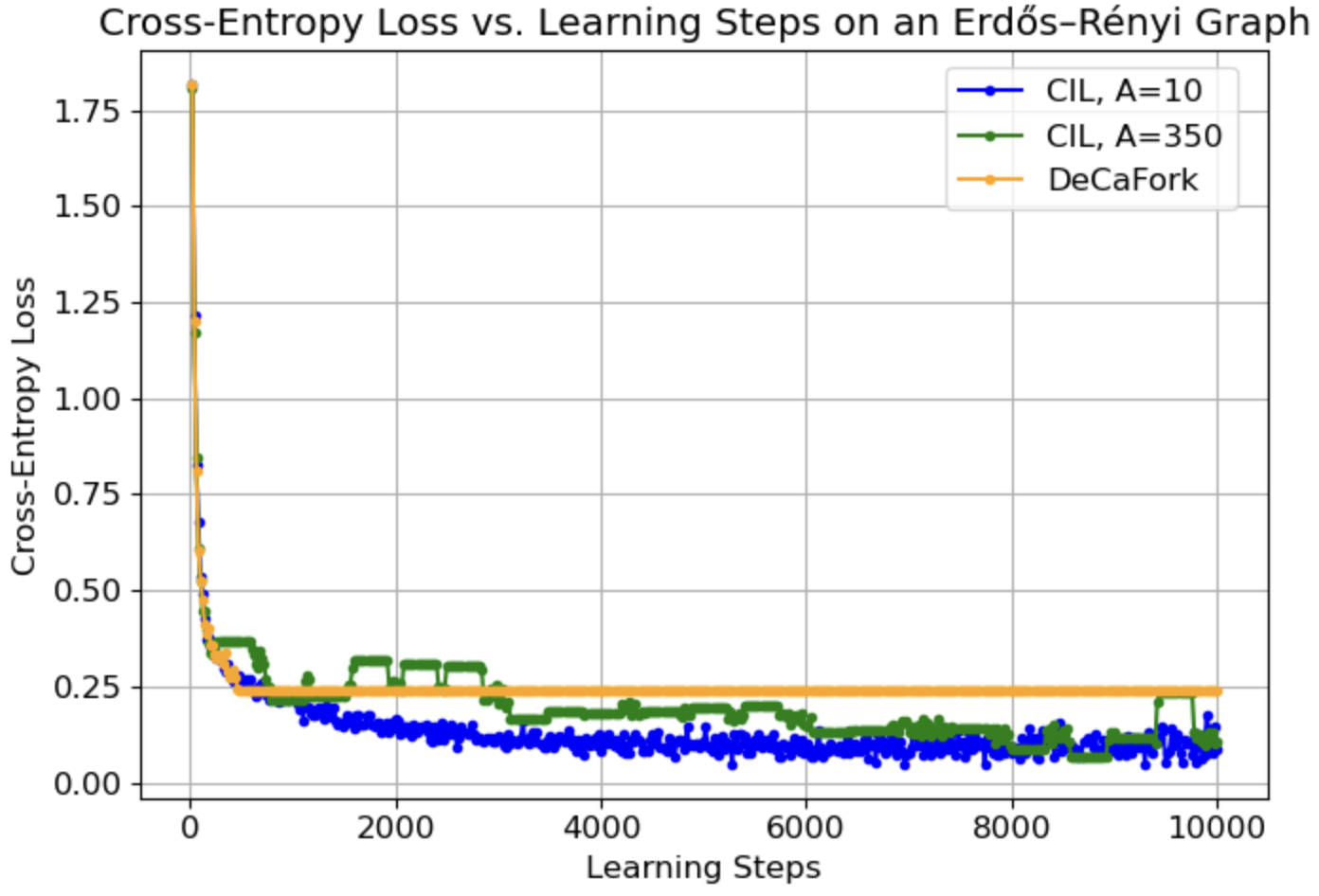}
    \caption*{(c) Erd\H{o}s--R\'enyi graph}
  \end{minipage}
\caption{Loss function v.s. learning steps on different graphs under \iid partitioning of the public benchmark dataset.}
\label{fig:Convergencerealiid}
\end{figure}

\begin{figure}[h]
\centering
  \centering
  \begin{minipage}[b]{0.45\linewidth}
    \centering
    \includegraphics[width=\linewidth, height=0.9\linewidth]{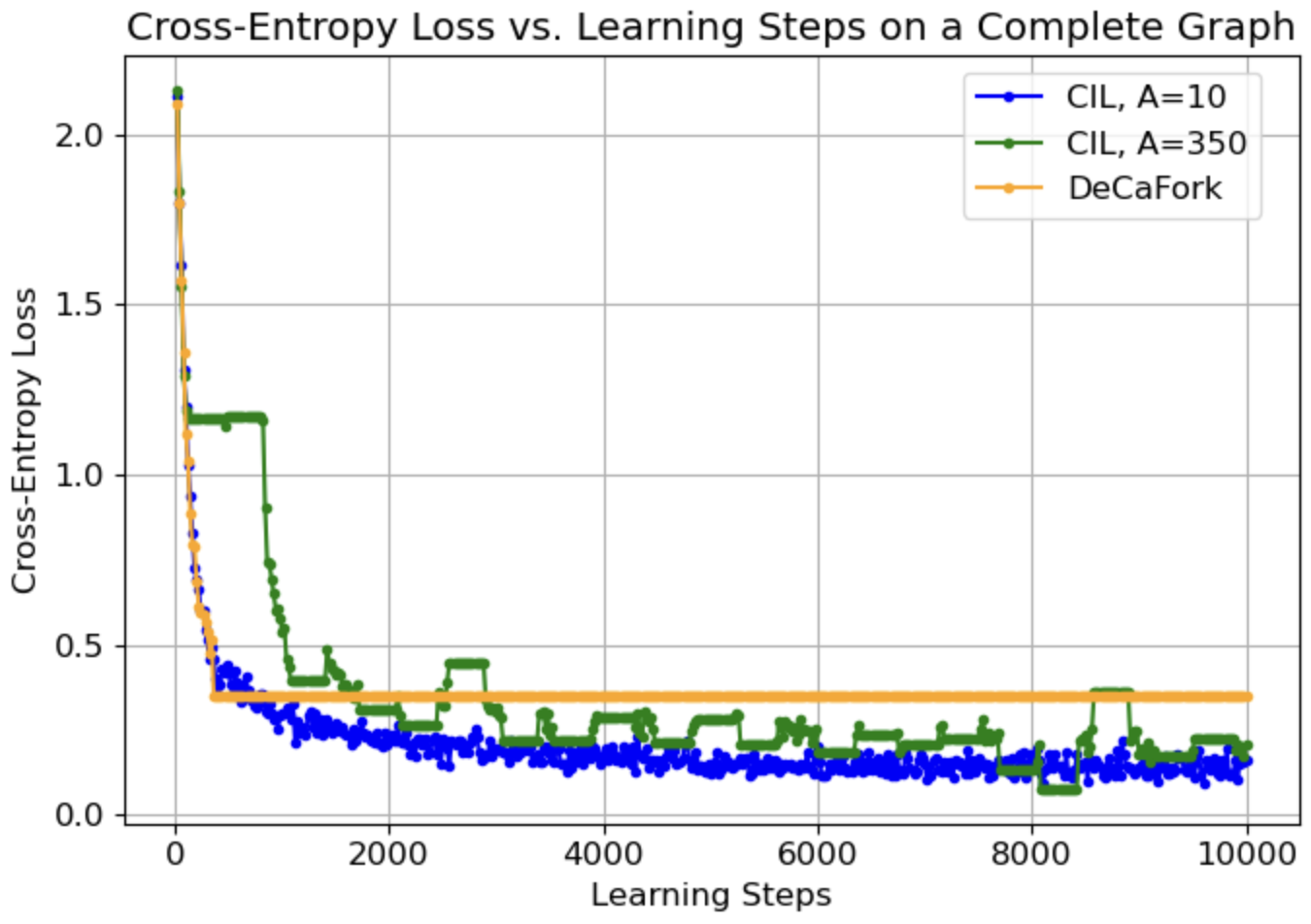}
    \caption*{(a) complete graph}
  \end{minipage}
  \hfill
  \begin{minipage}[b]{0.45\linewidth}
    \centering
    \includegraphics[width=\linewidth, height=0.9\linewidth]{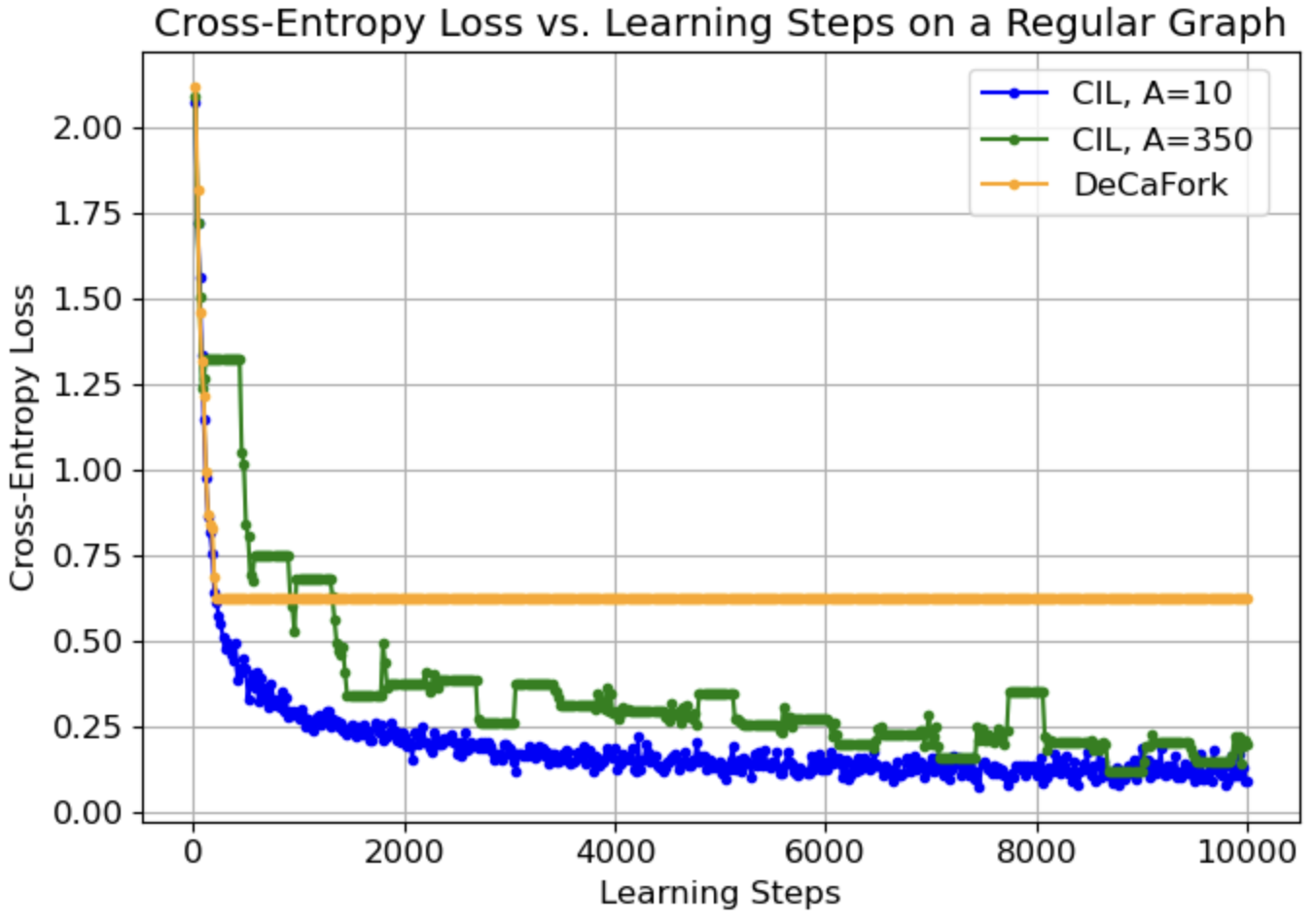}
    \caption*{(b) random regular graph}
  \end{minipage}
  \hfill
  \begin{minipage}[b]{0.45\linewidth}
    \centering
    \includegraphics[width=\linewidth, height=0.9\linewidth]{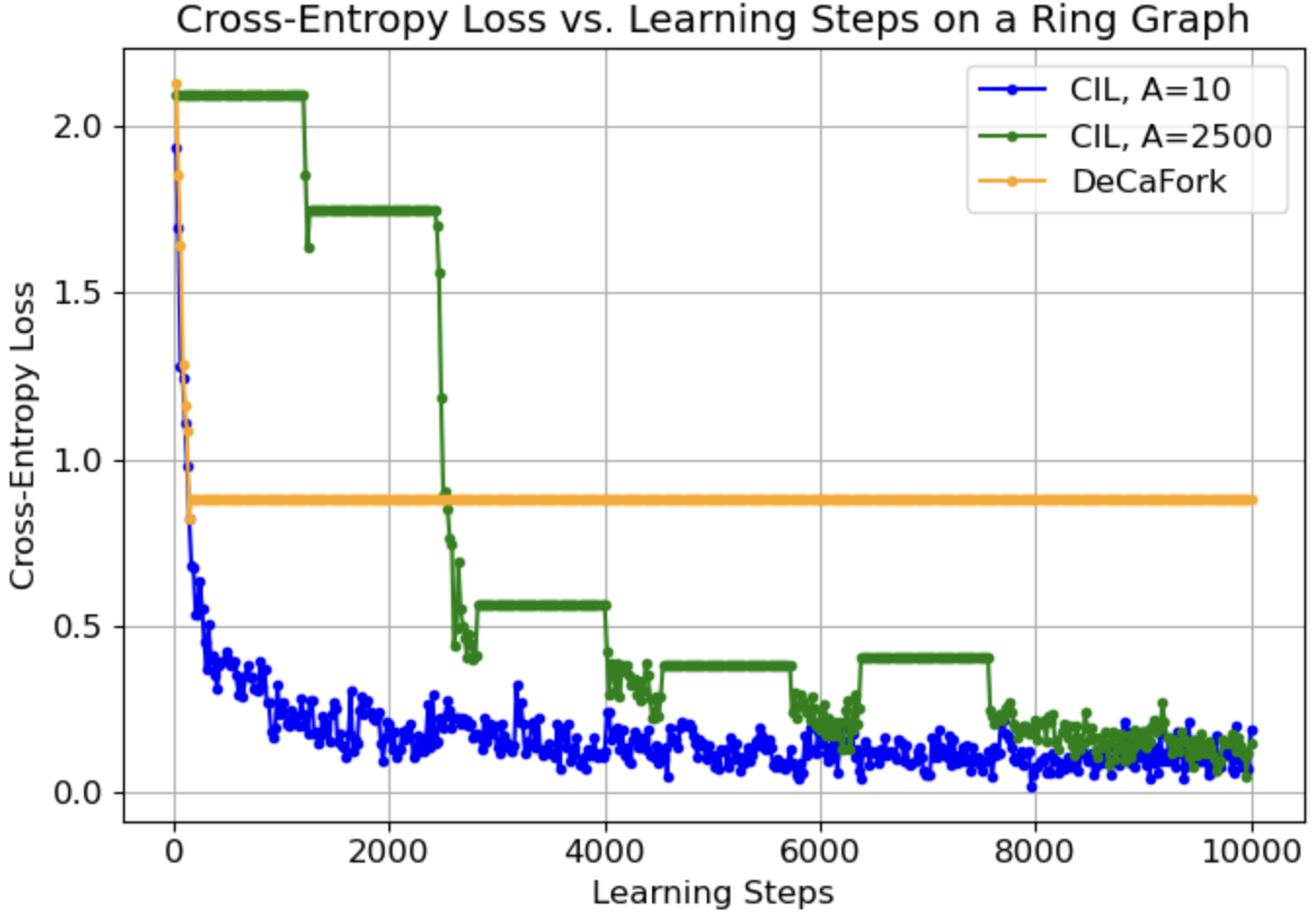}
    \caption*{(c) ring topology}
  \end{minipage}
  \hfill
  \begin{minipage}[b]{0.45\linewidth}
    \centering
    \includegraphics[width=\linewidth, height=0.9\linewidth]{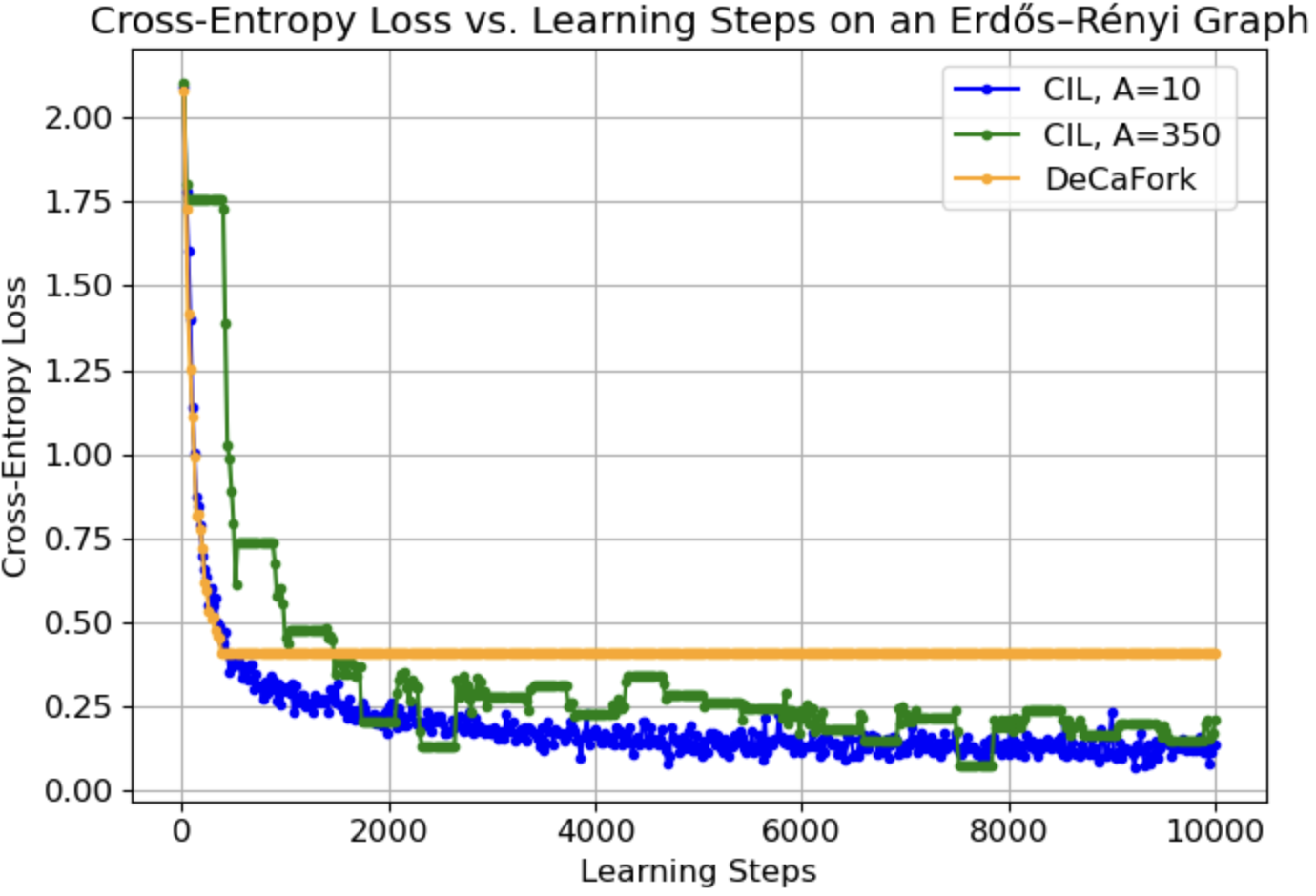}
    \caption*{(c) Erd\H{o}s--R\'enyi graph}
  \end{minipage}
\caption{Loss function v.s. learning steps on different graphs under non-\iid partitioning of the public benchmark dataset.}
\label{fig:ConvergencerealNoniid}
\end{figure}

Fig.~\ref{fig:Convergencerealiid} and Fig.~\ref{fig:ConvergencerealNoniid} show the convergence behaviors of RW-SGD under the \cwl with different creation thresholds ($A=10$ and $A=350$) and the \DeCa, on the public benchmark dataset \cite{deng2012mnist}, using \iid and non-\iid data partitioning across nodes, respectively. 
In each figure, the subplots depict the global loss (on the $y$-axis) over time steps (on the $x$-axis). 
A moving average with a window size of $20$ is applied to smooth the curves and mitigate the variances, to better reveal the underlying trends. 
Across both partitioning settings, the convergence patterns are consistent with those observed on synthetic dataset. The loss curve under \cwl with $A=10$ decreases quickly to near-zero levels and remain stable thereafter, confirming that the algorithm converges. The global loss under \cwl with $A=350$ also converges, but contains several horizontal segments and proceeds at a slower rate. Under the \DeCa algorithm, the RW population may eventually become extinct, in which case no RW remains to perform local updates and the training process terminates prematurely.

\begin{table}[h]
\centering
\begin{tabular}{|c|c|c|c|c|}
\hline
& Complete & Regular & Ring & Erd\H{o}s--R\'enyi \\
\hline
Standard SGD & $0.9727$ & $0.9784$ & $0.9715$ & $0.9763$\\
\hline
CIL, small $A$ & $0.9749$ & $0.9792$ & $0.9736$ & $0.9771$ \\
\hline
CIL, large $A$ & $0.9758$ & $0.9720$ & $0.9767$ & $0.9713$\\
\hline
DeCaFork & $0.9795$ & $0.9742$ & $0.9776$  & $0.9731$ \\
\hline
\end{tabular}
\caption{Testing accuracies on different graphs under \iid partitioning.}
\label{tab:Accuracyiid}
\end{table}

Finally, we evaluate the learning performance by using testing accuracies of the final model.  
Given the target distribution $\pi = \left(\frac{1}{100}, \dots, \frac{1}{100} \right)$, Metropolis-Hastings algorithm \cite{doi:10.1137/08073038X} implies that $\nu^{(1)}$ is uniform over the $99$ benign nodes in complete, random regular, and ring graphs, and approximately uniform in the \ER graph due to its near-regular structure. This indicates that in the presence of a Pac-Man node, active RWs  visit benign nodes uniformly or near-uniformly.

Table~\ref{tab:Accuracyiid} presents the performances of final models under \iid data partitioning. In the ``Standard SGD'' baseline, a single RW performs standard SGD on a graph without a Pac-Man node. From this table, we observe that the performances are nearly identical across all graph types, regardless of whether a Pac-Man node is present. This is because, although a Pac-Man node is present, the benign nodes share the same data distribution (due to the \iid partitioning) and are accessed uniformly (or approximately so) by active RWs. As a result, the active RWs observe an approximately representative sample of the global data distribution, and the final models still perform well.

\begin{table}[h]
\centering
\begin{tabular}{|c|c|c|c|c|}
\hline
& Complete & Regular & Ring & Erd\H{o}s--R\'enyi \\
\hline
Standard SGD & $0.9738$ & $0.9772$ & $0.9729$ & $0.9794$\\
\hline
CIL, small $A$ & $0.9659$ & $0.9661$ & $0.9683$ & $0.9542$ \\
\hline
CIL, large $A$ & $0.9677$ & $0.9634$ & $0.9612$ & $0.9661$\\
\hline
DeCaFork & $0.9655$ & $0.9613$ & $0.9522$  & $0.9638$ \\
\hline
\end{tabular}
\caption{Testing accuracies on different graphs under non-\iid partitioning.}
\label{tab:AccuracyNoniid}
\end{table}

Table~\ref{tab:AccuracyNoniid} shows the results under non-\iid partitioning. Here, the presence of a Pac-Man node impacts performance. Two trends emerge: (i) the performance in the ``Standard SGD'' baseline outperforms those in the Pac-Man cases, as the RW in the baseline has full access to all nodes and thus to the complete data distribution; in contrast, the active RWs in the Pac-Man cases never access the Pac-Man node, and the remaining benign nodes---now with heterogeneous data (due to the non-\iid partitioning)---no longer provide a representative sample of the entire dataset, resulting in degraded performance.
(ii) The performances in the Pac-Man cases (under the \cwl and \DeCa algorithms) are nearly identical across all graph types. This is because their active RWs achieve similar quasi-stationary distributions. As a result, they sample similarly skewed subsets of data, which leads to a similar performances of the final models.

\section{Conclusion}\label{sec:Conclusion}
In this work, we focused on RW-based decentralized learning algorithms, which are susceptible to terminations and Pac-Man like attack. 
This stealthy and highly disruptive threat can gradually degrade the decentralized operations without triggering any detectable failure signals. 
We proposed a new creation based algorithm \cwl that maintains RW population, and thereby improving the resilience of RW-based SGD under termination-based Pac-Man attacks. 
The proposed \cwl algorithm is a fully decentralized duplication mechanism based solely on local visitation intervals and hence does not need to depend on system parameter estimation. 
We provided a rigorous analysis of the performance of this algorithm, highlighting its strengths.  
Extensive experiments on synthetic and public benchmark datasets validate our findings.

An interesting future direction is to develop principled methods for selecting the creation threshold $A$ and characterizing the fundamental trade-offs among communication cost, recovery time, and convergence speed. Another interesting future direction is to establish tighter graph-dependent bounds on the peak RW population and better understand the role of graph topology in RW population dynamics. In addition, designing random-walk-based decentralized learning algorithms under combined threat models, where Pac-Man termination attacks coexist with other adversarial behaviors such as Byzantine attacks is a promising avenue for future work.


\bibliographystyle{IEEEtran}
\bibliography{IEEEabrv,references}

\clearpage
\appendices

\section{Proof of Theorem~\ref{thm:FiniteRWs}}\label{Appe:FinitNum}
\noindent\textbf{Roadmap}. The proof proceeds as follows. We first introduce several useful definitions that provide the necessary framework for the subsequent analysis (Definitions~\ref{defn:Initdistribution}--\ref{defn:BirthDeathRW}). We then establish upper bounds on the drift of $Z_t$ (Lemma~\ref{lem:DriftZ} and Corollary~\ref{cor:Lyapunov}). These drift inequalities are tailored to our model. Using this drift bound, we construct a Lyapunov function $V$ (Definition~\ref{defn:SupMart}) and show that the process $M_k = V(Z_{t_0+dk})$ is a supermartingale (Lemma~\ref{lem:SupMart}). Finally, combining the drift estimate with the supermartingale property, we obtain the boundedness result in Theorem~\ref{thm:FiniteRWs}.

\subsection{Useful Definitions}

\begin{definition}\label{defn:Initdistribution}
Let $\Prb_u$ denote the probability measure under which the RW $j$ starts at node $u$, i.e., $X_{j}(0)=u$. Given a distribution $\nu$ over the node set $\cV$, we define the mixed law 
\begin{align*}
\Prb_{\nu} \triangleq \int_{u\in \cV}\Prb_u d\nu(u),
\end{align*}
which corresponds to initializing RW $j$ according to $\nu$.
\end{definition}

\begin{definition}\label{defn:NaturalFilt}
Consider the number of active RWs $Z_t$ at time $t \in \Z_+$. The natural filtration for the random sequence $Z \triangleq (Z_t: t \in \Z_+)$ is denoted by $\cF_\bullet \triangleq (\cF_t: t \in \Z_+)$ where $\cF_t \triangleq \sigma(Z_s, s \le t)$. 
\end{definition}
Consider the graph $\cG^\prime$ defined in Definition \ref{defn:PacMan}. Since the original graph $\cG$ is connected and finite, then the Pac-Man is {\it reachable} from any other node $u \in \cB$, i.e., there exists $n_u \in \N$ such that $(P^\prime)^{n_u}_{u1} > 0$.
\begin{definition}
\label{defn:MinPathPacMan} 
We define the smallest number of steps to reach the Pac-Man from node $u\in\cB$ as 
\begin{align*}
d_u \triangleq \inf\set{n \in \N: (P^\prime)^n_{u1} > 0}.    
\end{align*}
We define the maximum of the minimum time steps to reach the Pac-Man from node $u \in \cB$ as 
\begin{align}\label{eq:MaxMinPath}
d \triangleq \max_{u \in \cB}d_u.    
\end{align}
Accordingly, we define the smallest probability of reaching the Pac-Man within $d$ steps from node $u \in \cB$, as 
\begin{align*}
c \triangleq \min_{u \in [N]}(P^\prime)^{d_u}_{u1}. 
\end{align*}
\end{definition}
\begin{remark}
Since the Pac-Man is reachable, $n_u$ is finite for each $u\in[N]$, then $d_u \le n_u$ and hence is finite. By Definition~\ref{defn:PacMan}, $\left(P^\prime_{u1}\right)^{d_u} >0$ for each $u \in \cB$, and then $c$ is positive.  Thus, $d$ and $c$ are well-defined.
\end{remark}
\begin{definition}\label{defn:BirthDeathRW}
Consider a finite connected graph $\cG$ with $d$ defined in \eqref{eq:MaxMinPath}. 
During a fixed finite and half-open time interval $T \subseteq \R_+$, we denote the number of random walks that hit PacMan by $D_T$ and the number of RWs generated by $G_T$.  
\end{definition}

\subsection{Upper Bounds on the Drift of $Z_t$}
\begin{lemma}\label{lem:DriftZ}
Consider $Z_t$ independent, aperiodic, active RWs at time $t$, each following the identical law over a finite connected graph $\cG$ with $d$ defined in \eqref{eq:MaxMinPath}. 
Then, 
\begin{align}\label{eqn:MeanRWUB}
\E[Z_{t+d}-Z_t\mid\cF_t] \le  - c\zeta Z_t + (N-1)d.
\end{align} 
\end{lemma}
\begin{IEEEproof} 
From the definition of $D$ and $G$ from Definition \ref{defn:BirthDeathRW}, we can write the difference in the number of active RWs at time $t+d$ and $t$ as 
\begin{align}\label{eqn:DiffRW}
Z_{t+d} = Z_t - D_{(t, t+d]} + G_{(t, t+d]}.
\end{align}
We note that the graph $\cG^\prime$ has $N-1$ benign nodes, and at most one RW can be generated at each node at each time $t$. 
Thus, 
\begin{align}\label{eqn:BirthRW}
G_{(t, t+d]} \le \sum_{s=0}^{d-1}\left(Z_{t+s}\wedge (N-1)\right) \le (N-1)d.
\end{align}
At time $t$, there are $Z_t$ active RWs. 
From Definition~\ref{defn:MinPathPacMan}, homogeneity of all RWs with transition probability matrix $P^\prime$, and the definition of $\Prb_u$ in Definition~\ref{defn:Initdistribution}, we have for any active RW $j \in \cZ_t$ 
\begin{align*}
&\Prb\left(\cup_{n=1}^d\set{X_j(t+n) = 1}\mid\cF_t\right) \\
&\ge \Prb_{X_j(t)}\set{X_j(d_u) = 0} \ge c. 
\end{align*}
That is, $c$ is the uniform lower bound on the probability of ending up at the Pac-Man \emph{within} $d$ steps, over all possible initial positions. 
It follows that the number of deaths for RWs is lower bounded by the number of active RWs at time $t$ hitting Pac-Man (ignoring the RWs generated during this interval and hitting Pac-Man), and hence 
\begin{align}\label{eqn:DeathRW}
\E[D_{(t,t+d]}\mid \cF_t] \ge c\zeta Z_t.
\end{align}
Taking conditional expectation of \eqref{eqn:DiffRW} given history $\cF_t$, substituting the upper bound on the conditional mean number of births \eqref{eqn:BirthRW} and the lower bound on the conditional mean number of deaths \eqref{eqn:DeathRW}, we obtain the result. 
\end{IEEEproof}
\begin{cor}\label{cor:Lyapunov} 
Consider independent aperiodic RWs on a finite connected graph $\cG$ with identical probability laws and $d$ defined in \eqref{eq:MaxMinPath}. 
For any $\epsilon > 0$, there exists positive constants $b, B$, such that $B > b$ and the random sequence $Z$ satisfies the following conditions.
\begin{compactenum}[(a)]
\item If $Z_{t}\le B$, then $\E[Z_{t+d}\mid\cF_{t}]\le b$.
\item If $Z_{t} > B$, then $\E[Z_{t+d}-Z_{t}\mid\cF_{t}]<-\epsilon$. 
\end{compactenum}
\end{cor}
\begin{IEEEproof}
Let $\epsilon > 0$, $N-1$ be the number of benign nodes in $\cG^\prime$, and $c$ be the uniform lower bound on hitting Pac-Mac from any node in minimum number of steps as defined in Definition~\ref{defn:MinPathPacMan}. 
We define positive constants $B \triangleq \frac{1}{c\zeta}\left((N-1)d+\epsilon\right)$ and $b \triangleq (1-c\zeta)B+ (N-1)d$, such that $b < B$. 
\begin{compactenum}[(a)]
\item Let $Z_t \le B$. 
It follows from \eqref{eqn:MeanRWUB}, that
\begin{align*}
\E[Z_{t+d}\mid \cF_t]\le (1-c\zeta)B + (N-1) d = b. 
\end{align*}
\item Let $Z_t > B$. It follows from \eqref{eqn:MeanRWUB} and definition of $B$, that 
\begin{align*}
&\E[Z_{t+d}-Z_t\mid\cF_t] \le - c\zeta Z_t + (N-1)d < -\epsilon.
\end{align*}
\end{compactenum}
\end{IEEEproof}

\subsection{Supermartingale}
\begin{definition}[Supermartingale] 
\label{defn:SupMart}
Consider independent aperiodic RWs on a finite connected graph $\cG$ with identical probability laws and $d$ defined in \eqref{eq:MaxMinPath}, $\epsilon > 0$, and positive constants $b, B$ defined in Corollary~\ref{cor:Lyapunov}. For $t_0\ge 0$ and $k \in \Z_+$, we define periodic samples of number of active RWs and its natural filtration at time $t_0+dk$ as 
We define a Lyapunov function $V:\R_+\to\R_+$ for each random variable $Z\in \R_+$
\begin{align}\label{eqn:Lyapunov}
V(Z) \triangleq Z\Ind{\set{\E[Z]>B}} + B\Ind{\set{\E[Z]\le B}}. 
\end{align}
For $t_0\ge 0$ and $k \in \Z_+$, we define periodic samples of number of active RWs and its natural filtration at time $t_0+dk$ as  
\begin{xalignat*}{2}
&M_k \triangleq V(Z_{t_0+dk}),&&\cH_k \triangleq \cF_{t_0+dk}.
\end{xalignat*}
We define a random sequence $M \triangleq (M_k: k \in \Z_+)$ and filtration $\cH_\bullet \triangleq (\cH_k:k\in \Z_+)$. 
\end{definition}
\begin{lemma}\label{lem:SupMart}
Sequence $M$ is a supermartingale adapted to filtration $\cH_\bullet$.  
\end{lemma}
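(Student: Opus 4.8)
\textbf{Proof plan for Lemma~\ref{lem:SupMart}.}
The goal is to show that $M$ is a supermartingale adapted to $\cH_\bullet$, i.e. that each $M_k$ is integrable, is $\cH_k$-measurable, and satisfies $\E[M_{k+1}\mid\cH_k]\le M_k$. The first two points are routine: $M_k = V(Z_{t_0+dk})$ is a bounded-below measurable function of $Z_{t_0+dk}$, which is $\cF_{t_0+dk}=\cH_k$-measurable; integrability follows since $Z_t$ has finite conditional mean at each step (at most $N-1$ new RWs are created per slot, so $\E[Z_t]\le z_0 + (N-1)t<\infty$), and hence $M_k\le\max\{Z_{t_0+dk},B\}$ is integrable. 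So the real content is the drift inequality $\E[M_{k+1}\mid\cH_k]\le M_k$, and I would prove it by a case split on whether $Z_{t_0+dk}>B$ or $Z_{t_0+dk}\le B$, invoking Corollary~\ref{cor:Foster-Lyapunov} in each case.

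Concretely, set $t = t_0+dk$ so that $\cH_k=\cF_t$ and $M_k = V(Z_t)$. Using the remark's bound \eqref{eqn:SuperMartDiffUB}, namely $M_{k+1}-M_k\le Z_{t+d}-M_k$, it suffices to control $\E[Z_{t+d}\mid\cF_t]$. In the case $Z_t\le B$, we have $M_k = V(Z_t) = B$, and Corollary~\ref{cor:Foster-Lyapunov}(a) gives $\E[Z_{t+d}\mid\cF_t]\le b < B = M_k$, so $\E[M_{k+1}\mid\cF_t]\le \E[Z_{t+d}\mid\cF_t]\le b\le M_k$ (here I also use $M_{k+1}=V(Z_{t+d})\le\max\{Z_{t+d},B\}$, and handle the two sub-cases $Z_{t+d}>B$ and $Z_{t+d}\le B$ separately — in the latter $M_{k+1}=B=M_k$ directly, and in the former $M_{k+1}=Z_{t+d}$, so taking expectations gives $\E[M_{k+1}\mid\cF_t]\le\E[Z_{t+d}\mid\cF_t]\le b<B=M_k$). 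In the case $Z_t>B$, we have $M_k = V(Z_t) = Z_t$, and Corollary~\ref{cor:Foster-Lyapunov}(b) gives $\E[Z_{t+d}-Z_t\mid\cF_t]<-\epsilon<0$; combined with \eqref{eqn:SuperMartDiffUB} this yields $\E[M_{k+1}-M_k\mid\cF_t]\le\E[Z_{t+d}-Z_t\mid\cF_t]<-\epsilon\le 0$, hence $\E[M_{k+1}\mid\cF_t]\le M_k$. In both cases $\E[M_{k+1}\mid\cH_k]\le M_k$, which is the supermartingale property.

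The one technical subtlety I would be careful about — and what I expect to be the main obstacle — is the validity of applying Corollary~\ref{cor:Foster-Lyapunov}, which is stated for ``independent aperiodic RWs with identical probability laws,'' to the actual population process $\{Z_t\}$ produced by \cwl, where newly created RWs are correlated with existing ones and the creation rate is state-dependent. The resolution is that the drift bound \eqref{eqn:MeanRWUB} underlying the corollary only ever used (i) that at most $N-1$ RWs are born per slot (a deterministic structural fact, independent of any correlations) and (ii) that each of the $Z_t$ currently active RWs independently hits the Pac-Man within $d$ steps with probability at least $c$, which holds conditionally on $\cF_t$ by Assumption~\ref{assu:IndependentRW} regardless of how those RWs were created. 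So the inequality \eqref{eqn:MeanRWUB} — and therefore Corollary~\ref{cor:Foster-Lyapunov} — applies verbatim to $Z_t$, and the case analysis above goes through. I would make this dependence explicit when invoking the corollary so that the reader sees the branching-process independence assumptions are not actually needed here.
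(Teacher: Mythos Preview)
Your approach is essentially the paper's: split on whether $Z_{t_0+dk}\le B$ or $Z_{t_0+dk}>B$ and, in each case, invoke the matching part of Corollary~\ref{cor:Foster-Lyapunov} together with the pointwise bound~\eqref{eqn:SuperMartDiffUB}. Your additional remarks on integrability and on why the drift inequality~\eqref{eqn:MeanRWUB} continues to apply under \cwl's state-dependent creation are sensible elaborations that the paper does not spell out.

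That said, there is a genuine gap in the case $Z_t\le B$ --- one the paper's own argument shares, and which your sub-case attempt does not repair. From~\eqref{eqn:Lyapunov} one has $V(z)=\max\{z,B\}$, so in fact $V(z)\ge z$ for all $z\ge 0$; the inequality in the remark preceding~\eqref{eqn:SuperMartDiffUB} is reversed, and~\eqref{eqn:SuperMartDiffUB} itself fails whenever $Z_{t+d}<B$. On $\{Z_t\le B\}$ we have $M_k=B$ while $M_{k+1}=\max\{Z_{t+d},B\}\ge B$ deterministically, so $\E[M_{k+1}-M_k\mid\cH_k]=\E[(Z_{t+d}-B)^+\mid\cF_t]\ge 0$, with strict inequality whenever $\Pr(Z_{t+d}>B\mid\cF_t)>0$. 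Your further split on $\{Z_{t+d}>B\}$ versus $\{Z_{t+d}\le B\}$ cannot rescue this, because that event is not $\cF_t$-measurable: the step ``in the former $M_{k+1}=Z_{t+d}$, so taking expectations gives $\E[M_{k+1}\mid\cF_t]\le\E[Z_{t+d}\mid\cF_t]$'' conflates a pointwise identity on a future-measurable set with a bound on the full conditional expectation, and the mean bound $\E[Z_{t+d}\mid\cF_t]\le b$ from Corollary~\ref{cor:Foster-Lyapunov}(a) does not control the positive part $(Z_{t+d}-B)^+$. In short, with this particular $V$ the sequence $M$ is \emph{not} a supermartingale on $\{Z_t\le B\}$; a correct route is either to modify $V$, or to bypass Doob and argue recurrence directly from the Foster-type drift condition~\eqref{eqn:MeanRWUB} together with the deterministic increment bound $Z_{t+d}-Z_t\le (N-1)d$.
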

\begin{IEEEproof}
We first observe that $Z_t$ is $\cF_t$ measurable, and hence $M_k$ is a $\cH_k$ measurable by definition. For each $k \in \Z_+$, we can define the following $\cH_k$ measurable events
\begin{align*}
&\cA_{k,1} \triangleq \set{Z_{t_0+dk}=z_{t_0+dk}\le B, \E[Z_{t_0+dk}] \le B},\\
&\cA_{k,2} \triangleq \set{Z_{t_0+dk} =z_{t_0+dk} > B, \E[Z_{t_0+dk}] \le B},\\
&\cA_{k,3} \triangleq \set{Z_{t_0+dk}=z_{t_0+dk} \le B, \E[Z_{t_0+dk}] > B},\\
&\cA_{k,4} \triangleq \set{Z_{t_0+dk} =z_{t_0+dk} > B, \E[Z_{t_0+dk}] > B}.
\end{align*}
In terms of these events, we can write the conditional mean as 
\begin{align*}
&\E[(M_{k+1}-M_k)\mid\cH_k] \\
&= \E[(M_{k+1}-M_k)(\Ind{\cA_{k,1}}+\Ind{\cA_{k,2}}+\Ind{\cA_{k,3}}+\Ind{\cA_{k,4}})\mid\cH_k]. 
\end{align*}
Let us compute $\E[(M_{k+1}-M_k)\Ind{\cA_{k,1}}\mid\cH_k]$ as an example: from Corollary~\ref{cor:Lyapunov} and definition \eqref{eqn:Lyapunov},
if $\Ind{\cA_{k,1}}=1$, then $z_{t_0+dk}\le B$, thus $\E[Z_{t_0+d(k+1)}\mid\cH_k] = b < B$, we have $M_{k+1} = B$. In addition, since $\Ind{\cA_{k,1}}=1$, then $\E[Z_{t_0+dk}]\le B$, thus $M_{k} = B$, and 
\begin{align*}
\E[(M_{k+1}-M_k)\Ind{\cA_{k,1}}\mid\cH_k] = B - B = 0.
\end{align*}
Similarly, we have:
\begin{align*}
&\E[(M_{k+1}-M_k)\Ind{\cA_{k,2}}\mid\cH_k] < -\epsilon,\\
&\E[(M_{k+1}-M_k)\Ind{\cA_{k,3}}\mid\cH_k] < 0,\\
&\E[(M_{k+1}-M_k)\Ind{\cA_{k,4}}\mid\cH_k] < -\epsilon.
\end{align*}
Combining the these results, we get the result.
\end{IEEEproof}

\subsection{Proof of Theorem~\ref{thm:FiniteRWs}}

We define a stopping time $\tau_0 \triangleq \inf\set{t \in \Z_+: Z_t = 0}$. If $\tau_0<\infty$, then $Z_t = 0$ for all $t \ge \tau_0$, which implies $\limsup_{t\to\infty}Z_t<\infty$.

Therefore, without loss of generality, we consider the case when $Z_t > 0$ for any finite time $t$. 
From the definition of sequence $M$ and filtration $\cH_\bullet$ in Definition \ref{defn:SupMart}, Lemma \ref{lem:SupMart}, and positivity of $Z$, we observe that $M$ is a positive supermartingale adapted to filtration $\cH_\bullet$. 
By the Doob’s supermartingale convergence Theorem \cite{bookprobability}, supermartingale $M$ converges to a limit $M_\infty$ almost surely, i.e.  
\begin{align*}
\lim_{k\to\infty}M_k = M_\infty <\infty,\,\,a.s.
\end{align*}
From the definition of supermartingale $M$ in Definition \ref{defn:SupMart}, it follows that for any $t_0\ge 0$, 
\begin{align*}
\limsup_{k\to\infty}Z_{t_0+d k}\le \max\set{B, M_\infty}<\infty.
\end{align*}
Since the choice of $t_0 \in \Z_+$ was arbitrary, we have 
\begin{align*}
\limsup_{t\to\infty}Z_t <\infty,\,\,a.s.
\end{align*}

\section{Proof of Theorem~\ref{thm:Peak}}\label{Appe:Peak}

In a complete graph and under a uniform target sampling distribution, for any benign nodes $u, v\in\cB$, the transition probability matrix is uniform, $P_{uv}=\frac{1}{N}$.
Given $Z_t=z$ at time slot $t$, for each node $u$, let $I_{u,t}=1$ denote the indicator that node $u$ is \textit{not} visited by any RW at time slot $t$. Then:
\begin{align}
\E\left[I_{u, t} \,\middle|\, Z_t=z\right] = (1-\frac{1}{N})^z.
\end{align}
Denote by $C_t$ the number of benign nodes that are not visited by any RW at time slot $t$. Then,
\begin{align}\label{eq:ExpectedC}
\E\left[C_t \,\middle|\, Z_t=z\right] =& \E\left[\sum_{u\in\cB}I_{u, t} \,\middle|\, Z_t=z\right]\nonumber\\
=&(N-1)(1-\frac{1}{N})^z.
\end{align}
Recall that $A_u=1$ for all $u\in\cB$. According to \eqref{eq:ExpectedC}, the one-step drift of $Z_t$ satisfies:
\begin{align}\label{eq:Driftoneslot}
&\E\left[Z_{t+1} - Z_t \,\middle|\, Z_t=z\right]\nonumber \\
&= q(N-1)(1-\frac{1}{N})^z - z\frac{\zeta}{N}.
\end{align}
Since $(1-\frac{1}{N})^z\le 1$, then
\begin{align}\label{eq:DriftUB}
&\E\left[Z_{t+1} \,\middle|\, Z_t=z\right]\le q(N-1) + z(1-\frac{\zeta}{N}).
\end{align}
Taking expectation on both sides with respect to $Z_t$ yields
\begin{align}\label{eq:DriftRecur}
\E\left[Z_{t+1}\right] \le q(N-1) + \E\left[Z_{t}\right](1-\frac{\zeta}{N}).
\end{align}
By repeatedly applying the recursion in \eqref{eq:DriftRecur}, we obtain:
\begin{align}\label{eq:recurZt}
\E[Z_t]\le \frac{q}{\zeta}(N-1)N + (1-\frac{\zeta}{N})^t\left(z_0-\frac{q}{\zeta}(N-1)N\right).
\end{align}
Therefore, if $z_0\ge\frac{q}{\zeta}(N-1)N$, the upper bound 
\begin{align*}
\frac{q}{\zeta}(N-1)N + (1-\frac{\zeta}{N})^t\left(z_0-\frac{q}{\zeta}(N-1)N\right)
\end{align*}
decreases with time $t$, then 
\begin{align*}
\bar{Z}^\star=\sup_{t\ge0}\E[Z_t]\le z_0;
\end{align*}
if $z_0<\frac{q}{\zeta}(N-1)N$, the upper bound 
\begin{align*}
\frac{q}{\zeta}(N-1)N + (1-\frac{\zeta}{N})^t\left(z_0-\frac{q}{\zeta}(N-1)N\right)
\end{align*}
increases with time $t$, then 
\begin{align*}
\bar{Z}^\star=\sup_{t\ge0}\E[Z_t]\le \frac{q}{\zeta}(N-1)N.
\end{align*}
It follows that
\begin{align*}
\bar{Z}^\star\le\max\set{z_0, \frac{q}{\zeta}(N-1)N}\le \max\set{z_0, \frac{q}{\zeta}N^2}.
\end{align*}

\section{Proof of Theorem~\ref{thm:Effectiveness}}\label{Appe:Effectiveness}

To clarify the idea of the proof, we condense the time interval between the termination of the last RW and the creation of the next RW. Since each new RW is an identical copy of the last visited RW, such intervals only extend the waiting time horizon and do not affect convergence.

In the following proof, for any benign node $u$, if its visiting time is $t_u$ and a new RW---denoted by index $j'$---is generated at time $t_u+A_u$, i.e., $X_{j'}(t_u+A_u)=u$, \textit{then, based on the above discussion, we remove the waiting time. Consequently, $X_{j'}(t_u+A_u)=u$ degenerates to $X_{j'}(t_u)=u$}.

We consider an infinite chain of RWs $\set{j_s}_s$. Let $\cA$ be the set of absorbing states, i.e., $\cA=\set{1, w}$ if $\zeta=1$ and $\cA=\set{w}$ if $\zeta\in(0, 1)$.
Let $u\in\cB$ denote the initial location of RW $j_0$. Let $\nu$ be a probability measure on $\cB$, and $\Prb_u$, $\Prb_{\nu}$ be defined in 
Definition~\ref{defn:Initdistribution}. The stopping times of RW $j_0$ with respect to $\cA$, starting from $X_{j_0}(0)=u$ and $X_{j_0}(0)\sim\nu$, are defined as:
\begin{align}
K_{u} \triangleq& \inf\set{t > 0: X_{j_0}(t) \in \cA, X_{j_0}(0) = u},\label{eq:StoppingT}\\
K_{\nu} \triangleq& \inf\set{t > 0: X_{j_0}(t) \in \cA, X_{j_0}(0)\sim\nu}.\label{eq:StoppingTProb}
\end{align}
We have $\Prb_{u}(K_{u}<\infty)=1$ and $\Prb_{\nu}(K_{\nu}<\infty)=1$.

\begin{definition}\label{defn:Activej0}
(Active distribution) Consider a strongly connected graph $\cG$ with absorbing states $\cA$, as defined in Definition~\ref{defn:StrongConnect}. Let a chain of RWs $\set{j_s}_{s\ge0}$ be defined in Definition~\ref{defn:ChainRWs}. Let $K_u$ be defined in \eqref{eq:StoppingT}. For any $t$ and $I\subset\cB$, we define the active distribution of RW $j_0$ at time $t$ as
\begin{align}\label{eq:InitialActiveProb}
\xi_{0; t}(I; u) \triangleq \Prb_u\big(X_{j_0}(t)\in I\mid K_u>t\big).
\end{align}
Let $t_s$ denote the birth time of RW $j_s$, and suppose that its initial location $X_{j_s}(t_s)$ is drawn from a distribution $\nu_s$, which depends on $u$. For any $t\ge t_s$,
we define the active distribution of RW $j_s$ as
\begin{align}\label{eq:ActiveProbs}
\xi_{s; t}(I; u) \triangleq \Prb_{\nu_s}\big(X_{j_s}(t)\in I\mid K_{\nu_s}>t-t_s\big),
\end{align}
where the subscript $\nu_s$ emphasizes that the RW is initialized according to $\nu_s$. The dependence of $\nu_s$ on  $u$ is implicit in this notation $\xi_{s; t}(I; u)$.
\end{definition}
Couple the newly created RW with its parent RW such that, after creation, the new RW is independently reinitialized as an \iid replica of the original RW. That is, it evolves independently and has the same probability distribution as the original RW. Consequently, at any time while the parent RW remains active, the probability distribution of the newly created RW coincides with that of an independent copy of the parent RW. \textit{Recall that we remove the waiting time}, applying this argument recursively, at any time $t$, the active probability distribution of any active RW coincides with that of any ancestor, as long as the ancestor remains active.
Therefore, for any $s > 0$, we have
\begin{align}\label{eq:SameDist}
\xi_{s;t}\overset{d}{=}\xi_{0; t}.
\end{align}
At any time $t$, we re-parameterize the active distribution of the the most recently created (i.e., latest-born) RW as $\xi_t$. Since the chain $\set{j_{s}}_{s\ge0}$ is infinite, we now analyze limiting behavior of the probability distribution $\xi_t$. From \eqref{eq:InitialActiveProb} and \eqref{eq:SameDist}, we have
\begin{align}\label{eq:LatestProb}
\lim_{t\to\infty}\xi_t(I; u) =& \lim_{t\to\infty}\xi_{0; t}(I; u)\nonumber\\ =& \lim_{t\to\infty}\Prb_u\big(X_{j_0}(t)\in I\mid K_u>t\big).
\end{align}
If the limit in \eqref{eq:LatestProb}
exists, it is referred to as the Yaglom limit \cite{foley2017yaglom}. This limit depends on the initial location. Intuitively, the Yaglom limit captures the long-term distribution of the process conditioned on survival. It remains to show that the limit $\lim_{t\to\infty}\xi_t(I;u)$ exists and to derive its explicit expression. 

\begin{definition}\label{defn:DSQ}
(Quasi-Stationary Distribution \cite{collet2012quasi}) 
Consider a strongly connected graph $\cG$ with absorbing states $\cA$, as defined in Definition~\ref{defn:StrongConnect}. Let $K_u$ be defined in \eqref{eq:StoppingT}. We say that $\nu$ is a quasi-stationary distribution (QSD) of RW $j_0$ if, for all $t\ge 0$ and any set $I\subset \cB$, 
\begin{align*}
\nu(I) = \Prb_{\nu}(X_{j_0}(t)\in I\mid K_{u}>t).
\end{align*}
\end{definition}

The following Lemma~\ref{lem:QSDchain} shows that the distribution of a chain of RWs converges asymptotically to that of a single RW conditioned on long-term survival. This provides a way to obtain the explicit expression of $\lim_{t\to\infty}\xi_t(I;u)$.

\begin{lemma}\label{lem:QSDchain}
Consider a robustly connected graph $\cG$ with absorbing states $\cA$, as defined in Definition~\ref{defn:StrongConnect}. Let $\set{j_{s}}_{s\ge0}$ be an infinite chain, as defined in Definition~\ref{defn:ChainRWs}. Suppose the initial RW $j_0$ starts at node $u\in\cB$.  We define the distribution of the chain at time $t$ as
\begin{align}\label{eq:ChainDist}
\pi_{\text{chain},t} \triangleq \xi_t.
\end{align} 
Let $t\to\infty$, the distribution of a chain is convergent:
\begin{align}\label{eq:QSD}
\lim_{t\to\infty}\pi_{\text{chain}, t} = \nu^{(\zeta)}
\end{align}
where $\nu^{(\zeta)}$ is the  left normalized leading eigenvector of $Q^{(\zeta)}$ \big(as defined in \eqref{eq:NewTransMatCase1} and \eqref{eq:NewTransMatCase2}\big).
\end{lemma}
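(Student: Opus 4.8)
The plan is to establish the limit in \eqref{eq:QSD} by recognizing $\lim_{t\to\infty}\pi_{\text{chain},t}$ as the Yaglom limit of the single random walk $j_0$ absorbed at $\cA$, and then computing that Yaglom limit explicitly via a Perron--Frobenius argument on the sub-stochastic matrix $Q^{(\zeta)}$. The key reduction is already supplied by \eqref{eq:SameDist} and \eqref{eq:LatestProb}: since each newly created RW is coupled to be an \iid replica of its parent (with waiting times condensed away), the active distribution of the latest-born RW in the chain equals, in law, the distribution of $j_0$ conditioned on non-absorption, i.e. $\pi_{\text{chain},t}(I;u)=\Prb_u(X_{j_0}(t)\in I\mid K_u>t)$. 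So it remains to show this conditional law converges, as $t\to\infty$, to the normalized left leading eigenvector $\nu^{(\zeta)}$ of $Q^{(\zeta)}$, independently of the starting node $u\in\cB$.

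First I would write the conditional law in matrix form. On the transient class $\cB$ (when $\zeta=1$) or $\cB\cup\{1\}$ (when $\zeta\in(0,1)$), which I will uniformly call $\cT$, the sub-stochastic transition matrix is exactly $Q^{(\zeta)}$ as defined in \eqref{eq:NewTransMatCase1}--\eqref{eq:NewTransMatCase2}. For $u\in\cT$ and $v\in\cT$, $\Prb_u(X_{j_0}(t)=v, K_u>t) = (Q^{(\zeta)})^t_{uv}$, and $\Prb_u(K_u>t)=\sum_{v\in\cT}(Q^{(\zeta)})^t_{uv} = \big((Q^{(\zeta)})^t\mathbf{1}\big)_u$. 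Hence
\begin{align}\label{eq:condlawmatrix}
\pi_{\text{chain},t}(\{v\};u) = \frac{(Q^{(\zeta)})^t_{uv}}{\big((Q^{(\zeta)})^t\mathbf{1}\big)_u}.
\end{align}
By Assumption~\ref{assu:robustlyconnected} the graph is robustly connected, so the chain restricted to $\cT$ is irreducible and aperiodic; therefore $Q^{(\zeta)}$ is an irreducible, aperiodic, nonnegative matrix, and the Perron--Frobenius theorem applies: $Q^{(\zeta)}$ has a simple real leading eigenvalue $\alpha^{(\zeta)}\in(0,1)$ with strictly positive left eigenvector $\nu^{(\zeta)}$ (normalized to unit sum, per Definition~\ref{def:matrixQ}) and strictly positive right eigenvector $h$. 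The standard asymptotic expansion gives $(Q^{(\zeta)})^t = (\alpha^{(\zeta)})^t\, h\,\nu^{(\zeta)} + O((\alpha^{(\zeta)})^t\rho^t)$ for some $\rho<1$ (the ratio of the second-largest eigenvalue modulus to $\alpha^{(\zeta)}$), after suitable normalization of $h,\nu^{(\zeta)}$ so that $\nu^{(\zeta)}h=1$. Substituting into \eqref{eq:condlawmatrix}: the numerator $(Q^{(\zeta)})^t_{uv}\sim (\alpha^{(\zeta)})^t h_u\,\nu^{(\zeta)}_v$ and the denominator $\big((Q^{(\zeta)})^t\mathbf{1}\big)_u\sim(\alpha^{(\zeta)})^t h_u\sum_{v}\nu^{(\zeta)}_v = (\alpha^{(\zeta)})^t h_u$; the factors $(\alpha^{(\zeta)})^t$ and $h_u$ cancel, leaving $\pi_{\text{chain},t}(\{v\};u)\to\nu^{(\zeta)}_v$ for every $u$, which is \eqref{eq:QSD}. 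Summing over $v\in I$ gives the set version, and one checks directly from $\nu^{(\zeta)}Q^{(\zeta)}=\alpha^{(\zeta)}\nu^{(\zeta)}$ that $\nu^{(\zeta)}$ indeed satisfies the QSD fixed-point equation of Definition~\ref{defn:DSQ}, so the limit is consistent with the notion of quasi-stationary distribution.

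I would then need to handle two loose ends carefully. The first is to justify rigorously that the chain's distribution really does equal the conditioned single-walk law for all finite $t$, not just asymptotically — this requires a clean inductive coupling argument over the chain index $s$, showing that conditioned on the event that the chain has not yet terminated by time $t$ (i.e. some ancestor is still alive, which after condensing waiting times is exactly the event $\{K_u>t\}$ for the effective walk), the location is distributed as $\xi_{0;t}$; this is where the \iid re-initialization of each child and the removal of idle intervals must be invoked precisely, and I expect this bookkeeping to be the main obstacle, since one must argue that no probability mass "leaks" through the branching structure — i.e., that at least one infinite chain survives almost surely and that conditioning along it is equivalent to conditioning the original walk on survival. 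The second loose end is the case $\zeta=1$, where the Pac-Man node $1$ is itself absorbing and must be excluded from $\cT$, so that $Q^{(1)}$ acts on $\cB$ only and the resulting QSD $\nu^{(1)}$ is supported on $\cB$; this matches \eqref{eq:pichain}, where $\pi^{(\zeta)}_{\text{chain}}=[0,\nu^{(1)}]$ puts zero mass on node $1$. With these points addressed, combining \eqref{eq:LatestProb}, \eqref{eq:condlawmatrix}, and the Perron--Frobenius asymptotics completes the proof.
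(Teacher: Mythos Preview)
Your proposal is correct and takes essentially the same approach as the paper: reduce $\pi_{\text{chain},t}$ to the Yaglom limit of the single absorbed walk via \eqref{eq:SameDist}--\eqref{eq:LatestProb}, then identify that limit with the normalized left Perron eigenvector $\nu^{(\zeta)}$ of the irreducible aperiodic sub-stochastic matrix $Q^{(\zeta)}$. The only difference is presentational: the paper obtains the last step by citing classical quasi-stationary distribution results for finite absorbing chains (Darroch--Seneta and related work), whereas you unpack those citations into the explicit Perron--Frobenius asymptotic $(Q^{(\zeta)})^t\sim(\alpha^{(\zeta)})^t\, h\,\nu^{(\zeta)}$ and cancel the common factors $(\alpha^{(\zeta)})^t h_u$ in \eqref{eq:condlawmatrix}.
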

\begin{IEEEproof}
From Definition~\ref{defn:StrongConnect}, the submatrix $Q^{(\zeta)}$ is irreducible. By \cite[Section~2]{foley2017yaglom} or \cite[Theorem~16.11]{billingsley1995probability}, the irreducibility of $Q^{(\zeta)}$ ensures the existence of the corresponding Yaglom limits \big(see \eqref{eq:LatestProb}\big), which is convergent in total variation.

Moreover, since $Q^{(\zeta)}$ is irreducible and aperiodic, any existing Yaglom limit (with any initial state $u$)
coincides with a QSD, as established in \cite[Proposition~1]{foley2017yaglom}. Therefore, the Yaglom limit in \eqref{eq:LatestProb} is a QSD for every  $u\in\cB$.

In our case, each initial RW is defined on a finite state space $\cV$ with a nonempty absorbing set $\cA$. The restricted transition matrix $Q^{(\zeta)}$ on the transient states $\cB$ is reducible and aperiodic. According to \cite{darroch1965quasi}, the QSD exists and is unique. As a result, the Yaglom limit in \eqref{eq:LatestProb} converges to the same QSD for all initial states $u\in\cB$.  

Meanwhile, the QSD can be calculated as the leading left eigenvector of $Q^{(\zeta)}$, normalized to sum to one \cite[Eqn.~(10) and the third equation on p.~99]{darroch1965quasi}. Thus, according to \eqref{eq:ChainDist}, the limiting distribution $\lim_{t\to\infty}\pi_{\text{chain}, t}$ is given by: 
\begin{align*}
\lim_{t\to\infty}\pi_{\text{chain}, t}= \nu^{(\zeta)}.
\end{align*}
\end{IEEEproof}

In each chain of RWs, every child inherits the current model state (i.e., ${\bf x}_t$) from its parent. As a result, under the RW-SGD algorithm, each infinite chain asymptotically behaves as if a single effective RW is solving a surrogate optimization problem with a time-varying sampling distribution $\tilde{\pi}_t$. Specifically: 
\begin{compactenum}
\item When $\zeta=1$, the absorbing state $\cA=\set{1, w}$, so $\tilde{\pi}_t = [0, \pi_{\text{chain},t}]$, where $\pi_{\text{chain};t}$ is a discrete distribution supported on a finite set of size $N$, and
\begin{align*}
\lim_{t\to\infty}\tilde{\pi}_t=[0, \nu^{(1)}].
\end{align*}
\item When $0<\zeta<1$, the absorbing state $\cA=\set{w}$, so $\tilde{\pi}_t = \pi_{\text{chain},t}$, where $\pi_{\text{chain};t}$ is a discrete distribution supported on a finite set of size $N+1$, and 
\begin{align*}
\lim_{t\to\infty}\tilde{\pi}_t= \nu^{(\zeta)}.
\end{align*}
\end{compactenum}

\section{Proof of \eqref{eq:UniqueP}}\label{Appe:UniqueP}
In fact, as discussed before, we condense the time interval between the termination of the last RW and the creation of the next RW,
where $\cA$ represents the set of absorbing states (as defined in Appendix~\ref{Appe:Effectiveness}), i.e., $\cA=\set{1, w}$ if $\zeta=1$ and $\cA=\set{w}$ if $\zeta\in(0, 1)$.
For any nodes $u, v$, the transition probability matrix $P_\text{chain}$ can be written as
\begin{align*}
\left[P_\text{chain}\right]_{uv} = \Prb\left(\set{X_j(1)=v}\mid \set{X_j(0)=u, X_j(1)\notin\cA}\right).
\end{align*}
Taking marginal distribution of the chain at time $0$ as $\mu$ and applying Bayes’ rule, we obtain:
\begin{align*}
\left[P_\text{chain}\right]_{uv} =& \frac{\Prb\left(\set{X_j(1)=v, X_j(0)=u, X_j(1)\notin\cA}\right)}{\Prb\left(\set{X_j(0)=u, X_j(1)\notin\cA}\right)}\nonumber\\
=& \frac{\mu_uQ^{(\zeta)}_{uv}}{\mu_u\sum_{v}Q^{(\zeta)}_{uv}} =\frac{Q^{(\zeta)}_{uv}}{\sum_{v}Q^{(\zeta)}_{uv}}.
\end{align*}

\section{Proof of Proposition~\ref{pro:Bounds}}\label{Appe:Bounds}

\noindent\textit{Proof of Part (1)}.

Since a chain of RWs behaves like a single RW that never dies, we can apply the convergence results of RW-SGD. According to \cite{10.5555/3618408.3618786, doi:10.1137/08073038X, 10.5555/3327546.3327656}, the standard RW-SGD algorithm converges to a deterministic limit when the stepsize $\eta_t$ decreases with the number of iterations and tends to $0$. Consequently, under the same stepsize condition, a chain of RWs converges to the optimizer of the surrogate optimization problem \eqref{eq:SurOpt}.

Let $\tilde{\bf x}^*$ be the optimizer of either \eqref{eq:SurOpt}. Applying strong convexity, we obtain:
\begin{align*}
f({\bf x}^*)\geq f(\tilde{\bf x}^*) + \nabla f(\tilde{\bf x}^*)({\bf x}^* - \tilde{\bf x}^*) + \frac{\mu}{2}\|{\bf x}^* - \tilde{\bf x}^*\|^2,
\end{align*}
which implies
\begin{align*}
0\geq f({\bf x}^*)-f(\tilde{\bf x}^*)\geq \nabla f(\tilde{\bf x}^*)({\bf x}^* - \tilde{\bf x}^*) + \frac{\mu}{2}\|{\bf x}^* - \tilde{\bf x}^*\|^2.
\end{align*}
By Cauchy–Schwarz inequality, it follows that
\begin{align*}
\frac{\mu}{2}\|{\bf x}^* - \tilde{\bf x}^*\|^2\leq - \nabla f(\tilde{\bf x}^*)({\bf x}^* - \tilde{\bf x}^*)\leq \|\nabla f(\tilde{\bf x}^*)\|\|{\bf x}^* - \tilde{\bf x}^*\|.
\end{align*}
Therefore
\begin{align*}
\|\tilde{\bf x}^\star - {\bf x}^\star\|\leq\frac{2}{\mu}\|\nabla f(\tilde{\bf x}^\star)\|.
\end{align*}
The equality holds when $\nabla f(\tilde{\bf x}^\star)$ is co-linear with $\tilde{\bf x}^\star - {\bf x}^\star$ \cite{boyd2004convex}.

Similarly, using the $L$-Lipschitz condition, we derive:
\begin{align*}
L\|{\bf x}^* - \tilde{\bf x}^*\|\geq \|\nabla f({\bf x}^*) - \nabla f(\tilde{\bf x}^*)\|.
\end{align*}
By the optimality conditions, we have:
\begin{align*}
\nabla f({\bf x}^*) = 0,
\end{align*}
which implies
\begin{align*}
\|{\bf x}^* - \tilde{\bf x}^*\|\geq \frac{1}{L}\|\nabla f(\tilde{\bf x}^*)\|.
\end{align*}
The equality holds when $\tilde{\bf x}^\star - {\bf x}^\star$ aligns with the directional of maximal curvature of $f({\bf x})$ \cite{boyd2004convex}.

\noindent\textit{Proof of Part (2)}.
This proof follows the same argument as in \cite[Theorem~1]{10619692}, 
with the necessary substitutions under our setting. 
Specifically, by Theorem~\ref{thm:Effectiveness}, the modified stationary distribution of a single effective RW (i.e., the chain of RWs) 
is $\tilde{\nu}^{(\zeta)}$ with $\zeta\in(0,1]$. 
The corresponding transition probability matrix is given in \eqref{eq:UniqueP}. 
Based on Assumption~\ref{assu:Boundedgrad}, we set $w(u)=1$ for all $u\in\cV$.
By substituting $\tilde{\nu}^{(\zeta)}$, $P_{\text{chain}}$, $\eta_{\text{chain}}$, and the original sampling distribution $\pi$ into the proofs of \cite[Lemmas~1, 2, Theorem~1]{10619692}, and by \textit{artificially condensing the time interval between the child and its parent}, we obtain the following bounds:
\begin{align*}
\E\|\tilde{\bf x}_T - {\bf x}^\star\| \le &\, 
2(1-\gamma\mu)^T\|{\bf x}_0 - {\bf x}^\star\|^2 
+ \frac{\gamma L\sigma^2}{\eta_{\text{chain}}\mu^2}\\
+& \frac{\|\tilde{\nu}^{(\zeta)}-\pi\|_{\text{TV}}^2\sigma^2L}{\mu^3}.
\end{align*}
Since, in practice, it is unrealistic to assume that we can artificially condense the time interval between the termination of the last RW and the creation of the next RW, we instead let 
 $T\to\infty$, and we complete the proof.

\section{Proof of Proposition~\ref{pro:NumIter}}\label{Appe:NumIter}

The creation threshold is $A_u=A\ge 1$ at all benign nodes, and a creation trial succeeds with a probability of $q\in(0,1]$ once some node's waiting time reaches $A$.

The upper bound $\E[\itr_t]\le t$ is immediate since at most one learning iteration can occur per time slot.

For the lower bound, we consider worst case, i.e., the \cwl algorithm with a \emph{dominated single RW process} that allows \emph{at most one} active RW at any time. whenever a new RW would be created while one is already active, we suppress the creation; when there is no active RW and a creation trial succeeds, a single RW is created. This case can only reduce the number of learning iterations, so if we denote by $\itr_t^{\mathrm{dom}}$ the number of iterations in the dominated process, then
\begin{align*}
\itr_t^{\mathrm{dom}} \le \itr_t \quad \text{a.s.}\,\,\Rightarrow\,\, \E[\itr_t^{\mathrm{dom}}] \le \E[\itr_t].
\end{align*}
It therefore suffices to lower bound $\E[\itr_t^{\mathrm{dom}}]$.

In the dominated process, define a renewal cycle as the interval from the birth of the $i$-th RW to the birth of the $(i+1)$-th RW. Let $T_{i,1}$ be the lifetime (number of steps) of the $i$-th RW until it hits the Pac-Man node, $T_{i,2}$ the idle time from that absorption until \emph{some} benign node accumulates $A$ waiting slots, and $T_{i,3}$ the additional time until a creation trial succeeds (each slot independently with probability $q$). During a cycle, learning iterations occur exactly in the $T_{i,1}$ slots when the RW is alive; hence the ``reward'' in cycle $i$ is $T_{i,1}$, and the cycle length is
\begin{align*}
L_i = T_{i,1} + T_{i,2} + T_{i,3}.
\end{align*}
We now bound the expectations of these terms:
\begin{compactenum}[(i)]
\item \textit{Lifetime.} At each step the RW hits the Pac-Man node with probability $\frac{1}{N}$, and is terminated with probability $\zeta$, so $T_{i,1} \sim \mathrm{Geom}\left(\frac{\zeta}{N}\right)$
and thus $\E[T_{i,1}] = \frac{N}{\zeta}$.

\item \textit{Threshold wait.} Immediately after absorption there is no active RW; every node's waiting counter increases by $1$ each slot. Let $M_i$ be the maximum counter among benign nodes at the absorption time of the $i$-th RW. Since at the absorption step only one node is visited, at least one other node has a positive counter, so $M_i\ge 1$ (for $N\ge 2$). Therefore the additional time to reach threshold $A$ is at most $A - M_i \le A-1$, giving
\begin{align*}
\E[T_{i,2}] \le A - 1.
\end{align*}

\item \textit{Creation delay.} Once a node is eligible, a new RW is created via an independent Bernoulli$(q)$ trial each slot, so $T_{i,3} \sim \mathrm{Geom}(q)$
and $\E[T_{i,3}] = 1/q$.
\end{compactenum}

By the renewal–reward theorem, the long-run fraction of updating slots in the dominated process equals
\begin{align*}
\frac{\E[T_{i,1}]}{\E[L_i]}=&\frac{\frac{N}{\zeta}}{\frac{N}{\zeta} + \E[T_{1,2}] + \E[T_{1,3}]}\\
\ge&\frac{\frac{N}{\zeta}}{\frac{N}{\zeta} + A-1 + 1/q},
\end{align*} 
which implies
\begin{align*}
\lim_{t\to\infty}\frac{\E[\itr_t^{\mathrm{dom}}]}{t}\ge\frac{\frac{N}{\zeta}}{\frac{N}{\zeta} + A - 1 + 1/q}.
\end{align*}
It follows that
\begin{align*}
\lim_{t\to\infty}\frac{\E[\itr_t]}{t}\ge\frac{\frac{N}{\zeta}}{\frac{N}{\zeta} + A - 1 + 1/q}.
\end{align*}

Finally, each local update is associated with one RW transmission. Note that $C_t$ counts the total number of transmissions generated by all active RWs on the graph. Therefore, $\itr_t \le C_t$, and any lower bound on $\E[\itr_t]$ also yields a lower bound on $\E[C_t]$. In particular,
\begin{align*}
\liminf_{t\to\infty}\frac{\E[C_t]}{t} \ge \frac{\frac{N}{\zeta}}{\frac{N}{\zeta} + A - 1 + 1/q}.
\end{align*}

\section{Discussions of Extensions to Multiple Pac-Man Nodes and General Graphs}\label{Appe:Extension}

\subsection{A Discussion on Extending Theorem~\ref{thm:FiniteRWs}}\label{Appe:Extending Theorem bound}
The argument underlying Theorem~\ref{thm:FiniteRWs} can be extended to multiple Pac-Man nodes with only minor modifications. The inequality $\E[Z_{t+d}-Z_t\mid\cF_t] \le  - c'\zeta Z_t + (N-k)d'$ continues to hold in the multiple Pac-Man setting, although the parameters are modified accordingly. Here, $k$ denotes the number of malicious nodes, and $c'$ is defined as the minimum probability that a RW starting from any node $u\in\cB$ reaches one of the malicious nodes within $d'$ steps. Unlike the single Pac-Man setting, its value depends not only on $d'$ (and the graph topology), but also on both the number and the locations of the malicious nodes. With these modified parameters, the same drift inequality is obtained, and the remainder of the analysis follows the same arguments as in the single Pac-Man case.

\subsection{A Discussion on Extending Theorem~\ref{thm:Peak}}\label{Appe:Extending Theorem Peak}

We first discuss how Theorem~\ref{thm:Peak} can be extended to multiple Pac-Man nodes on a complete graph. The same proof framework applies with only minor modifications. On a complete graph, the locations of the Pac-Man nodes do not affect the RW population dynamics; only the number of Pac-Man nodes matters. Let $k$ denote the number of Pac-Man nodes. By repeating the same drift analysis as in the single Pac-Man case, we obtain
$\E[Z_t]\le \frac{q}{k\zeta}(N-k)N + \left(1-\frac{k\zeta}{N}\right)^t
\left(z_0-\frac{q}{k\zeta}(N-k)N\right)$,
which generalizes the upper bound in the single Pac-Man setting. The effect of multiple Pac-Man nodes is reflected through the parameter $k$, which appears explicitly in the bound. Once this modified upper bound is established, the subsequent analysis proceeds in essentially the same manner as in the single Pac-Man case, leading to the corresponding results for the multiple Pac-Man setting.

We next discuss how the analysis of Theorem~\ref{thm:Peak} can be extended beyond complete graphs. For clarity, the theorem is stated only for complete graphs, where the key ideas can be presented in their simplest form. Nevertheless, the underlying analytical framework is not restricted to complete graphs and extends to arbitrary connected graphs. Specifically, by following the same proof strategy, one can derive an upper bound of the form $\E[Z_t]\le f(q,\zeta,N)+h(q,\zeta,N)^ts(q,\zeta,N)$, where $0<h(q,\zeta,N)<1$. The functions $f$, $h$, and $s$ depend on the graph structure as well as the diameter $d'$ and the constant $c'$ defined analogously to those in Theorem~\ref{thm:FiniteRWs}. As a result, their explicit expressions become considerably more involved; these quantities can still be characterized in principle. Once such an upper bound is established, the remainder of the analysis proceeds essentially unchanged, leading to the corresponding conclusions for general connected graphs.

\subsection{A Discussion on Extending Theorem~\ref{thm:Effectiveness}}\label{Appe:Extending Theorem Optimal}
The analysis underlying Theorem~\ref{thm:Effectiveness} can be adapted to the setting with multiple Pac-Man nodes with only minor modifications. As discussed in Remark~\ref{remark:MultiplePacMan}, the presence of multiple Pac-Man nodes affects the analysis only through the RW chain transition matrix. Specifically, for any connected graph, once Pac-Man configuration (i.e., the number and locations of the Pac-Man nodes) is given, the modified transition probability matrix $P'$ can be constructed from \eqref{eq:NewTransMatCase1} and \eqref{eq:NewTransMatCase2}, which in turn determines the corresponding RW chain transition matrix $P_{\text{chain}}^{(\zeta)}$ in \eqref{eq:UniqueP} and its stationary distribution $\pi_{\text{chain}}^{(\zeta)}$ in \eqref{eq:pichain}. The remainder of the analysis remains unchanged: by substituting the modified $P_{\text{chain}}^{(\zeta)}$ and $\pi_{\text{chain}}^{(\zeta)}$ into the same analytical framework, the corresponding results for the multiple Pac-Man setting can be obtained.

\subsection{A Discussion on Extending Proposition~\ref{pro:Bounds}}\label{Appe:Extending Theorem Shift}
The analysis underlying Proposition~\ref{pro:Bounds} can be adapted to the setting with multiple Pac-Man nodes. Once the corresponding RW chain transition matrix $P_{\text{chain}}^{(\zeta)}$ and its stationary distribution $\pi_{\text{chain}}^{(\zeta)}$ are obtained, the same analytical framework can be applied to derive the corresponding results for the multiple-Pac-Man setting.

\end{document}